\def\ps@headings{%
\def\@oddhead{\mbox{}\scriptsize\rightmark \hfil \thepage}%
\def\@evenhead{\scriptsize\thepage \hfil \leftmark\mbox{}}%
\def\@oddfoot{}%
\def\@evenfoot{}}
\makeatother \pagestyle{headings}
\newtheorem{definition}{Definition}
\newtheorem{theorem}{Theorem}
\newtheorem*{proof}{Proof}
\newcommand{\warn}[1]{}
\begin{document}

\title{DPCrowd: Privacy-preserving and Communication-efficient Decentralized Statistical Estimation for Real-time Crowd-sourced Data}

\author{\IEEEauthorblockN{Xuebin Ren, Chia-Mu Yu, Wei Yu, Xinyu Yang, Jun Zhao, and Shusen Yang}
\IEEEcompsocitemizethanks{
\IEEEcompsocthanksitem X. Ren, X. Yang, S. Yang are with Xi'an Jiaotong University. (\{xuebinren, yxyphd, shusenyang\}@mail.xjtu.edu.cn)
\IEEEcompsocthanksitem C. Yu is with National Chiao Tung University. (chiamuyu@gmail.com).
\IEEEcompsocthanksitem W. Yu is with Towson University. (wyu@towson.edu)
\IEEEcompsocthanksitem J. Zhao is with Nanyang Technological University. (junzhao@ntu.edu.sg)
}}

\maketitle

\begin{abstract}

In Internet of Things (IoT) driven smart-world systems, real-time crowd-sourced databases from multiple distributed servers can be aggregated to extract dynamic statistics from a larger population, thus providing more reliable knowledge for our society. Particularly, multiple distributed servers in a decentralized network can realize real-time collaborative statistical estimation by disseminating statistics from their separate databases. Despite no raw data sharing, the real-time statistics could still expose the data privacy of crowd-sourcing participants. For mitigating the privacy concern, while traditional differential privacy (DP) mechanism can be simply implemented to perturb the statistics in each timestamp and independently for each dimension, this may suffer a great utility loss from the real-time and multi-dimensional crowd-sourced data. Also, the real-time broadcasting would bring significant overheads in the whole network. To tackle the issues, we propose a novel privacy-preserving and communication-efficient decentralized statistical estimation algorithm (\textsf{DPCrowd}), which only requires intermittently sharing the DP protected parameters with one-hop neighbors by exploiting the temporal correlations in real-time crowd-sourced data. Then, with further consideration of spatial correlations, we develop an enhanced algorithm, \textsf{DPCrowd}+, to deal with multi-dimensional infinite crowd-data streams. Extensive experiments on several datasets demonstrate that our proposed schemes \textsf{DPCrowd} and \textsf{DPCrowd}+ can significantly outperform existing schemes in providing accurate and consensus estimation with rigorous privacy protection and great communication efficiency.



\end{abstract}

\IEEEpeerreviewmaketitle

\begin{IEEEkeywords}
Differential privacy, decentralized statistical estimation, real time, communication efficiency, crowd-sourced data
\end{IEEEkeywords}

\section{Introduction}\label{sec: Introduction}


With the proliferation of smart devices and communication technologies, massive crowdsensing data can be acquired in real time, including industrial data in Industrial Internet-of-Things (IIoTs)~\cite{zheng2020privacy,8730298}, proximity sensing data in Internet-of-Vehicles (IoVs)~\cite{7588230,lin2017survey}, and IoT-based health data~\cite{huang2019practical,WH16}. The aggregate statistics~\cite{braverman2016communication,jordan2019communication} of these real-time data can provide valuable knowledge (e.g., popular business sites, disease outbreaks, and traffic dynamics~\cite{rescuedp2016,wang2018privacy}) and facilitate intelligence for numerous smart-world systems, including smart industry or smart cities~\cite{zanella2014internet,lin2017survey}. Nonetheless, these data could be crowd-sourced and stored at peer organizations (e.g., companies and hospitals) or edge servers (e.g., smart vehicles)~\cite{li2018privacy} as isolated data silos, which are difficult to be thoroughly aggregated and fully utilized~\cite{sonehara2011isolation}. Therefore, it is essential to help multiple distributed parties to achieve real-time statistical analysis (or statistical parameter estimation) from their separately crowd-sourced databases.

Different from conventional statistical parameter estimation that relies on a central server to process all crowd-sourced data~\cite{zhang2018reap,wang2018privacy}, these peer servers often belong to no central entity but equal to each other, thus can only form a decentralized network with no mutual trust~\cite{asadi2014survey,7588230}, as shown in Fig.~\ref{systemmodel}. Distributed or decentralized statistical parameter estimation~\cite{stankovic2010decentralized,kar2012distributed,braverman2016communication,jordan2019communication} has been studied in wireless sensor networks to infer the environment parameters by sharing intermediate statistics, which, however, may still expose the sensitive information. Particularly, for the distributed crowd-sourcing servers, statistics sharing among multiple servers may disclose the sensitive information of crowd-source users or provide extra information for malicious or adversary compromised servers~\cite{zheng2020privacy}.

Differential privacy (DP), as the de-facto paradigm for privacy preservation with rigorous guarantee~\cite{DworkRoth-77,hassan2019differential}, has received considerable attention in the privacy protection of monitoring or crowd-sourced data, focusing on either data publication~\cite{su2016differentially,yang2017survey,ren2018textsf,wang2019locally} or statistical aggregation~\cite{fan2014adaptive,rescuedp2016,wang2018privacy,WangZYLYRS19}. Nonetheless, most of the existing works are considered in the context of single-server application~\cite{fan2014adaptive,rescuedp2016}, or rely on a central coordinator~\cite{wang2018privacy,su2016differentially}, or only achieve one-time data publication~\cite{su2016differentially,ren2018textsf,wang2019locally}, or conduct multiple rounds of computation while suffering from severe privacy degradation~\cite{HuangMitra-1130,huang2015differentially,8260919}. None of them can be directly adopted for our application scenario, in which fully decentralized servers conduct real-time statistical estimation without any central entity. Thus, this motivates us to design a novel differential private and communication efficient framework of real-time statistical estimation from crowd-sourced data stored at multiple distributed servers in a fully decentralized network.

\textbf{Design Challenges.} The main challenges in developing such a framework with DP can be summarized as follows.

\begin{itemize}

\item \textit{Huge communication cost}. To achieve consensus estimation for distributed servers in a decentralized network, a straightforward method is to let each server release its own aggregate statistics to all other servers hop by hop at each timestamp. Nonetheless, besides privacy concerns, continuous multi-hop broadcast incurs both tremendous communication overhead and high delay.

\item \textit{Real-time data release}. The global information often needs to be derived in a real-time fashion (e.g., the traffic condition or epidemic disease outbreak). Nonetheless, according to the sequential composition theorem of DP~\cite{Mcsherry-39}, naive DP protection on continuous data stream causes severe utility loss or extravagant privacy budget consumption~\cite{dwork2010differentialco}.

\item \textit{Multi-dimensional Data}. The aggregate statistics may be multi-dimensional, reflecting different aspects of the environment. Nonetheless, with the increase of data dimensions, the data stream would be sparse and lead to both high computational complexity and low data utility for many existing privacy-preserving algorithms~\cite{rescuedp2016}. 

\end{itemize}

\textbf{Contributions.} Our contributions are summarized below.

\begin{itemize}

    \item We propose \textsf{DPCrowd}, an efficient framework of real-time differentially private decentralized statistical estimation for multiple distributed servers with separately crowd-sourced datasets. To the best of our knowledge, this is the first work realizing real-time decentralized statistical estimation with both privacy protection and communication efficiency.

    \item We leverage the Laplace mechanism and Kalman-consensus information filter to realized privacy protection and communication reduction for real-time decentralized statistical estimation with fast convergence and consensus estimation. By further adopting adaptive sampling based intermittent communication strategy, \textsf{DPCrowd} can achieve statistical estimation with much higher utility privacy tradeoff and lower communication cost.

    \item Based on \textsf{DPCrowd}, we further present \textsf{DPCrowd}+ to deal with multi-dimensional infinite data streams. \textsf{DPCrowd}+ satisfies $w$-event DP for infinite streams and mitigates the sparsity issue in multi-dimensional data, thus further enhancing the utility for statistical estimation on multi-dimensional streams.

    \item We conduct extensive experiments on both synthetic and real-world datasets. The experimental results demonstrate that \textsf{DPCrowd} and \textsf{DPCrowd}+ can not only achieve superior estimation accuracy under the given privacy guarantees, but also offer desirable estimation consensus with low communication cost.

\end{itemize}

The remainder of this paper is organized as follows: In Section~\ref{sec:related}, we conduct a brief literature review of related works. In Section~\ref{sec: System Model}, we introduce models and formalize the problem. In Section~\ref{sec: Preliminaries}, we provide some preliminaries. In Section~\ref{sec: approach}, we introduce our baseline and enhanced schemes. In Section~\ref{sec:analysis2}, we conduct the privacy, utility, communication latency and cost analysis of our schemes. In Section~\ref{sec:evaluation}, we describe the performance evaluation results. Finally, we conclude the paper in Section~\ref{sec:conclusion}.

\section{Related Work}\label{sec:related}

In the following, we review some works that are relevant to our study.

\textbf{DP for Data Stream Publication}. Dwork {\em et al.} initiated the theoretical study of DP~\cite{Dwork-405} on streaming data release (or publication)~\cite{dwork2010differential,dwork2010differentialco}. They proposed two DP notions, namely event-level and user-level DP. The former hides a single event and the latter hides all the events of any user. Mir {\em et al.}~\cite{mir2011pan} studied estimating distinct counts, moments, and heavy hitters, which is also studied by Chan {\em et al.}~\cite{chan2012differentially}. In addition, Fan {\em et al.}~\cite{fan2014adaptive} presented FAST to achieve DP aggregate monitoring in the sampling-and-filtering framework. Chen {\em et al.}~\cite{chen2017pegasus} presented PeGaSus to achieve event-level DP in the framework of perturb-group-smooth. Likewise, Kellaris {\em et al.}~\cite{kellaris2014differentially} addressed the shortcoming of event-level DP and user-level DP, and proposed a new notion of $w$-event DP, which can be thought of as a sliding window version of DP on the infinite data stream. Wang {\em et al.}~\cite{rescuedp2016} proposed RescueDP by applying the idea of $w$-event DP to FAST. Beyond that, the authors further enhanced RescueDP with advanced techniques, such as recurrent neural network in time-series analysis and dynamic programming for dynamic grouping, which demonstrate much better performance~\cite{wang2016tdsc}. All these studies are considered in the context of a single-server application.

\textbf{DP for Distributed Data Publication}. Most above data publication studies focus on streaming data in a centralized setting and are not practical for the distributed scenarios. Early studies attempt to achieving DP via adding partial noise at distributed servers~\cite{acs2011have}. For example, Goryczka {\em et al.}~\cite{7286780} conducted a comparative study on secure data aggregation with DP in a distributed setting. Alhadidi {\em et al.}~\cite{AlhadidiMohammed-1153} proposed to privately publish horizontally partitioned data with integration of DP and secure multi-party computation. Hong {\em et. al.}~\cite{hong2015collaborative} proposed collaborative generation algorithms for search logs at different parties with $(\varepsilon, \delta)$-DP. Su {\em et al.}~\cite{su2016differentially} presented a DP solution to publishing high-dimensional, but vertically split data in a distributed setting. Nonetheless, these schemes mainly deal with static data. Further, Wang {\em et al.}~\cite{wang2018privacy} rebuilt RescueDP~\cite{rescuedp2016}\cite{wang2016tdsc} and proposed a distributed framework of DADP by introducing multiple agents between the crowd-sourcing users and the central server. Nonetheless, it still relies on the coordination of a central server and is not fully decentralized. Beyond the above studies, local differential privacy (LDP) has also been a promising paradigm for large-scale crowd-sourcing systems for various applications~\cite{Erlingsson-2014,ren2018textsf}.

\textbf{DP for Distributed Parameter Estimation}. There have been a few studies on private distributed or decentralized parameter estimation recently. For example, Belmega {\em et al.}~\cite{BelmegaSankar-1182} explored an information-theoretic approach to obtain the state estimation between two parties with privacy. Huang {\em et al.}~\cite{HuangMitra-1130,huang2015differentially} proposed a class of iterative algorithms for solving the private distributed optimization problem.
Recently, a variety of privacy-preserving distributed (collaborative) learning or federated learning (FL)~\cite{truex2019hybrid,zhao2019privacy,geyer2017differentially,ZhangZhu-1222,wang2019beyond} approaches have emerged as new solutions to privately learn from distributed datasets. For example, 
Geyer {\em et al.}~\cite{geyer2017differentially} proposed to achieve client level DP for distributed FL clients by injecting noise to the aggregated update models of distributed clients, where moment accountant mechanism is also used for tightly tracking the privacy loss. Truex {\em et al.}~\cite{truex2019hybrid} combined both techniques of DP and secure multiparty computation to reduce the noise growth while maintaining effective privacy guarantee. Likewise, Zhao {\em et al.}~\cite{zhao2019privacy} achieved privacy-preserving distributed collaborative deep learning via not only running privacy-preserving stochastic descent gradient independently on distributed datasets using object perturbation on loss function, but also privately selecting reliable participants via the exponential mechanism. These methods can allow massive distributed data utilization with privacy preservation, which, however, are mainly considered in a batch learning scenario instead of streaming setting. To address this issue, Li {\em et al.}~\cite{8260919} presented a distributed online learning framework with DP. Nonetheless, temporal correlations in the dynamic estimation were hardly considered in these studies. 


Unlike the above studies, we aim to design a privacy-preserving and communication efficient framework of real-time decentralized statistical estimation for multiple distributed servers with crowd-sourced data streams, which can be widely used in IoT-driven smart-world systems. There is a controversy~\cite{li2016differential,cao2018quantifying,mcsherrypost1} over what does DP guarantee for correlated data streams due to different understanding of privacy definition. Some recent works~\cite{cao2018quantifying,song2017pufferfish} suggested that DP offers a weaker bound on privacy loss when data records are correlated. Nonetheless, similar to~\cite{chen2017pegasus}, \textit{in this paper, we emphasize to design privacy-preserving mechanisms based on general DP definitions~\cite{dwork2010differentialco} with privacy parameter $\varepsilon$ while minimizing the statistical estimation error.}

\section{Models and Problem Definition}\label{sec: System Model}

In this section, we first introduce system model, communication model, data model, data model, adversary model and then present the problem definition.

\begin{figure}[htbp]\vspace{-0.2cm}
	\centering\epsfig{file=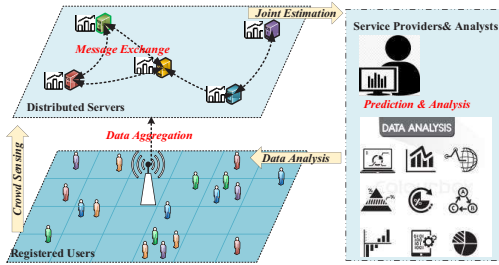, width=0.4\textwidth}
	\caption{Decentralized Statistical Estimation from Crowd-sourced Data}\label{systemmodel}
\vspace{-0.2cm}
\end{figure}

\textbf{System Model}. As shown in Fig.~\ref{systemmodel}, we consider there are $m$ distributed servers that provide geo-location services to a population of $n$ crowd-sourcing users $\{1, \dots, n\}$ scattered in an area, which is divided into $d$ disjoint regions. Each time $t$, each user randomly registers at one of the distributed servers, and uploads the check-in information with the secure connection technology. Each server SP$_i~(i=1, \dots, m)$ then collects the crowd-sourced data from its corresponding user group $G_i (t) \subseteq  \{1, \dots, n\}$ with the population of $|G_i (t)|$. Assume that all the users generally follow the same mobility model (i.e., the same transition probability from one region to another) when the regions are coarsely divided. In this paper, these servers are considered to be connected in a decentralized network and interested in the real-time population distribution among regions.


\textbf{Communication Model}. Though mostly static, we assume a general scenario, in which the communication network among distributed servers is dynamic and evolves with time (e.g., vehicular networks). We assume all servers communicate with each other based on a $m \times m$ time-variant adjacent matrix $\mathcal{E}(t)$. We abstract the communication graph for $i^{th}$ server at timestamp $t$ as
\begin{align}\label{eq:graph}
\mathcal{G}_i (t)=\{(i,j):e_{ij}(t) \in \mathcal{E}(t)\}.
\end{align}
Here, the element $e_{ij}(t)=1$ means that there exists a communication between server $i$ and $j$ at the timestamp $t$ while $e_{ij} (t)=0$ means that no communication between them. We assume that at every timestamp, the graph has no isolated server; i.e., for each $i=1, \ldots, m$, there exists $j\neq i$ such that $e_{ij}(t)=e_{ji}(t)=1$.


\textbf{Data Model}. Let $D_t$ be a two-dimensional matrix with the size of $n\times d$ at timestamp $t$.
Denote $D_{i,t}=[D_{i,t}[u_{1}^i]^T \dots D_{i,t}[u_{|G_i (t)|}^i]^T]^T$ as the two-dimensional database at the $i^{th}$ server SP$_i$ at time $t$, with the size of $|G_i (t)| \times d$, where $D_{i,t}[u_{j}^i]~(1 \leq j \leq |G_i (t)|)$ denotes the $j^{th}$ row of $D_{i,t}$, which corresponds to SP$_i$'s registered users $u^i_j \in G_i (t)$. In $D_{i,t}$, each row corresponds to a registered user in $G_i (t)$ and each column corresponds to a region. The value of $D_{i,t}(p,q)$ is $1$ refers to the case that the $p^{th}$ user in $G_i (t)$ appears at the $q^{th}$ region at time $t$, and $0$ otherwise. Since any user can appear at exactly one region at the same time, each row in $D_{i,t}$ also contains at most one $1$.

\textbf{Adversary Model}. We focus on the data privacy of crowd-sourcing users in decentralized statistical estimation. We assume the crowd-sourcing users trust on the distributed servers, at which they registered. This assumption is common as the users may have to contribute their data to the servers for certain personalized services, i.e., online recommendation or real-time navigation. Nonetheless, they wish better service quality while minimizing their privacy risks. Thus, each user would consider his/her unregistered servers or any third-party analysts are potential \textit{honest-but-curious} adversaries, which honestly follow the mechanism, but try to infer the private information from his/her register server. This adversary model is practical in a decentralized network where distributed servers belong to different individuals or organizations, which do not have mutual trust. 

\textbf{Problem Definition}. Let $\mathbf{r}(t)=f(D_t)=(r^1 (t), r^2 (t), \ldots, r^d (t))$ denote the true statistics (e.g., $r^k (t)$ denote the total number of users) over $D_t$ at time $t$ in the $k^{th}$ region $(k=1,2, \ldots, d)$, where $f$ is an aggregate function (e.g., sum) applied to all $d$ dimensions. However, $\mathbf{r}(t)$ cannot be accurately obtained by any distributed server with consensus, which only has partial knowledge of all crowd-sourcing users. In particular, the $i^{th}$ server can only aggregate its registered users' data to produce its own aggregate statistics $\mathbf{x}_i (t)=f(D_{i,t})$ and may directly estimate $\mathbf{r}(t)$ from $\mathbf{x}_i (t)$. Nonetheless, due to partial samples and no coordination, the estimation would be rather rough and vary extravagantly among the distributed servers. Thus, collaborative estimation from multiple distributed servers seems to be promising. Nonetheless, severe privacy risks under the above adversary model and communication overheads in a time-varying decentralized network may still prevent the collaborations among servers. To further encourage active collaboration, the distributed servers can also be rewarded with incentives mechanisms based on their contributions recorded by some distributed ledgers (such as Blockchain)~\cite{kang2019incentive}. However, this is beyond our focus of privacy preservation in this paper.

Therefore, based on aforementioned system models and assumptions, our problem can be formalized as:
\textit{with the partial stream datasets $D_{1,t}, D_{2,t}, \ldots, D_{m,t}$ at $m$ distributed servers in a time-varying decentralized network $\mathcal{G}_i (t)$, we focus on helping the mutually untrusted distributed servers to communication-efficiently estimate the accurate overall statistics $\mathbf{r}(t)$ in real-time with consensus while guaranteeing differential privacy for crowd-sourcing users registered at each distributed server.}

\section{Preliminaries}\label{sec: Preliminaries}

In this section, we provide some background about the notion of differential privacy (DP), DP on data streams, as well as Kalman-Consensus information filter.

\subsection{Differential Privacy and Laplace Mechanism}\label{sec: Differential Privacy for Static Datasets}

Differential privacy (DP) is a de-facto standard for data privacy. The rationale behind DP is that adding or removing any single data record will not have much influence on query results on the dataset. A formal definition of DP~\cite{DworkRoth-77} is given below.
\begin{definition}
\textbf{$\varepsilon$-DP~\cite{DworkRoth-77}:} A randomized mechanism $\mathcal{M}$ satisfies $\varepsilon$-DP if for any two neighboring datasets $D$ and $D'$ that differ at most one data record, and for any possible outputs $O \subseteq Range(\mathcal{M})$,
\begin{align}\label{eq: 1}
Pr\left[\mathcal{M}(D) \in O \right]\leq e^{\varepsilon} \cdot Pr\left[\mathcal{M}(D') \in O \right],
\end{align}
where the probability is taken over $\mathcal{M}'s$ randomness. \emph{Privacy budget} $\varepsilon$ is a parameter for the tradeoff between privacy and utility. From Eq. (\ref{eq: 1}), we see that smaller $\varepsilon$ means better privacy but lower utility.
\end{definition}
\begin{definition}
\textbf{Sensitivity~\cite{DworkRoth-77}:} For any function $f: \mathcal{D} \rightarrow \mathcal{R}^d$, the sensitivity of $f$ w.r.t $\mathcal{D}$ is defined as
\begin{align}
\Delta f=\max_{D, D' \in \mathcal{D}}\| f(D)-f(D')\|
\end{align}
for all $D$ and $D'$ that differs on at most one record.
\end{definition}
\textbf{Laplace mechanism} is the most popular scheme for DP, which adds carefully calibrated noise to query results~\cite{DworkRoth-77}. In particular, the noise $\textcolor{blue}{\upsilon}$ follows a zero-mean Laplace distribution $\mathcal{L}(b)$ with scale parameter $b$, which has the probability density function
\begin{align}
P(\textcolor{blue}{\upsilon}|b)=\frac{1}{2b}\mathrm{exp}({-\frac{|\textcolor{blue}{\upsilon}|}{b}}).
\end{align}
\begin{theorem}\label{theorem: LM}
(Laplace Mechanism~\cite{DworkRoth-77}) For any function $f: \mathcal{D} \rightarrow \mathcal{R}^d$ on any dataset $D \in \mathcal{D}$, the Laplace Mechanism $\mathcal{M}$ that adds Laplace noise $\left \langle \upsilon_1,\ldots,\upsilon_d\right \rangle$ to the function output, i.e.,
\begin{align}
\mathcal{M}(D)=f(D)+\left \langle \upsilon_1,\ldots,\upsilon_d\right \rangle
\end{align}
satisfies $\varepsilon$-DP, where $\upsilon_k$ for $k=1,\ldots,d$ is drawn from Laplace distribution $\mathcal{L}(\Delta f/\varepsilon)$ with $\Delta f$ as the sensitivity of $f(\cdot)$ and $\varepsilon$ as the privacy budget.
\end{theorem}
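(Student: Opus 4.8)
The plan is to establish $\varepsilon$-DP directly from the definition by comparing the output densities of $\mathcal{M}$ on an arbitrary pair of neighboring datasets. First I would fix neighboring $D, D' \in \mathcal{D}$ (differing in at most one record) and write $p_D$ and $p_{D'}$ for the probability density functions of $\mathcal{M}(D)$ and $\mathcal{M}(D')$ on $\mathcal{R}^d$. Since the noise coordinates $\upsilon_1,\ldots,\upsilon_d$ are drawn independently from $\mathcal{L}(\Delta f/\varepsilon)$, the density factorizes: for any $z=(z_1,\ldots,z_d)$, $p_D(z)=\prod_{k=1}^d \frac{\varepsilon}{2\Delta f}\exp\!\left(-\frac{\varepsilon\,|z_k-f(D)_k|}{\Delta f}\right)$, and likewise for $D'$ with $f(D')$ in place of $f(D)$.

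Next I would form the pointwise likelihood ratio $p_D(z)/p_{D'}(z)$. The normalizing constants $\frac{\varepsilon}{2\Delta f}$ cancel, leaving $\exp\!\left(\frac{\varepsilon}{\Delta f}\sum_{k=1}^d\big(|z_k-f(D')_k|-|z_k-f(D)_k|\big)\right)$. Applying the reverse triangle inequality coordinatewise, $|z_k-f(D')_k|-|z_k-f(D)_k|\le |f(D)_k-f(D')_k|$, so the exponent is bounded by $\frac{\varepsilon}{\Delta f}\sum_{k=1}^d|f(D)_k-f(D')_k|=\frac{\varepsilon}{\Delta f}\,\|f(D)-f(D')\|_1\le\frac{\varepsilon}{\Delta f}\,\Delta f=\varepsilon$, where the last step invokes the definition of sensitivity (with the norm read as the $\ell_1$ norm, which is the one relevant to the Laplace mechanism). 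Hence $p_D(z)\le e^{\varepsilon}p_{D'}(z)$ for every $z\in\mathcal{R}^d$.

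Finally I would integrate this pointwise bound over an arbitrary measurable set $O\subseteq Range(\mathcal{M})$: $\Pr[\mathcal{M}(D)\in O]=\int_O p_D(z)\,dz\le e^{\varepsilon}\int_O p_{D'}(z)\,dz=e^{\varepsilon}\Pr[\mathcal{M}(D')\in O]$, which is precisely the inequality defining $\varepsilon$-DP; by the symmetry of the argument under swapping $D$ and $D'$ the bound also holds in the other direction.

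The step I would be most careful about — really the only nontrivial point — is the norm in the sensitivity bound: the independence/product structure of the Laplace noise makes the chain of inequalities close exactly when $\Delta f$ controls the $\ell_1$ distance $\|f(D)-f(D')\|_1$, and with a different norm one would incur an extra dimension-dependent factor. Positivity of the Laplace density everywhere on $\mathcal{R}^d$ guarantees $p_{D'}(z)>0$, so no division-by-zero or degenerate cases arise, and the argument needs no further regularity assumptions beyond measurability of $O$.
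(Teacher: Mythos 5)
Your proof is correct and is the standard density-ratio argument for the Laplace mechanism; the paper itself gives no proof of this theorem, simply importing it from Dwork and Roth, so there is nothing in the paper to diverge from. One remark worth keeping: you are right that the argument closes only when the sensitivity $\Delta f$ is measured in the $\ell_1$ norm, and since the paper's Definition~2 writes $\|f(D)-f(D')\|$ without specifying the norm, your explicit identification of it as $\ell_1$ resolves an ambiguity the paper leaves open rather than introducing an extra assumption.
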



DP enjoys the following two useful properties~\cite{DworkRoth-77}.

\begin{theorem}\label{theorem: Sequential Composition}
(Sequential Composition~\cite{DworkRoth-77}). Let $\mathcal{M}_1, \ldots, \mathcal{M}_T$ be $T$ randomized mechanisms, each of which satisfies $\varepsilon_t$-DP. A sequence of mechanisms $\mathcal{M}_t$ over a database $D$ will guarantee $\sum \varepsilon_t$-DP.
\end{theorem}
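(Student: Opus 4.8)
The plan is to establish the two-mechanism case first and then obtain the general statement by a short induction on $T$. So I would fix any pair of neighboring databases $D$ and $D'$ (differing in at most one record) and consider the composed mechanism $\mathcal{M} = (\mathcal{M}_1, \mathcal{M}_2)$ whose output lives in the product range $\mathcal{R}_1 \times \mathcal{R}_2$. I would allow $\mathcal{M}_2$ to read the realized output of $\mathcal{M}_1$, so that the argument simultaneously covers the adaptive setting; the non-adaptive case in which the $\mathcal{M}_t$ are run independently over $D$ is then just a special case.

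For a fixed outcome pair $(o_1, o_2)$, the next step is to factor the joint likelihood as
\[
\Pr[\mathcal{M}(D) = (o_1, o_2)] = \Pr[\mathcal{M}_1(D) = o_1]\cdot\Pr[\mathcal{M}_2(D, o_1) = o_2],
\]
and likewise for $D'$, where in the second factor the first coordinate $o_1$ is held fixed. Applying the $\varepsilon_1$-DP guarantee of $\mathcal{M}_1$ to the first factor and the $\varepsilon_2$-DP guarantee of $\mathcal{M}_2$ (with $o_1$ fixed) to the second factor bounds each ratio by $e^{\varepsilon_1}$ and $e^{\varepsilon_2}$ respectively, so the pointwise ratio of joint likelihoods is at most $e^{\varepsilon_1 + \varepsilon_2}$. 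I would then lift this pointwise bound to an arbitrary measurable set $O \subseteq \mathcal{R}_1 \times \mathcal{R}_2$: summing (in the discrete case) or integrating (in general) the inequality over $O$ and using monotonicity gives $\Pr[\mathcal{M}(D) \in O] \le e^{\varepsilon_1 + \varepsilon_2}\,\Pr[\mathcal{M}(D') \in O]$, i.e. the composition is $(\varepsilon_1 + \varepsilon_2)$-DP.

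Finally, a straightforward induction finishes the argument: regarding $(\mathcal{M}_1, \ldots, \mathcal{M}_{T-1})$ as a single mechanism that, by the inductive hypothesis, satisfies $(\sum_{t<T}\varepsilon_t)$-DP, and composing it with $\mathcal{M}_T$ via the two-mechanism case, one obtains $\sum_{t=1}^T \varepsilon_t$-DP for the whole sequence.

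The main obstacle is the bookkeeping in the conditional/adaptive step: one must be careful to apply the DP property of $\mathcal{M}_2$ with the first-coordinate outcome frozen, and to ensure the conditional probabilities (or densities) involved are well defined. This is handled most cleanly either with a proper measure-theoretic treatment of the product output space, or by first proving the result for countable ranges and then passing to the general case by a standard discretization/approximation argument; once that is set up, the remaining work is essentially the one-line product bound above.
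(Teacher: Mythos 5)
The paper does not prove this statement at all: it is quoted as a known background result with a citation to Dwork--Roth, so there is no in-paper argument to compare against. Your proof is the standard and correct one for sequential composition --- factor the joint likelihood of the pair $(o_1,o_2)$, bound each factor by the corresponding $e^{\varepsilon_t}$ (with the first-coordinate outcome frozen when invoking the guarantee for $\mathcal{M}_2$), sum or integrate over the output set, and induct on $T$ --- and it matches the canonical textbook argument, including the correct caveats about the adaptive case and measurability.
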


\begin{theorem}\label{theorem: Post-Processing}
(Post-Processing~\cite{DworkRoth-77}) Let $\mathcal{M}$ be a randomized mechanism satisfying $\varepsilon$-DP and $f$ be an arbitrary function. Then, $f(\mathcal{M}(D))$ will still guarantee $\varepsilon$-DP.
\end{theorem}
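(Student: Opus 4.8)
The plan is to prove the claim first for a deterministic function $f$ and then lift it to an arbitrary randomized $f$ by conditioning on $f$'s internal coins. So suppose $f$ is deterministic. Fix any pair of neighboring datasets $D$ and $D'$ and any (measurable) output set $O \subseteq Range(f\circ\mathcal{M})$. Define the preimage $T=\{s \in Range(\mathcal{M}) : f(s) \in O\}$, so that the event $\{f(\mathcal{M}(D)) \in O\}$ is literally the same event as $\{\mathcal{M}(D) \in T\}$, and likewise for $D'$. Then applying the defining inequality~(\ref{eq: 1}) of $\varepsilon$-DP to $\mathcal{M}$ with the particular output set $T$ gives
\begin{align}
Pr[f(\mathcal{M}(D)) \in O] = Pr[\mathcal{M}(D) \in T] \le e^{\varepsilon}\, Pr[\mathcal{M}(D') \in T] = e^{\varepsilon}\, Pr[f(\mathcal{M}(D')) \in O],
\end{align}
which is exactly $\varepsilon$-DP for the composed mechanism $f\circ\mathcal{M}$.

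For a general randomized $f$, I would model it as a family $\{f_{\omega}\}$ of deterministic maps indexed by an auxiliary random variable $\omega$ drawn independently of the coins of $\mathcal{M}$ (and independently of the data). Conditioned on $\omega$, the deterministic case just proved yields $Pr[f_{\omega}(\mathcal{M}(D)) \in O] \le e^{\varepsilon}\, Pr[f_{\omega}(\mathcal{M}(D')) \in O]$. Since $\omega$ has the same distribution regardless of whether the input is $D$ or $D'$, taking expectation over $\omega$ on both sides and using linearity and monotonicity of expectation preserves the inequality, giving $Pr[f(\mathcal{M}(D)) \in O] \le e^{\varepsilon}\, Pr[f(\mathcal{M}(D')) \in O]$, as required.

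The argument is essentially bookkeeping; the only point requiring care — not really an obstacle — is measurability: one needs $T=f^{-1}(O)$ to be a measurable subset of $Range(\mathcal{M})$ so that~(\ref{eq: 1}) is applicable to it, which holds whenever $f$ is a measurable map (implicit in calling it a function on these spaces), and is automatic in the discrete setting. It is also worth emphasizing in the write-up that no property of $\mathcal{M}$ beyond~(\ref{eq: 1}) is invoked, and that the step of averaging over $\omega$ is valid precisely because $f$'s randomness does not depend on the dataset — $f$ touches the data only through $\mathcal{M}(D)$.
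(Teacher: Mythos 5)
Your proof is correct. Note that the paper does not actually prove this statement --- it is quoted verbatim as a known preliminary and attributed to Dwork and Roth --- so there is no in-paper argument to compare against; your write-up is the canonical proof that the citation points to (preimage argument for deterministic $f$ via applying Eq.~(\ref{eq: 1}) to $T=f^{-1}(O)$, then a convex decomposition of a randomized $f$ into deterministic maps $f_{\omega}$ with data-independent $\omega$, followed by averaging). Both steps are sound, and your remarks on measurability and on the independence of $f$'s coins from the dataset are exactly the right caveats.
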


\subsection{DP on Data Streams}\label{sec: Differential Privacy for Data Stream}

The most straightforward DP notion for data streams is event-level DP for infinite streams~\cite{dwork2010differential,dwork2010differentialco}, which aims to protect the presence of a particular event at the time $i$ in a stream with unlimited length. Another is user-level DP for finite streams, guaranteeing that the presence of any user is indistinguishable in an entire data stream during certain period. Event-level DP is weaker than user-level DP as it does not consider the correlation among events in consecutive timestamps. Nonetheless, user-level DP for finite streams may restrict many interruptible real-time applications that generate infinite streams, whereas event-level DP is sometimes insufficient. Thus, $w$-event privacy~\cite{kellaris2014differentially}, an approximate user-level in a sliding window of $w$ continuous timestamps, is proposed as an alternative DP definition for data streams. In this paper, we try to cover both user-level DP for finite streams and $w$-event privacy for infinite streams.

Before giving the definition of $w$-event DP, we first introduce the definition of $w$-neighboring, which describes two streams differs in a window of $w$ timestamps. For an infinite data stream $S=[D_1, D_2, \ldots]$, we define its stream prefix at timestamp $t$ as $S_t=[D_1, D_2, \ldots, D_t]$.

\begin{definition}
($w$-neighboring). For any positive integer $w$, two stream prefixes $S_t$, $S_t '$ are defined as $w$-neighboring, if
\begin{enumerate}
  \item for each $S_t [i]$, $S_t ' [i]$ such that $i \in [t]$ and $S_t [i] \neq S_t '[i]$, it holds that $S_t [i]$, $S_t '[i]$ are neighboring;
  \item for each $S_t [i_1]$, $S_t  [i_2]$, $S_t ' [i_1]$, $S_t ' [i_2]$ with $i_1<i_2$, $S_t [i_1]\neq S_t '[i_1]$ and $S_t [i_2]\neq S_t '[i_2]$, it holds that $i_2-i_1 +1 \leq w$.
\end{enumerate}
\end{definition}

\begin{definition}\label{def:w-event}
($w$-event $\varepsilon$-DP). A mechanism $\mathcal{M}$ is $w$-event $\varepsilon$-DP, if for the given integer $w$, all output sets $O \subseteq Range(\mathcal{M})$ and all $w$-neighboring stream prefixes $S_t$, $S_t '$ with all $t$, it satisfies that
\begin{align}
Pr[\mathcal{M}(S_t)\in O] \leq e^\varepsilon \cdot Pr[\mathcal{M}(S_t ')\in O].
\end{align}
\end{definition}

\subsection{Kalman-Consensus Information Filter}

Kalman filter is an effective algorithm for estimating dynamic processes that contain statistical noise. In particular, an underlying dynamic process with noise can be formulated by a linear time-varying model (aka. process model)
\begin{equation}\label{eq:state}
r(t+1)=A(t) \cdot r(t)+\omega(t),
\end{equation}
where $r(t)$ is the process state at time $t$ ($r(0)$ is an initial state with a normal distribution $N(\overline{r}(0), P_0)$), $\omega(t)$ is the noise sampled from a normal distribution $N(0, Q_t)$, and $A(t)$ is the transition matrix that describes the transitions of the process.

In a distributed network, each node $i$ can have an observation $x_i$ of the dynamic process with the following linear sensing model (aka. measurement model)
\begin{equation}\label{eq:sensing}
x_i (t)=H_i (t) \cdot r(t)+v_i (t),
\end{equation}
where $H_i (t)$ is the observation matrix and $v_i (t)$ is the measurement noise assumed to follow a normal distribution $N(0, R_t)$.

The Kalman filter can be used for each node to estimate the true $r(t)$ independently. We denote $\hat{x}_i (t)$ and $\overline{x}_i (t)$ as estimate and prior estimate of $r(t)$, respectively, for node $i$. Then, the estimate $\hat{x}_i (t)$ of $r(t)$ can be given as a linear combination of the prior estimate $\overline{x}_i(t)$ and the measurement $x_i (t)$
\begin{equation}\label{eq:stdKF}
\hat{x}_i(t)=\overline{x}_i(t)+K_i (t)(x_i(t)-H_i(t)\overline{x}_i(t)),
\end{equation}
where $K_i (t)$ is called Kalman gain and adjusted to minimize the posterior error covariance at each timestamp. Particularly, the prior estimate $\overline{x}_i (t)$ can be predicted according to the process model (Eq.~(\ref{eq:state})) and the measurement model (Eq.~(\ref{eq:sensing})). 

The standard Kalman filter is only applicable to produce the estimation of true state $r(t)$ for each node individually. Nonetheless, all $m$ nodes measure the same dynamic process described in Eq.~(\ref{eq:state}) and their estimation can be better calibrated once their measurements are shared among the network.

The Kalman-consensus information filter (KCIF)~\cite{olfati2009kalman} is a decentralized form of Kalman filter to collaboratively estimate the targeted process $r(t)$ with better consensus. In particular, besides the standard Kalman estimator operations, each node will exchange messages among its neighboring nodes and enforce a consensus term on locally prior estimates to reach a consensus among all nodes. The Kalman-consensus information filter can be written as
\begin{equation}\label{eq:KCIF}
\hat{x}_i (t)=\overline{x}_i (t)+M_i (t)(y_i(t)-Y_i \overline{x}_i (t))+C_i(t) \sum_{j \in N_i} (\overline{x}_j (t)-\overline{x}_i (t)),
\end{equation}
where $y_i(t)$ and $Y_i$ are weighted measurement and information matrix of neighbouring nodes of $i$, respectively, $N_i$ refers to the set of one-hop neighbors of node $i$, $M_i(t)$ is the posterior estimation covariance, and $C_i(t)$ is the consensus gain, which keeps the balance between the consensus and the stability of distributed Kalman estimators.

\begin{table}
\centering \caption{Notations}\label{Notation}\vspace{-2mm}
\fontsize{8pt}{\baselineskip}\selectfont
\begin{tabular}{lp{0.35\textwidth}} \hline
$m, d, T$& Number of distributed servers, regions, timestamps\\
$i, t, j, k$& Index of servers, time, neighbouring servers, regions \\
SP$_i$& $i^{th}$ distributed server\\
$u_i^j$& $j^{th}$ registered user at SP$_i$\\
$\mathcal{E}(t)$& Dynamic adjacent matrix at time $t$ \\
$N_i$& Neighboring set of SP$_i$\\
$\overline{N}$& Average node degree of distributed servers \\
$r(t)$& Overall real-time statistics at time $t$\\
$Q$& Variance (covariance) of $r(t)$ \\
$H_i$& Observation coefficient of SP$_i$\\
$f(\cdot)$& Aggregate statistics function\\
$x_i(t), z_i(t)$& aggregate, and perturbed statistics of SP$_i$ at time $t$\\
$R_i, \hat{R}_i$& Variance (covariance) of observation statistics of SP$_i$\\
$\overline{x}_i (t), \widehat{x}_i (t)$& Prior/Posterior estimated statistics of SP$_i$ at time $t$\\
$M_i(t)$& Variance (covariance) of posterior estimation at $t$\\
$K_i(t), C_i(t)$& Kalman/Consens gain of SP$_i$ at time $t$\\
$P_i$& Variance (covariance) of prior estimation at SP$_i$\\
$u_i (t)$& Weighted measurement of SP$_i$\\
$y_i (t)$& Average measurement of SP$_i$'s neighbours \\
$U_i$& Information matrix of SP$_i$\\
$Y_i$& Fused information matrix of SP$_i$'s neighbours \\
\hline
\end{tabular}
\vspace{-3mm}
\end{table}

\section{Our Approaches}
\label{sec: approach}


In this section, we first give a non-private solution for real-time decentralized statistical estimation for crowd-sourced data. Then, we detail our baseline solution with DP and communication efficiency, which is called \textsf{DPCrowd} for one-dimensional and finite streams. Finally, we present the enhanced solution, called \textsf{DPCrowd+} for multi-dimensional and infinite streams. The main notations are listed in Table~\ref{Notation}.

\subsection{Non-private Solution}\label{sec: Non-private Solution}

One natural solution is that each distributed server independently estimates true statistics from its own database $D_{i,t}$.

\subsubsection{Basic Idea}\label{sec: alternatives}

Without loss of generality, we simply denote the one-dimensional true statistics $r^k(t)$ at the $k^{th}, ~(\text{where}~k=1, \ldots, d)$ dimension as $r(t)$. Then, we model $r(t)=f(D_t)$ as a dynamic process defined as Eq.~(\ref{eq:state}),
where $A(t)$ is the transition coefficient and can be simplified as a constant $A(t)=A=1$ when the timestamp is short. $\omega(t)$ is the process noise and follows a normal distribution, i.e., $\omega(t) \sim N(0, Q)$. Here, $Q$ can be learned from history data.

Since the user group $G_i(t)$ of each distributed server at slot $t$ can be regarded as a uniform sample of the whole population, it can be naturally assumed that the aggregate statistics $x_i (t)=f(D_{i,t})$ at the server SP$_i$ is an observation of the true time-series statistics $r(t)=f(D_t)$ and follows the linear equation as Eq.~(\ref{eq:sensing}).
The linear observation coefficients $H_i(t)$ corresponds to the ratio of the registered users $G_i (t)$, which represents the estimation weight of each distributed server in the crowd-sensing scenarios. 
\begin{equation}\label{eq:observ coefficient}
H_i (t)=\frac{|G_i (t)|}{n},
\end{equation}
and $v_i (t)$ is the observation (measurement) noise and follows a normal distribution, i.e., $v_i (t) \sim N(0, R_i (t))$. Since the uniform sample, there is generally, $R_i (t)=(H_i (t))^2Q$. 

To have an estimation for the real-time true statistics, the standard Kalman filter~\cite{fan2014adaptive} or other temporal correlation exploitation techniques~\cite{rescuedp2016,kellaris2014differentially,chen2017pegasus} can be adopted by each distributed server individually to exploit the temporal correlations in the aggregation data of crowd-sourced users, which can be formulated as Eq.~(\ref{eq:stdKF}). Nonetheless, due to independent estimation with partial knowledge, the estimations can be rather rough and no consensus can be achieved among distributed servers without mutual trust. An alternative solution is to multi-hop broadcast (i.e., blind flooding) each server's independent estimation to all others and then conduct weighted average estimation at all distributed servers. Nonetheless, the multi-hop broadcast would cause not only all-to-all communication complexity of $O(m^2)$, but also large estimation delay of at most $O(m)$ relays.

\subsubsection{KCIF Based Statistical Estimation}

We now propose a communication-efficient solution by utilizing Kalman-consensus information filter~\cite{olfati2009kalman} to collaboratively estimate the true statistics for distributed servers via only single-hop message exchange. The key idea is that each distributed server corrects its prior estimation with not only the standard Kalman process, but also the consensus information from its one-hop neighboring servers.

Algorithm~\ref{alg:nonprivate} presents the KCIF based statistical estimation. At each timestamp, given initialized estimation covariance $P_0$ and prior estimation $\bar{x}_i (t)$, each server begins by obtaining its aggregate statistics $z_i (t)=f(D_{i,t})$ from its partial crowd-sourced data $D_{i,t}$. Then, it computes and broadcasts the prior estimation, the weighted information vector $u_i (t)$ and matrix $U_i (t)$ to its one hop neighbors. Meanwhile, it receives the similar information from its direct neighbors $N_i$ and fuses the information as $y_i (t) =\sum_{j\in J_i} u_j (t) $, ~$Y_i (t) =\sum_{j\in J_i} U_j (t)~(\text{where}~ J_j=N_j \bigcup \{i\})$. After that, it computes the posterior estimation error covariance $M_i(t)$ and consensus gain $C_i(t)$ to derive the posterior estimation $\widehat{x}_i(t)$. Finally, it updates both the prior estimation and prior estimation covariance for the next iteration. With only one-hop communications, all servers can collaboratively estimate the true statistics from their own partial database $D_i$ in a non-private way.

\begin{algorithm}[htbp]\footnotesize
	\caption{Non-private Decentralized Statistical Estimation}
	\label{alg:nonprivate}
	\begin{algorithmic}[1]
		\REQUIRE Raw crowd-sourced data $D_{i,t}$, population ratio $H_i$, initial value $P_i(0)=P_0$, $\bar{x}_i(0)=f(D_0^i)$, messages $\text{msg}_j(t)=\{u_j(t), U_j (t), \overline{x}_j(t)\}$, neighbor set $N_j$,  $J_j=N_j \bigcup \{i\}$, stepsize parameter $\beta$.
		\ENSURE Posterior estimate aggregation $\hat{x}_i(t)$;
		
		\STATE Obtain aggregate statistics $z_i(t)=f(D_{i,t})$ with covariance $R_i$;
		\STATE Compute $u_i(t)=H_i (t) z_i(t)/R_i (t) $, $U_i=(H_i (t))^2/ R_i (t)$;
		\STATE Broadcast the message $\text{msg}_i(t)=(u_i(t), U_i, \bar{x}_i(t))$;
		\STATE Receive messages $\text{msg}_j (t)$ from all neighbors $j \in N_i$;
		\STATE Aggregate information $y_i(t) =\sum_{j\in J_i} u_j(t) $, ~$Y_i =\sum_{j\in J_i} U_j (t)$;
		\STATE Compute posterior estimation covariance $M_i(t)=1/(1/P_i(t)+Y_i (t))$;
		\STATE Compute consensus gain $C_i(t)$ as \\ $\gamma=\beta/(|P_i(t)|+1)$,  $C_i(t)=\gamma P_i(t)$;	
		\STATE Calculate posterior estimation as \\
		$\widehat{x}_i(t)=\overline{x}_i(t)+ M_i (y_i(t)-Y_i (t)\bar{x}_i(t))+C_i(t) \sum_{j \in N_i}(\bar{x}_j(t) -\bar{x}_i(t))$;
		\STATE Update the prior estimation $\bar{x}_i(t)=A \cdot \hat{x}_i(t) $;
		\STATE Update the prior estimation covariance $P_i(t)=A^2 M_i(t) +Q$;
	\end{algorithmic}
\end{algorithm}

\subsubsection{Challenges for DP and Communication-efficiency}\label{subsubsec: challenges}

The messages exchanged among servers in Algorithm~\ref{alg:nonprivate} contains the information derived from the raw aggregate statistics $z_i(t)$, which may lead to the privacy exposure of individual users in $G_i$. Especially, servers may be geographically far away and not privacy reliable to each other. Besides, despite only single-hop communication, the continuous message exchange in Algorithm~\ref{alg:nonprivate} at each timestamp would still incur great communication cost on a long timescale. To address both the concerns of privacy and communication cost, we aim to propose a real-time decentralized statistical parameter estimation framework with both DP and communication efficiency for multiple distributed servers on crowd-sourced data. A naive solution for DP suggests adding Laplace noise to the raw aggregate statistics $z_i$. However, we are still facing the following challenges:

\begin{itemize}

	\item \emph{How to make the decentralized statistical estimation work in both communication-efficient and DP way?}

	\item \emph{How to improve the estimation utility with given DP requirements (i.e., user-level $\varepsilon$-DP for finite streams or $w$-event level DP for infinite streams) considering dynamic aggregation consumes DP budget quickly?}

	\item \emph{How to reduce the estimation error for sparse regions in multi-dimensional data considering the DP noise may overwhelm the statistics over sparse regions?}

\end{itemize}

\begin{figure}[htbp]\vspace{-0.2cm}
	\centering\epsfig{file=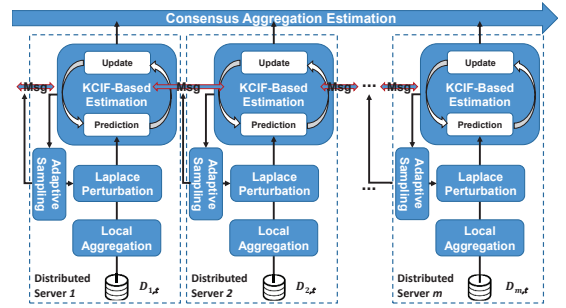, width=0.4\textwidth}
	\caption{A High-level Overview of \textsf{DPCrowd} \label{fig:ProperDP}}
\vspace{-0.5cm}
\end{figure}

\subsection{\textsf{DPCrowd}: Real-time Decentralized Statistical Estimation for One-dimensional and Finite Data Streams}\label{sec: ProperDP}
To mitigate the privacy and communication challenges in the non-private solution of Section~\ref{sec: Non-private Solution}, we first propose a baseline scheme \textsf{DPCrowd} with user-level $\varepsilon$-DP and communication-efficiency for real-time decentralized statistical estimation on one-dimensional and finite data streams.
\subsubsection{Overview of \textsf{DPCrowd}} \label{sec: Basic Idea}

Fig.~\ref{fig:ProperDP} presents a high-level overview of \textsf{DPCrowd} on distributed servers. \textsf{DPCrowd} mainly consists of three mechanisms: Laplace Perturbation, KCIF-based Estimation, and Adaptive Sampling based Intermittent Communication.

\begin{itemize}

  \item \textbf{Laplace Perturbation.} After confirming $D_{i,t}$, each server computes the raw aggregate statistics $x_i (t)=f(D_{i,t})$ at time $t$. Here, we focus on estimating the one-dimensional statistics, e.g., the population of users appeared in a particular region. To guarantee user-level $\varepsilon$-DP for the finite stream, each server perturbs its raw statistics $x_i$ by Laplace mechanism with certain portion of allocated privacy budget, performs the post-process on perturbed statistics, and forwards the results to its neighbors.

  \item \textbf{KCIF-based Estimation.} KCIF-based estimation mechanism over each distributed server fuses the information exchanged from other servers and correct its own prior prediction according to both Kalman gain and consensus gain. Kalman gain can reduce both the observation noise and perturbation noise by exploiting the temporal correlations in real-time statistics. Consensus gain can integrate partial statistics from distributed servers to further correct overall estimation with consensus.

  \item \textbf{Adaptive Sampling based Intermittent Communication.} To reduce the communication cost and better utilize the privacy budget for a finite stream, we propose an adaptive sampling based intermittent communication strategy via leveraging the temporal correlations in crowd-sourced data. In particular, based on the dynamic changes between the prior estimation and posterior estimation after KCIF-based estimation, the server adaptively decides whether to perturb the aggregate statistics with certain privacy budget or approximate it with the previous estimation. Thus, the limited privacy budget can be allocated more to the necessary timestamps. Once the approximation strategy is chosen at the current timestamp, the server does not need to broadcast its estimations, thus further reducing the communication overheads. 

\end{itemize}

Based on the above design rationales, Algorithm~\ref{alg:ProperDP} presents the main procedures of \textsf{DPCrowd} at a distributed server SP$_i$. In the following, we describe the main components with detailed procedures.

\begin{algorithm}[htb]\footnotesize
\caption{\textsf{DPCrowd}}
\label{alg:ProperDP}
\begin{algorithmic}[1]
\REQUIRE
\begin{tabular}[t]{p{0mm}l}
&$D_{i,t}$: Partial dataset crowd-sourced at SP$_i$ at timestamp $t$,\\
&$\varepsilon$: privacy budget,\\
&$T_s$: maximum number of sampling timestamps.
\end{tabular}
\ENSURE $r_i (t)$: Released statistics of SP$_i$ at timestamp $t$;

\FOR {each timestamp $t=1, \ldots, T$}
\STATE Obtain raw aggregate statistics $x_i (t)=f(D_{i,t})$;
\IF {$t$ is a sampling point \&\& $\text{numSamples}_i<T_s$}
\STATE $z_i (t) \leftarrow$ perturb $x_i (t)$ by \textbf{Laplace Perturbation};
\STATE $\text{numSamples}_i++$; ~//\textbf{\textit{Number of sampling timestamps}}
\STATE Estimate prior $\overline{x}_i (t)$ and message $\text{msg}_i (t)$ from \textbf{KCIF-Prediction};
\STATE Broadcast the message $\text{msg}_i (t)$;
\STATE Receive messages $\text{msg}_j (t)$ from one-hop neighbors $j \in N_i$;
\STATE Estimate posterior $\hat{x}_i (t)$ from \textbf{KCIF-Update};
\STATE Adjust sampling rate by \textbf{Adaptive Sampling};
\STATE Release posterior estimation as $r_i (t)$ $\leftarrow$ $\hat{x}_i (t)$;
\ELSE
\STATE $z_i (t)$ $\leftarrow$ $r_i (t-1)$;
\STATE Estimate prior $\overline{x}_i (t)$; ~//\textbf{\textit{No message broadcast}} 
\STATE Receive messages $\text{msg}_j (t)$ from one-hop neighbors $j \in N_i$;
\STATE Estimate posterior $\hat{x}_i (t)$ from \textbf{KCIF-Update};
\STATE Release posterior estimation as $r_i (t)$ $\leftarrow$ $\hat{x}_i (t)$;
\ENDIF
\ENDFOR
\end{algorithmic}
\end{algorithm}

\subsubsection{Laplace Perturbation}\label{sec: Laplace}

To realize user-level $\varepsilon$-DP at each server, the basic idea is to apply Laplace mechanism with different budget $\varepsilon(t)$ to inject Laplace noise to aggregate statistics at each time $t$, while keeping the total privacy budget consumption $\sum\varepsilon(t)$ for the finite stream no more than $\varepsilon$.

(1) Local Data Aggregation: 
At each timestamp $t$, the server SP$_i$ obtains its aggregate statistic from its local crowd-sourced data (i.e., $x_i (t)=f(D_{i,t})$) and calculates its current observation coefficient $H_i (t)=\frac{|G_i (t)|}{n}$.

(2) Aggregate Data Perturbation: 
We adopt the Laplace mechanism to perturb the aggregate statistic $x_i (t)$ with a noise $\upsilon_i (t)$ drawn from the Laplace distribution $\mathcal{L}(\Delta_f/\varepsilon(t))$, where $\Delta f$ is the sensitivity of the aggregate function $f(\cdot)$ and $\varepsilon(t)$ is the DP budget allocated at current timestamp. Then, we can obtain a noisy statistical value $z_i (t)=x_i (t)+\upsilon_i (t)$, which satisfies $\varepsilon(t)$-DP.

Particularly, taking the population sum of a region as the statistic function, since each crowd-sourcing user is associated with one distributed server at the same time and whether an individual user appears at a certain region can change $x_i (t)$ by at most $1$, the sensitivity of the statistic function is then $\Delta_f=1$. The Laplace noise $\upsilon_i (t)$ is drawn from Laplace distribution $\mathcal{L}(1/\varepsilon(t))$, where $\varepsilon(t)$ is the DP budget at time $t$. For a time-series $x_i (t)$ with the time length of $T$, according to the sequential composition theorem, the privacy budget can be simply allocated as $\varepsilon(t)=\varepsilon/T$ at each time $t$ to meet the requirement of user-level $\varepsilon$-DP. However, much smaller $\varepsilon(t)$ would lead to larger amplitude of noise and worsened utility. Therefore, instead of uniform allocation of privacy budget $\varepsilon(t)=\varepsilon/T$ in the finite stream, we adaptively perturb the statistics at different timestamps by allocating different privacy budget according to the dynamic changes of $x_i (t)$ as described in Section \ref{sec: Basic Idea}. The detailed adaptive privacy budget allocation scheme will be introduced later in Section~\ref{sec: Adaptive Sampling}.

\textbf{Remark 1.} We make an assumption that each crowd-sourcing user can be associated with only one distributed server at the same time in the system model of Section III. Without loss of generality, this assumption can be relaxed as each user can be associated with at most $c$ distributed servers at the same time. In such a case, the sensitivity can be set as $\Delta_f=c$ to increase the amount of perturbation noise in our algorithms to provide sufficient privacy preservation for any crowd-sourcing user.

\textbf{Remark 2.} Our work emphasizes designing a privacy-preserving mechanism with a given parameter $\varepsilon$ while improving data utility. However, some work~\cite{cao2018quantifying,song2017pufferfish} argued that DP on correlated data could offer $\varepsilon'$-DP (where $\varepsilon<\varepsilon' \ll T\varepsilon$ for general correlations), a weaker privacy guarantee when adopting the personal data principle~\cite{li2016differential,cao2018quantifying} in the different understanding of privacy~\cite{mcsherrypost1}. Then, the privacy parameter $\varepsilon$ can be simply scaled down (by no more than $T$ times) according to the temporal correlation degree in the statistical results to satisfy stronger privacy protection under the personal data principle.

Combining with the original observation process in Eq.~(\ref{eq:sensing}), 
the noisy local statistics $z_i (t)$ can be further represented as
\begin{align}\label{eq:private_observation}
z_i (t)=H_i (t)  r(t)+v_i (t)+\upsilon_i (t)=H_i (t)  r(t)+o_i (t).
\end{align}
Here, $o_i (t)$ denotes the overall observation noise at $t$, which equals to the sum of two independent noise: the original observation noise $v_i (t)$ with the variance $\mathrm{Var}(v_i (t))=2(\frac{\Delta_f}{\varepsilon(t)})^2$, and the privacy-preserving noise $\upsilon_i (t)$ with the variance $\mathrm{Var}(\upsilon_i (t))=(H_i (t))^2 Q$. 
Then, $z_i (t)$ can be further post-processed and shared with other servers to jointly estimate the true statistics $r(t)$ later.

\subsubsection{KCIF-Based Estimation}\label{sec:KCF}

To collaboratively estimate the true statistics with consensus, each server not only needs to conduct prior estimation according to its own knowledge, but also corrects the prior estimation via messages exchange. Based on the non-private solution in Section~\ref{sec: Non-private Solution}, we adopt a KCIF-based estimation mechanism to fuse the information from distributed servers, thus achieving both high utility and consensus.

(1) Noise Model of Kalman Filter: Generally speaking, Kalman filter achieves the optimal posterior estimation when the measurement noise follows the Gaussian distribution. Fortunately, as proved in~\cite{fan2014adaptive}, Kalman filter works effectively on the noise with Laplace distribution $\mathcal{L}(\Delta_f/\varepsilon(t))$ when the variance parameter $R$ in Kalman filter satisfies $R \propto 2(\frac{\Delta_f}{\varepsilon(t)})^2$. That is to say, we can use a Gaussian distribution $\mathcal{N}(0, R)$ to approximate the Laplace distribution $\mathcal{L}(\Delta_f/\varepsilon(t))$ for privacy preservation. Thus, according to Eq.~(\ref{eq:private_observation}), to achieve minimum variance posterior estimate under both the observation noise and the privacy-preserving noise, the optimal value $\hat{R}_i (t)$ for Kalman filter can be set as
\begin{align}\label{eq:approx}
\hat{R}_i (t) \propto \alpha\cdot{2(\frac{\Delta_f}{\varepsilon(t)})^2+(H_i (t))^2 \cdot Q},
\end{align}
where $\alpha$ is an adjustable proportional coefficient. This approximation has also been verified in our experiments in Section~\ref{sec:evaluation}. 

(2) KCIF-Prediction: 
KCIF-Prediction maintains a prior estimation $\overline{x}_i (t)$ for each server SP$_i$. It can be initialized as
\begin{align}
\overline{x}_i (0)={x_i (0)}/{H_i (0) }.
\end{align}
After that, according to Eq.~(\ref{eq:state}), the prior estimation can be predicted as its previous estimation.
\begin{align}
\overline{x}_i (t)=A \hat{x}_i (t-1).
\end{align}
In addition, according to the standard Kalman filter, the prior estimation error covariance $P_i (t)$ of SP$_i$ can be predicted as
\begin{align}
P_i (t)=A^2 M_i(t-1) +Q,
\end{align}
where $M_i (t-1)$ is the posterior error covariance at time $t-1$ and $Q$ is the variance of process noise in Eq.~(\ref{eq:state}). 
The posterior error covariance can be initialized as $M_i (0)=(H_i (t))^2/\hat{R}_i (t)$, where $\hat{R}_i (t)$ is set according to Eq.~(\ref{eq:approx}). 


(3) Message Exchange: 
After perturbation and prediction, each server exchanges messages with their neighbors in one hop for collaborative estimation. The message $\text{msg}_i (t)$ encapsulated from SP$_i$ consists of three parts: the prior estimation $\overline{x}_i (t)$, the weighted statistics $u_i (t)$, and the information matrix $U_i (t)$. In particular, $u_i(t)$ and $U_i (t)$ can be computed as
\begin{align}\label{eq:weight}
u_i (t)&={(H_i (t) \cdot z_i (t))}/{\hat{R}_i (t)},\mbox{ and}\\
U_i (t)  &=(H_i (t))^2 / \hat{R}_i (t).
\end{align}
After that, $\text{msg}_i (t)=(\bar{x}_i (t),u_i (t), U_i (t))$ is broadcasted to the directed neighbors. $\text{msg}_i (t)$ only contains sanitized information of SP$_i$'s aggregate statistics over $D_{i,t}$, which do not leak the privacy.

(4) KCIF-Update: 
Receiving the messages from direct neighbors $j\in N_i$, SP$_i$ first sums up the weighted aggregation $u_j (t)$ and the weighted information matrices $U_j (t)$ as follows.
\begin{align}
y_i (t)=\sum_{j\in N_i \bigcup \{i\}} u_j (t),\\
Y_i (t)=\sum_{j\in N_i \bigcup \{i\}} U_j (t).
\end{align}
Then, combining with the prior estimation error covariance $P_i$, SP$_i$ will compute both the posterior estimation error covariance $M_i(t)$ and consensus gain $C_i(t)=\gamma_i (t) P_i(t)$ in Kalman-consensus information filter, respectively.
\begin{align}\label{eq:kgain}
M_i (t)&=1/(1/{P_i(t)}+Y_i(t)),\\
C_i(t)&=\gamma_i (t) P_i(t)=\beta P_i (t) /(|{P_i(t)}|+1),
\end{align}
where $\beta>0$ is a relative small constant with the order of the time step size in discretization of the continuous time process. 

Finally, according to the Kalman-consensus information filter, the posterior estimation $\hat{x}_i (t)$ at SP$_i$ can be computed as
\begin{align}\label{eq:posterior}
&\hat{x}_i (t)=\bar{x}_i (t)+ \\ \nonumber
 & M_i (t) (y_i(t)-Y_i(t)\bar{x}_i(t))+C_i(t) \sum_{j \in N_i}(\bar{x}_j (t)-\bar{x}_i (t)).
\end{align}
With the correction of standard Kalman estimation term controlled by the posterior estimation error covariance $M_i (t)$ and the consensus term controlled by consensus gain $C_i(t)=\gamma_i P_i(t)$ in Eq.~(\ref{eq:posterior}), the posterior estimations of true statistics at each distributed server will gradually reach both accuracy and consensus.

\subsubsection{Adaptive Sampling based Intermittent Communication}\label{sec: Adaptive Sampling}

According to the design rationale in Section~\ref{sec: Basic Idea}, a sampling based intermittent communication strategy can provide the following benefits for \textsf{DPCrowd}:

\begin{itemize}

    \item \textbf{Communication Efficiency.} Considering the signal sparsity in streams, despite only one-hop communication between neighboring servers, the continuous message exchanges at all timestamps of KCIF-based estimation in Section~\ref{sec:KCF} seems to be communication expensive, in terms of the length $T$ of the finite data stream. One common method of communication reduction is to reduce the communication frequency via sampling.
	
    \item \textbf{Privacy Budget Allocation.} As described in Section~\ref{sec: Laplace},to achieve user-level $\varepsilon$-DP for a finite stream with the time length of $T$ timestamps, one simple idea is to uniformly allocate the total privacy budget $\varepsilon$ for all $T$ timestamps. Then, if $T$ is large, the average privacy budget $\varepsilon(t)$ used for each timestamp will be small and lead to the low utility of estimation at each server. To enhance the estimation accuracy, one key idea is to reduce the noise addition by selectively allocating more privacy budget at some sampling timestamps and approximating aggregation results at non-sampling timestamps with previous estimations without privacy budget consumption.

\end{itemize}

Combining the above ideas, we propose to apply the sampling based intermittent communication strategy to both reduce the communication frequency in KCIF-based estimation (Section~\ref{sec:KCF}) and save up the privacy budget in Laplace perturbation (Section~\ref{sec: Laplace}), without significantly affecting the estimation utility. The basic idea is that, only at the sampling timestamps, each distributed server allocates privacy budget and sends DP protected messages to its one-hop neighbors (Lines 3$\sim$11 in Algorithm~\ref{alg:ProperDP}); while at the non-sampling timestamps, each distributed server approximates the prior estimation with previous posterior estimation without privacy budget and does not send out messages (Lines 12$\sim$17 in Algorithm~\ref{alg:ProperDP}).

One straightforward solution is the fixed-rate sampling strategy. Given a predefined sampling interval $I$, each server SP$_i$ will periodically sample and perturb its aggregate statistics $x_i(t)$ by Laplace mechanism. The total number of sampling and communication timestamps for each server is $T_s=T/I$, and the privacy budget for each sampling timestamp $t$ is $\varepsilon(t)=\varepsilon/T_s$. The choice of sampling rate (or sampling interval $I$) has the following impacts:

\begin{itemize}

    \item When $I$ is small, the communication frequency is high with less consensus error, but $\varepsilon(t)$ is small and too many sampling timestamps will lead to much perturbation error.

    \item When $I$ is large, communication frequency and perturbation error can be reduced, but a large sampling gap will cause larger approximation and concensus error.

\end{itemize}

Thus, both sampling and non-sampling timestamps will cause errors and may have an impact on the overall accuracy. To achieve higher accuracy, it requires to seek the optimal sampling rate according to some prior knowledge about the data, which is, however, not applicable in real-time crowd-sourced data. A good sampling strategy should adjust the sampling rate to minimize the two errors with given privacy budget $\varepsilon$. 
We apply the adaptive-rate sampling strategy based on PID control in FAST~\cite{fan2014adaptive} to adjust the sampling rate based on the dynamics of the statistics. In particular, each server SP$_i$ dynamically adjusts it own sampling intervals $I_i$ according to the real-time error between the prior and posterior estimations. The details can be referred to~\cite{fan2014adaptive}.



\subsection{\textsf{DPCrowd}+: Real-time Decentralized Statistical Estimation for Multi-dimensional and Infinite Data Streams} \label{sec:extension}

So far, \textsf{DPCrowd} focuses on the crowd-sourced data with one-dimension and limited length, e.g., distribute servers only care about the true statistic of a particular region in a particular time period. Nonetheless, in reality, typical crowd-sourced data are multi-dimensional (even high-dimensional) and infinitely generated. For example, the servers need to estimate the true statistics over all regions uninterruptedly. In such a case, there are two challenges for \textsf{DPCrowd}:
\begin{itemize}
\item \emph{Without consideration of the sparsity, the same amount of noise would be added to all regions and ruin the utility of those regions with a small value.}
\item \emph{Simple event-level DP or user-level DP on finite streams will be not applicable to infinite data streams as the total privacy budget accumulates with the time.}
\end{itemize}

To address the aforementioned challenges, we further propose \textsf{DPCrowd}+, a more applicable privacy-preserving decentralized statistical estimation mechanism for multi-dimensional infinite crowd-sourced data streams.

\subsubsection{Data Modeling} \label{sub:data modeling}

Before introducing the details of \textsf{DPCrowd}+, we first extend the data model in Section~\ref{sec: ProperDP} to a multi-dimensional scenario. Similar to \textsf{DPCrowd}, multi-dimensional true statistics $\mathbf{r} (t)$ can be modeled and formulated by vectors as follows
\begin{align}\label{eq:multistate}
\mathbf{r} (t+1)=\mathbf{A(t)} \cdot \mathbf{r} (t)+\bm{\omega} (t),
\end{align}
where $\mathbf{r} (t)$ are $d$-dimensional vector and each element $r^k(t)$ represents the true statistics of region $k$ at timestamp $t$. $\mathbf{A(t)}=[a_{i,j}(t)]_{d\times d}$ can be a $d \times d$ time-varying transition matrix, which models the correlations among dimensions (e.g., regions). Particular, matrix ${A(t)}$ at time $t$ may be a general linear transformation matrix or a Markov matrix (stochastic matrix). For a Markov matrix, each element $a_{i,j}(t)$ may represent the probability that a user may transit from region $i$ to $j$ in a city or from website $i$ to $j$ during Internet surfing at different time $t$~\cite{fan2014monitoring}. For simplicity, we consider $\mathbf{A(t)}=\mathbf{A}$ is a constant linear transition matrix, which can be trained from the history data. 

Also, $\bm{\omega} (t)=(\omega^1 (t), \omega^2 (t),\ldots,\omega^d (t))$ is the $d$-dimensional process noise that follows the $d$-dimensional Gaussian distribution, i.e., $\bm{\omega} (t)\sim \bm{N}(0,\mathbf{Q})$, where $\mathbf{Q}=[Q_{i,j}]_{d \times d}$ is the covariance matrix and each element $Q_{i,j}$ is a scalar value and represented as the covariance of $\omega_d (t)$. Although the constant matrix $\mathbf{A(t)}$ represents the general steady correlations among dimensions, the time-varying process noise $\bm{\omega} (t)$ can reflect the dynamic changes of dimensional correlations and sparsity. For example, the unusual social events may lead to the changes of traffic patterns or webpage views at a certain period. For simplicity, we assume that the process noise of each dimension is independent of each other, i.e., $Q_{i,j}=0,~i \neq j$, then its covariance matrix $\bm{Q}$ can be simplified as
\begin{align}
\mathbf{Q}=\mathbf{diag}(Q_{1,1}, Q_{2,2},\ldots,Q_{d,d}).
\end{align}

Meanwhile, we assume the aggregate $d$-dimensional statistical vector $\mathbf{x}_i (t)$ at each distributed server also follows a linear equation as
\begin{align}\label{eq:aggregation}
\mathbf{x}_i (t)=H_i (t) \cdot \mathbf{r} (t)+\mathbf{v}_i (t),
\end{align}
where $H_i (t)$ at slot $t$ is a scalar value represents the linear observation coefficients as Eq.~(\ref{eq:observ coefficient}) and $\mathbf{v}_i (t)=(v_i^1 (t), v_i^2 (t),\ldots,v_i^d(t))$ is a $d$-dimensional observation noise. We assume that each element of $\mathbf{v}_i (t)$ is independent from each other and follows the zero-mean Gaussian distribution with the variance $\mathbf{R}_i (t)=(H_i (t))^2 \mathbf{Q}$. Then, there is $v_i^k (t)\sim N(0, R_i^k (t))$, where $R_i^k (t)=(H_i (t))^2 Q_{k,k} ~(k=1,2,\ldots,d)$.

\subsubsection{\textsf{DPCrowd}+ based on Dynamic Grouping}\label{sub:promiseDP frame}

The workflow of \textsf{DPCrowd}+ on each distributed server is shown in Fig.~\ref{fig:PromiseDP}. Compared with \textsf{DPCrowd} shown in Fig.~\ref{fig:ProperDP}, \textsf{DPCrowd}+ includes two more components: (i) dynamic grouping and (ii) adaptive budget allocation, inspired by~\cite{rescuedp2016}. In the dynamic grouping mechanism, similar regions with small values will be grouped together to avoid the overdose of noise. In particular, the correlations of different regions are calculated based on the previously published results to guarantee privacy. Thus, high-dimensional aggregate statistics may be grouped into several groups and different Laplace noise is then added to each group to strike a good balance between privacy and utility. The adaptive budget allocation mechanism is responsible for allocating the privacy budget to make sure $w$-event DP is satisfied in the infinite aggregation stream. Thus, besides adopting adaptive sampling to reduce the noise, the privacy budget for each timestamp should be carefully allocated to meet the requirement. The privacy budget will be allocated according to the dynamics of grouped regions to improve the utility. Since the dynamic grouping and adaptive allocation mechanism are exactly the same as those in RescueDP algorithm. Please refer to~\cite{rescuedp2016} for more detail.

	

\begin{figure}[htbp]\vspace{-0.3cm}
	\centering\epsfig{file=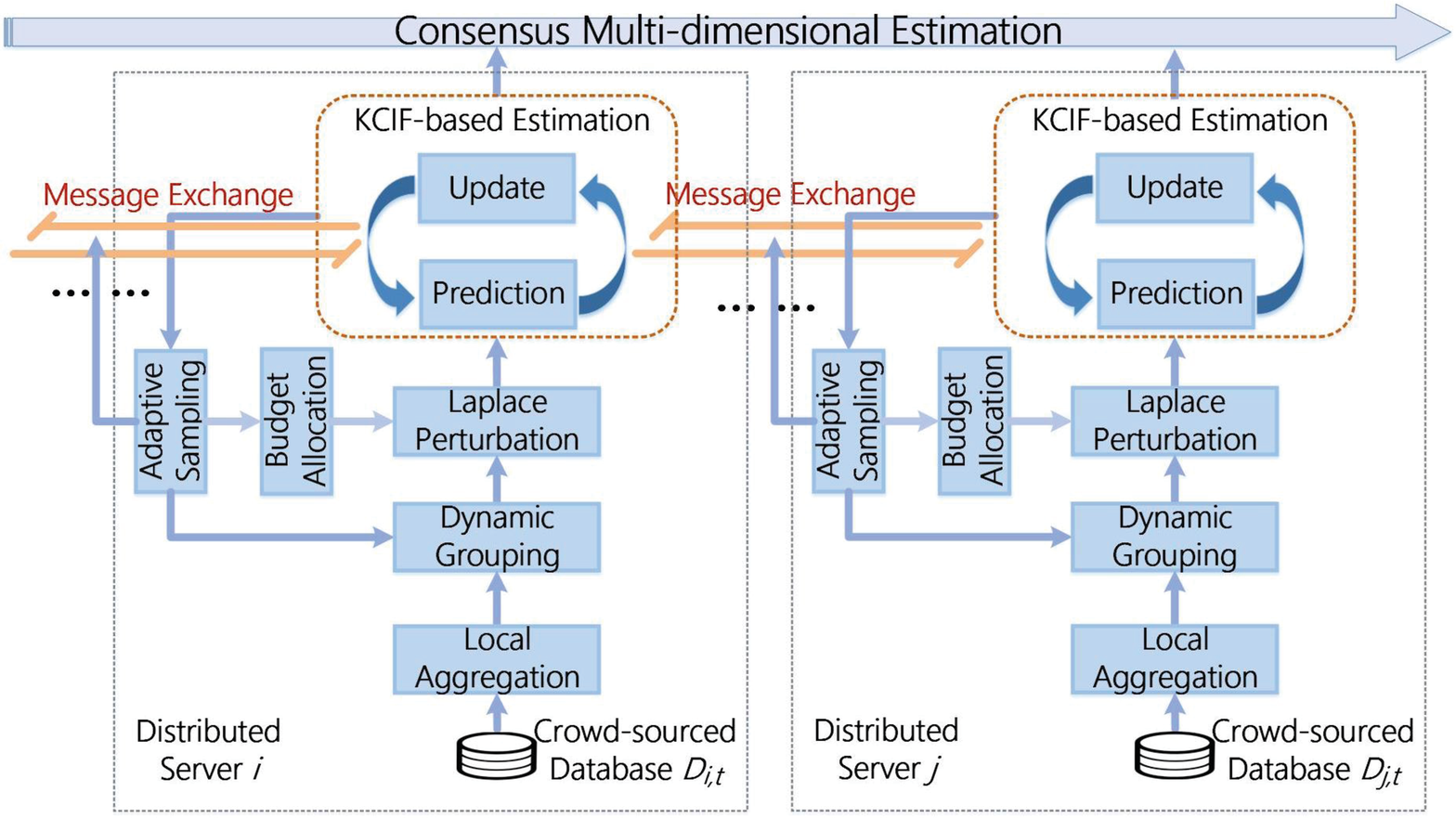, width=0.4\textwidth}
	\caption{A Framework of \textsf{DPCrowd}+ \label{fig:PromiseDP}}
\vspace{-0.3cm}
\end{figure}%

Algorithm~\ref{alg:PromiseDP} presents the main procedures of \textsf{DPCrowd}+. It should be noted, other components, i.e., Laplace perturbation, KCIF-based estimation, adaptive sampling will also be adjusted to incorporate dynamic grouping and adaptive budget allocation.

\begin{algorithm}[htbp]\footnotesize
	\caption{\textsf{DPCrowd}+}
	\label{alg:PromiseDP}
	\begin{algorithmic}[1]
		\REQUIRE
		\begin{tabular}[t]{p{0mm}l}
			&$D_{i,t}$: partial databases for $d$ regions of SP$_i$ at timestamp $t$,\\
			&$\varepsilon$: privacy budget,\\
		\end{tabular}
		\ENSURE $\mathbf{r}_i$: Released $d$-dimensional statistics of SP$_i$ at timestamp $t$.
		
		\FOR {each timestamp $t=1,\ldots,T$}
		\IF {$t$ is a sampling point}
		\STATE Add sampling regions into set $W_i (t)$;
		\STATE Group regions in $W_i (t)$ by \textbf{Dynamic Grouping};
		\FOR {each region $k$ do}
		\IF {$k \in W_i (t)$}
		\STATE Obtain privacy budget from \textbf{Adaptive Budget Allocation};
		\STATE $z_i (t) \leftarrow$ perturb $x_i^k (t)$ by \textbf{Laplace Perturbation};
		\ELSE
		\STATE $x_i^k (t) \leftarrow r_i^k (t-1)$;
		\ENDIF
		\ENDFOR
		\STATE Obtain $prior$ and $message_i(t)$ from \textbf{KCIF-Prediction};
		\STATE Broadcast $message_i (t)$ to neighbors in $N_i$;
		\STATE Receive messages from all neighbors in $N_i$;
		\STATE Obtain $posterior$ form \textbf{KCIF-Update};
		\STATE $\mathbf{r}_i (t) \leftarrow posterior$;
		\FOR {$k \in W_i (t)$}
		\STATE Adjust sampling rate by \textbf{Adaptive Sampling};
		\ENDFOR
		\ELSE
		\STATE Obtain $prior$ and $message_i (t)$ from \textbf{KCIF-Prediction};
		\STATE Receive messages from $N_i$; ~\textbf{\textit{//No message broadcast}}
		\STATE Estimate posterior $\hat{x}_i (t)$ from \textbf{KCIF-Update};
		\STATE Release posterior estimation as $\mathbf{r}_i (t)$ $\leftarrow$ $\mathbf{\hat{x}}_i (t)$;
		\ENDIF
		\ENDFOR
	\end{algorithmic}
\end{algorithm}

\section{Algorithm Analysis}\label{sec:analysis2}

In this section, we conduct the theoretical analysis of our scheme in terms of privacy protection and utility, as well as communication latency and cost.

\subsection{Privacy Analysis}

\begin{theorem}\label{proof:DPCrowd}
	\textsf{DPCrowd} in Algorithm~\ref{alg:ProperDP} guarantees user-level $\varepsilon$-DP for the registered crowd-sourcing users for a finite stream with the length of time $T$ at each server SP$_i$.
\end{theorem}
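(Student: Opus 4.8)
The plan is to show that the full mechanism \textsf{DPCrowd}, viewed as a map from the finite stream $(D_{i,1},\ldots,D_{i,T})$ at server SP$_i$ to the sequence of released estimates $(r_i(1),\ldots,r_i(T))$, satisfies $\varepsilon$-DP in the user-level sense; i.e., for any two finite streams that differ by the complete record of a single user (who may appear in up to $T$ timestamps), the output distributions are within a multiplicative factor $e^\varepsilon$. The argument decomposes the algorithm into the only step that touches raw data, namely Laplace Perturbation, and everything downstream (KCIF-Prediction, message exchange, KCIF-Update, adaptive-sampling rate adjustment), which is pure post-processing of the perturbed values $z_i(t)$ and of quantities (prior estimates, covariances) that do not depend on $D_{i,t}$ except through past perturbed values.

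First I would fix a user $u$ and the set $\mathcal{T}_u \subseteq \{1,\ldots,T\}$ of timestamps at which $u$ registers at SP$_i$; at any sampling timestamp $t$, removing or altering $u$'s record changes $x_i(t)=f(D_{i,t})$ by at most $\Delta_f = 1$ (Section~\ref{sec: Laplace}, Remark~1 for the general bound $\Delta_f=c$). At each sampling timestamp $t$ the server draws fresh Laplace noise $\upsilon_i(t)\sim\mathcal{L}(\Delta_f/\varepsilon(t))$ independent of all other randomness, so by Theorem~\ref{theorem: LM} the release of $z_i(t)$ alone is $\varepsilon(t)$-DP. Next I would invoke Theorem~\ref{theorem: Sequential Composition} over the sampling timestamps: since only sampling timestamps consume privacy budget and the adaptive sampling rule (Section~\ref{sec: Adaptive Sampling}) caps the number of sampling timestamps at $T_s$ while the adaptive budget allocation is designed so that $\sum_{t\ \text{sampled}} \varepsilon(t) \le \varepsilon$, the composed release of all perturbed statistics $\{z_i(t)\}$ along the whole finite stream is $\varepsilon$-DP. (Here one uses that the decision of which timestamps are sampling points, and the value $\varepsilon(t)$ allocated, depend only on previously released posterior estimates — themselves post-processed from earlier $z_i$'s — and not on the raw current data, so the composition bound applies without leakage through the sampling schedule.)

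Then I would argue that every remaining operation in Algorithm~\ref{alg:ProperDP} is post-processing in the sense of Theorem~\ref{theorem: Post-Processing}. At a sampling timestamp, the broadcast message $\text{msg}_i(t)=(\overline{x}_i(t),u_i(t),U_i(t))$ is a deterministic function of $z_i(t)$, $H_i(t)$, $\hat{R}_i(t)$ and of $\overline{x}_i(t)$, which itself is $A\hat{x}_i(t-1)$, i.e., a function of previously released estimates; the KCIF-Update producing $\hat{x}_i(t)=r_i(t)$ combines $\overline{x}_i(t)$, $M_i(t)$, $C_i(t)$ and the incoming neighbor messages, none of which re-access $D_{i,t}$. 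At a non-sampling timestamp, $z_i(t)\leftarrow r_i(t-1)$ uses no raw data at all, and no message is even broadcast. Hence the entire transcript produced by SP$_i$ — and a fortiori the released sequence $(r_i(1),\ldots,r_i(T))$ — is obtained by applying a (randomized, but data-independent-given-$z_i$) post-processing map to the $\varepsilon$-DP object $\{z_i(t)\}_t$, so it is $\varepsilon$-DP by Theorem~\ref{theorem: Post-Processing}. Finally, I would note the adversary model (Section III): the honest-but-curious unregistered servers and third parties observe only what SP$_i$ broadcasts, which is a subset of this post-processed transcript, so their view is also $\varepsilon$-DP for any registered user of SP$_i$, completing the proof.

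The main obstacle I anticipate is the subtlety in the sequential-composition step: one must verify carefully that the adaptive sampling schedule and adaptive budget allocation are \emph{self-consistent} as a function of the data only through already-privatized outputs, so that neither the identities of the sampling timestamps nor the per-timestamp budgets $\varepsilon(t)$ constitute an additional channel of leakage, and that the guarantee $\sum \varepsilon(t)\le\varepsilon$ holds deterministically (not just in expectation) along every possible run. Handling the $w$-neighboring worst case — a user appearing at all $T$ sampling timestamps — is exactly what forces the budget split, and getting that accounting airtight (rather than the routine Laplace/post-processing invocations) is where the real work lies.
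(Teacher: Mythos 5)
Your proposal is correct and follows essentially the same route as the paper's own proof: each sampling-timestamp release is an instance of the Laplace mechanism (Theorem~\ref{theorem: LM}), sequential composition (Theorem~\ref{theorem: Sequential Composition}) over the at most $T_s$ sampling timestamps bounds the total budget by $\varepsilon$, and all remaining steps (KCIF prediction/update, message exchange, sampling-rate adjustment) are post-processing of the perturbed statistics (Theorem~\ref{theorem: Post-Processing}). The one place you go beyond the paper is in flagging that the adaptive sampling schedule and per-timestamp budgets must themselves depend on the data only through already-privatized outputs — a real subtlety the paper's proof silently assumes (it simply takes the uniform allocation $\varepsilon/T_s$ per sampling point) — so your added care is a strengthening rather than a departure.
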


\begin{proof}

Given the maximum number of sampling timestamps $T_s$ and the total privacy budget $\varepsilon$, each Laplace perturbation adds noise drawn from the Laplace distribution $\mathcal{L}(\Delta f\cdot T_s/\varepsilon)$ at each sampling timestamp, which satisfies $\varepsilon/T_s$-DP for each crowd-sourcing user according to the Theorem~\ref{theorem: LM}. Then, based on Theorem~\ref{theorem: Sequential Composition}, after $T_s$ sampling timestamps, Laplace perturbations on the aggregate statistics satisfies $\varepsilon$-DP for each user for the whole finite stream.

Among all processes in \textsf{DPCrowd}, only the process of Laplace perturbation can access to the true aggregate statistic at each timestamp at each server, and other processes are all conducted on the perturbed statistics. Thus, according to the post-processing property in Theorem \ref{theorem: Post-Processing}, \textsf{DPCrowd} satisfies user-level $\varepsilon$-DP for the finite stream with length $T$ at each server.
$\hfill\blacksquare$
\end{proof}

\begin{theorem}
\textsf{DPCrowd}+ in Algorithm~\ref{alg:PromiseDP} guarantees $w$-event $\varepsilon$-DP for the registered crowd-sourcing users for an infinite stream at each distributed server SP$_i$.
\end{theorem}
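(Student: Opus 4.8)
The plan is to mirror the structure of the proof of Theorem~\ref{proof:DPCrowd}, but replace the sequential composition argument over the whole finite stream with the sliding-window composition argument that underlies $w$-event privacy. First I would observe that, exactly as in \textsf{DPCrowd}, the only step in Algorithm~\ref{alg:PromiseDP} that touches the true aggregate statistics $x_i^k(t)$ is the Laplace Perturbation step; KCIF-Prediction, message exchange, KCIF-Update, Dynamic Grouping (which reads only previously published, already-private estimates), and Adaptive Sampling all act on perturbed or released quantities. Hence by the Post-Processing property (Theorem~\ref{theorem: Post-Processing}) it suffices to show that the sequence of Laplace perturbations performed by SP$_i$ satisfies $w$-event $\varepsilon$-DP, and the rest of the algorithm inherits it.

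Next I would verify the per-timestamp guarantee. At a sampling timestamp $t$, for each group $g$ in the dynamic grouping $\mathbf{G}_i(t)$, the algorithm releases $z^g(t)=\sum_{k\in g}x_i^k(t)+\mathrm{Lap}(\Delta f/\min_{k\in g}\varepsilon_i^k(t))$; since each user contributes at most one $1$ across the whole row of $D_{i,t}$, the sensitivity of $\sum_{k\in g}x_i^k(\cdot)$ is $\Delta f=1$, and moreover a single user affects at most one group, so by Theorem~\ref{theorem: LM} the collection of group releases at time $t$ satisfies $\max_k\varepsilon_i^k(t)$-DP, and in particular $\varepsilon_i^k(t)$-DP per region $k$. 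Then the crux: I would invoke the Adaptive Budget Allocation invariant. The allocation rule caps the budget at each sampling timestamp by $\widetilde\varepsilon_i^k(t_s)=\varepsilon-\sum_{\tau=t_s-w+1}^{t_s-1}\varepsilon_i^k(\tau)$ (with $\varepsilon_i^k(\tau)=0$ at non-sampling timestamps), which guarantees that for every region $k$ and every window of $w$ consecutive timestamps $[j,j+w-1]$ we have $\sum_{\tau=j}^{j+w-1}\varepsilon_i^k(\tau)\le\varepsilon$. By Sequential Composition (Theorem~\ref{theorem: Sequential Composition}) applied within any such window, the releases concerning region $k$ over that window are $\varepsilon$-DP; since two $w$-neighboring stream prefixes differ only inside a window of width at most $w$, and a differing record in $D_{i,t}$ perturbs only the groups/regions it belongs to, the overall mechanism at SP$_i$ satisfies $w$-event $\varepsilon$-DP by Definition~\ref{def:w-event}. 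Finally I would re-apply Post-Processing to conclude that the released $\mathbf{r}_i(t)$ produced by Algorithm~\ref{alg:PromiseDP} satisfies $w$-event $\varepsilon$-DP at each server.

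The main obstacle I anticipate is making the budget-window invariant fully rigorous rather than merely plausible: one must argue that the recursive definition $\varepsilon_i^k(t_s)=\min(p_i^k\widetilde\varepsilon_i^k,\varepsilon_{\max})$ together with the remaining-budget bookkeeping genuinely enforces $\sum_{\tau\in[j,j+w-1]}\varepsilon_i^k(\tau)\le\varepsilon$ for \emph{every} window, not just windows aligned with sampling points — this is the same subtlety handled in RescueDP~\cite{rescuedp2016}, so I would lean on that analysis and state it as a lemma. A secondary point needing care is the grouping step: I must confirm that the grouping decision is a deterministic function of previously released (hence already private) estimates and of public parameters only, so that it does not itself consume privacy budget; given the consensus property of the KCIF estimates this holds, but it should be stated explicitly before the Post-Processing invocation.
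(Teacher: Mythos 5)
Your proposal is correct and follows essentially the same route as the paper's own proof: reduce via post-processing to the Laplace perturbation step, then argue that dynamic grouping plus the adaptive budget allocation enforces the per-window budget bound so that sequential composition within any $w$-window yields $w$-event $\varepsilon$-DP, deferring the detailed allocation invariant to the RescueDP analysis. The paper's version is merely a terser statement of the same argument (it cites \cite{rescuedp2016} for the whole perturbation claim), so your added detail on sensitivity, the window invariant, and the privacy-neutrality of the grouping step is a faithful elaboration rather than a different proof.
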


\begin{proof}

Similar to Proof of Theorem~\ref{proof:DPCrowd}, we prove \textsf{DPCrowd}+ satisfies $w$-event $\varepsilon$-DP if and only if the Laplace perturbation satisfies $w$-event $\varepsilon$-DP. In particular, according to~\cite{rescuedp2016}, the perturbed statistics satisfy $w$-event $\varepsilon$-DP, our adoption of multi-dimensional Laplace mechanism with both dynamic grouping and adaptive privacy budget allocation strategies would satisfy $w$-event $\varepsilon$-DP for each region $k$ at each server SP$_i$. Therefore, \textsf{DPCrowd}+ satisfied $w$-event $\varepsilon$-DP.
$\hfill\blacksquare$

\end{proof}
	
\subsection{Utility Analysis}

Without loss of generality, we mainly focus on the general error analysis of \textsf{DPCrowd} for one-dimensional data streams (\textsf{DPCrowd}+ can have similar conclusions) without considering the adaptive sampling mechanism. The mean square posterior estimation error of SP$_i$ at timestamp $t$ can be calculated as
\begin{align}
	\mathbb{E}[|\hat{x}_i(t)-x(t)|^2]=\mathbb{E}[(\hat{x}_i-x(t))(\hat{x}_i-x(t))],
\end{align}
which is also equal to the error variance matrix $M_i(t)$ for one-dimensional data (or the trace of the error covariance matrix for multi-dimensional data) of SP$_i$ according to Kalman consensus information filter~\cite{olfati2009kalman}.
Based on Algorithms~\ref{alg:nonprivate} and \ref{alg:ProperDP}, for one-dimensional case, we have
\begin{align}\label{eq:error analysis}
	M_i(t) &=1/(1/P_i(t)+Y_i (t))\\
    &=1/(1/P_i(t)+(H_i (t))^2/\hat{R}_i(t)),\\ \nonumber
	&=\frac{\hat{R}_i(t) P_i(t)}{\hat{R}_i(t) + (H_i(t))^2 P_i(t)}\\
	&=\frac{\hat{R}_i(t) (M_i(t-1)+Q)}{\hat{R}_i(t)+(H_i(t))^2 (M_i(t-1)+Q)},
\end{align}
where $M_i (t)$ is initialized as $M_i(0)={(H_i (0))^2}/{\hat{R}_i (0)}$. From Eq.~(\ref{eq:error analysis}), we can have the following observations about the estimation error.

\begin{itemize}

    \item The posterior estimation error is decided by the observation noise variance $\hat{R}_i (t)$ (including that caused by privacy-preserving noise), the process noise variance $Q$, and the coefficient $H_i (t)$.

    \item With the increase of $\varepsilon$, observation noise variance $\hat{R}_i (t)$ decreases, and so does the error variance. This also shows the general trade-off between utility and privacy in \textsf{DPCrowd}.
	
    \item As $\hat{R}_i (t)$ is the same order of $Q$, the posterior estimation error would increase with process noise variance $Q$. This implies the stream of statistics with more fluctuates (i.e., larger $Q$) would generally cause larger estimation error.

    \item Since $M_i(0)$ is iteratively substituted in the update of $M_i (t)$, it is not difficult to see that the posterior estimation error would become small when the coefficient $H_i (t)$ is large.

\end{itemize}

\subsection{Communication Latency and Cost}

In both \textsf{DPCrowd} and \textsf{DPCrowd}+, each distributed server SP$_i$ only exchanges messages with its one-hop neighbors at each timestamp. The communication latency is only $O(1)$ hop and the communication complexity is equal to $O(\sum\limits_{i=1}^m \|N_i\|)=O(m \cdot \overline{N})$ in terms of the number of distributed servers $m$ (network scale), where $\|N_i\|$ is the degree (or the cardinality of neighboring set $N_i$) of SP$_i$, and $\overline{N}$ is the average node degree of the network. Assuming that $\overline{N}=O(\log(m))$, the communication cost of both \textsf{DPCrowd} and \textsf{DPCrowd}+ is then $O(m \log (m))$, which is scalable in terms of the number of distributed servers $m$.

According to the adaptive sampling based intermittent communication in Section~\ref{sec: Adaptive Sampling}, each server only incurs message broadcasts at its sampling timestamps. In \textsf{DPCrowd}, suppose that the total timestamp length is $T$, then for the fixed sampling strategy, given the sampling interval $I$, the communication reduction ratio is $I/T$; while for the adaptive sampling strategy, given the maximal sampling timestamps of $T_s$, the communication reduction ratio is $T_s/T$.

\section{Performance Evaluation}\label{sec:evaluation}

We conducted extensive experiments on both synthetic and real-world datasets to demonstrate both the effectiveness and efficiency of our proposed algorithms \textsf{DPCrowd} and \textsf{DPCrowd}+.

\subsection{Simulation Setup}

\textbf{Datasets:} For single-dimensional data, we used one synthetic dataset and two real-world datasets as follows:.

\begin{itemize}
  \item \textsf{Linear} is a synthetic dataset consisting of $1000$ timestamps, which are generated according to the process model in Eq.~(\ref{eq:state}) 
  with the variance ${Q}$ as $10^5$. 
  \item \textsf{Flu}\footnote{\url{http://www.cdc.gov/flu}} is part of the weekly surveillance data of flu infection provided by the Influenza Division of the Center for Disease Control and Prevention. We extracted a time-series consists of $791$ timestamps for each weekly report.
\end{itemize}

For multi-dimensional data, we also used one synthetic dataset and one real-world dataset.

\begin{itemize}

  \item \textsf{Multi-Linear} is synthesized according to the process model of Eq.~(\ref{eq:multistate}). 
   It is a six-dimensional time-series with 1000 timestamps. The size of both the transition matrix $\bm{A}$ and covariance matrix $\bm{Q}$ is $6 \times 6$.

  \item \textsf{Multi-Flu} is a multi-dimensional version of \textsf{Flu} and contains the weekly outpatient death population of $51$ states in US for $441$ weeks between 2009 and 2017. We approximated the transition matrix $\bm{A}$ by frequency statistics and trained the optimal covariance matrix $\bm{Q}$ by genetic algorithm, in which the average relative error is used as the input of fitness function.

\end{itemize}

\textbf{Simulation Methodology:} We implemented all algorithms in Matlab for simulating the interactions among $m=50$ distributed servers in a fully decentralized network. The network topology is described by an evolving stochastic matrix $\mathcal{E}(t)$, which is randomly generated with various level of network density. Based on the network model in Section~\ref{sec: System Model}, the network density $\rho$ is defined as
\begin{align}\label{eq:density}
\rho=\frac{2*num_E}{m(m-1)},
\end{align}
where $m$ is the number of distributed servers and $num_E$ is the average number of edges in $\mathcal{E}(t)$.
The communication latency between any two servers is assigned as a random number follows a uniform distribution around $100$ms~\footnote{The runtime of core algorithms are much faster, e.g., Line $2 \sim 17$ in Algorithm 2 (\textsf{DPCrowd}) and Line $2 \sim 25$ in Algorithm 3 (\textsf{DPCrowd}+) consume less than $0.1$ms when executed on a real machine (Matlab R2018a, Win10, 8GB RAM, CPU i5-5200U).}. Besides, each distributed server is assigned a random observation coefficient of $H_i (t)$ sampled according to uniform distribution $U(0, 1)$. Finally, each server fuses the received data to correct its posterior estimation. The above processes are repeated until all timestamps of each dataset are touched.

\textbf{Comparison:} To show the effectiveness of our schemes, we also summarized, simulated and compared with the coral algorithms of relevant and typical benchmark schemes on differentially private streaming: \textsf{FAST}~\cite{fan2014adaptive}, \textsf{RescueDP}~\cite{rescuedp2016}, \textsf{BD/BA}~\cite{kellaris2014differentially}, and \textsf{PeGaSus}~\cite{chen2017pegasus}. We extended them to realize real-time decentralized statistical estimation by the straightforward whole-network broadcasting and averaging, discussed in Section~\ref{sec: alternatives}. For simplicity, we denote these extension schemes as \textsf{DFAST}, \textsf{DRescueDP}, \textsf{DBD/DBA}, and \textsf{DPeGaSus}, respectively. To fairly compare the utility, we also further extended \textsf{DBD/DBA}, and \textsf{DPeGaSus} to support multi-dimensional data streams and transformed \textsf{DPeGaSus} to meet the equivalent $w$-event level privacy level. For example, we extended $w$-event level \textsf{BD/BA} and event-level \textsf{PeGaSus} to meet the same $w$-event level privacy guarantee for \textsf{DPCrowd} or $w$-event level privacy for \textsf{DPCrowd}+. We also extended our \textsf{DPCrowd} into \textsf{DPCrowd$_w$} to compare the utility improvement of \textsf{DPCrowd+} by applying the basic \textsf{DPCrowd} independently on each dimension and each $w$-timestamp-long non-overlapping window of an infinite multi-dimensional stream. The detailed features of the main comparable schemes are listed in Table~\ref{table:schemes}.

\begin{table}[htbp]\fontsize{5pt}{\baselineskip}\selectfont
	\centering
    \begin{threeparttable}[b]
        \caption{Features of Main Comparable Schemes}\label{table:schemes}
	     \begin{tabular}{|l|c|c|c|c|} \hline
    		\textbf{Schemes} &\textbf{Architecture} &\textbf{Communication} &\textbf{Dimension Correlation} & \textbf{Privacy Level}    \\
    		\hline
    		\textsf{FAST}~\cite{fan2014adaptive} & Centralized & No communication  & Single dimension & user level   \\
    		\textsf{DFAST}~\cite{fan2014adaptive}\tnote{*} & Decentralized & Multi-hop$\times$Continuous  & Single dimension & user level   \\
    		\textsf{DRescueDP}~\cite{rescuedp2016}\tnote{*} & Decentralized & Multi-hop$\times$Continuous & Correlated dimensions & $w$-event level  \\
            \textsf{DBD/DBA}~\cite{kellaris2014differentially}\tnote{*} & Decentralized & Multi-hop$\times$Continuous & Correlated dimensions & $w$-event level  \\
            \textsf{DPeGaSus}~\cite{chen2017pegasus}\tnote{*} & Decentralized & Multi-hop$\times$Continuous & Independent dimensions & $w$-event level  \\
    		\textsf{DPCrowd} & Decentralized & One-hop$\times$Intermittent & Single dimension & user level  \\
    		\textsf{DPCrowd$_w$} & Decentralized & One-hop$\times$Intermittent &  Independent dimensions & $w$-event level  \\
    		\textsf{DPCrowd+} & Decentralized & One-hop$\times$Intermittent & Correlated dimensions & $w$-event level  \\
    		\hline
	   \end{tabular}
        \begin{tablenotes}
             \item[*] \textsf{DFAST}, \textsf{DRescueDP}, \textsf{DBD/DBA}, \textsf{DPeGaSus} are the decentralized extension (via broadcasting and averaging at each server) of centralized schemes \textsf{FAST}~\cite{fan2014adaptive}, \textsf{RescueDP}~\cite{rescuedp2016}, \textsf{BD/BA}~\cite{kellaris2014differentially}, and \textsf{PeGaSus}~\cite{chen2017pegasus}, respectively.
           \end{tablenotes}
    \end{threeparttable}
	\vspace{-3mm}
\end{table}

Moreover, we compared the performance of our schemes under different strategies and extensions mentioned before, such as the intermittent communication of \textsf{DPCrowd} framework under different sampling strategies (fixed-rate sampling vs. adaptive sampling).

\textbf{Metrics:} In terms of accuracy, we adopted the metric of average relative error (ARE) to measure the relative distance between the final estimation $\hat{x}_i(t)$ and the ground truth $r(t)$, $1\leq i \leq m,~ 1 \leq t \leq T$. The ARE is defined as
\begin{align}\label{eq:are}
ARE(\bm{\hat{x}},\bm{r})=\frac{1}{m}\frac{1}{T}\sum_{i=1}^m \sum_{t=1}^T \frac{|\hat{x}_i(t)-r(t)|}{\max(r(t),\delta)},
\end{align}
where $\delta$ is set as $1$ in case that $x(t)$ is $0$. As observed, smaller ARE means the estimation is more close to the ground truth and have better accuracy. With regard to consensus, we used the metric of average consensus error (ACE) to measure the closeness of estimations $\hat{x}_i (t)$ among distributed servers $1 \leq i \leq m$. The ACE is defined as
\begin{align}\label{eq:ace}
ACE=(\bm{\hat{x}})=\frac{1}{m}\frac{1}{T}\sum_{i=1}^m \sum_{t=1}^T {|\hat{x}_i(t)-\hat{x}_{average}(t)|},
\end{align}
where $\hat{x}_{average}(t)$ is average estimation of all distributed servers. Similarly, smaller ACE means better consensus among distributed servers. For a fair comparison, each set of experiments is run 50 times and the average result is reported.

\begin{table*}[htbp]\vspace{-0.3cm}
	\centering \caption{Parameters Setup for Different Datasets in Experiments}\label{table:parameter}
	\begin{tabular}{|c|c|c|c|c|c|}
		\hline
		\textbf{Symbol} & \textbf{Description} & \textsf{Linear} & \textsf{Flu} & \textsf{Multi-Linear} & \textsf{Multi-Flu} \\
		\hline
		$\varepsilon$ & Total Privacy Budget & $1$ & $1$ & $1$  & $1$\\
		$d$ & Data Dimensionality & $1$ & $1$  & $6$ & $51$ \\
		$(C_p, C_i, C_d)$ & PID Control Gains & $(0.9,0.1,0)$ & $(0.9,0.1,0)$  & $(0.9,0.1,0)$ & $(0.9,0.1,0)$ \\
		$T$ & Total Timestamps & $1000$ & $791$ & $1000$ & $441$ \\
		$T_i$ & Integral Time Window & $5$ & $5$  & $5$ & $5$\\
		$T_s$ & Maximal Sampling Number & $0.3 T$ & $0.4 T$  & N/A & N/A\\
		$(\theta, \xi)$ & Interval Adjustment Parameters & $(2.5, 0.05)$ & $(1, 0.6)$ & ($1$, N/A) & ($3$, N/A)\\
		\hline
	\end{tabular}
	\vspace{-3mm}
\end{table*}

\textbf{Parameters:} The default parameters and their descriptions, unless otherwise explained, are listed in Table~\ref{table:parameter}. In our simulations, we chose the optimal parameters by experimentally minimizing the posterior estimation error. For example, we first chose optimal model variance $Q$ for different datasets; the sampling parameters $M, \theta, \xi$ were chosen separately to minimize the final error; $R$ was approximated according to Eq.~(\ref{eq:approx}) and varied across datasets.


\subsection{Estimation Utility of \textsf{DPCrowd}} \label{subsec:performance-propdp}

\textbf{Convergence of Estimation:} Fig.~\ref{fig:UtilityVsTime} demonstrates the time-varying estimation error of \textsf{DPCrowd} among all distributed servers, in comparison with that of related schemes. In Fig.~\ref{fig:UTLinear}, the relative error of distributed servers in all three schemes gradually drops with time. This is because, distributed servers would initially produce rough prior estimates, which are then gradually corrected via observing new measurements. Nonetheless, without communication, servers in \textsf{FAST} can only observe their own measurements and perform independent estimation, thus converging slowly with much higher consensus error. \textsf{DFAST} simply collects and averages the independent estimations in the whole network. Despite the reduced relative error via averaging, it still has the same convergence speed as \textsf{FAST}, which is determined by the independent estimation. Besides, although \textsf{DFAST} can achieve absolute consensus, it would lead to huge communication cost in blind flooding. Differently, distributed servers in \textsf{DPCrowd} conduct estimation via information fusion with their one-hop neighbors at each time, therefore shows fast convergence. With the increase of time, the estimations of all distributed servers will be finally disseminated and fused according to the weights to achieve both accurate and approximate consensus. Unlike \textsf{Linear} better follows an approximately linear process, \textsf{Flu} has more periodic fluctuations. Nonetheless, in Fig.~\ref{fig:UTFlu}, \textsf{DPCrowd} still shows much better estimation convergence.

\begin{figure*}[htbp]\centering\vspace{-0.3cm}
	\subfloat[\textsf{Linear}\label{fig:UTLinear}]{
		\label{RTLinear} 
		\includegraphics[width=125pt]{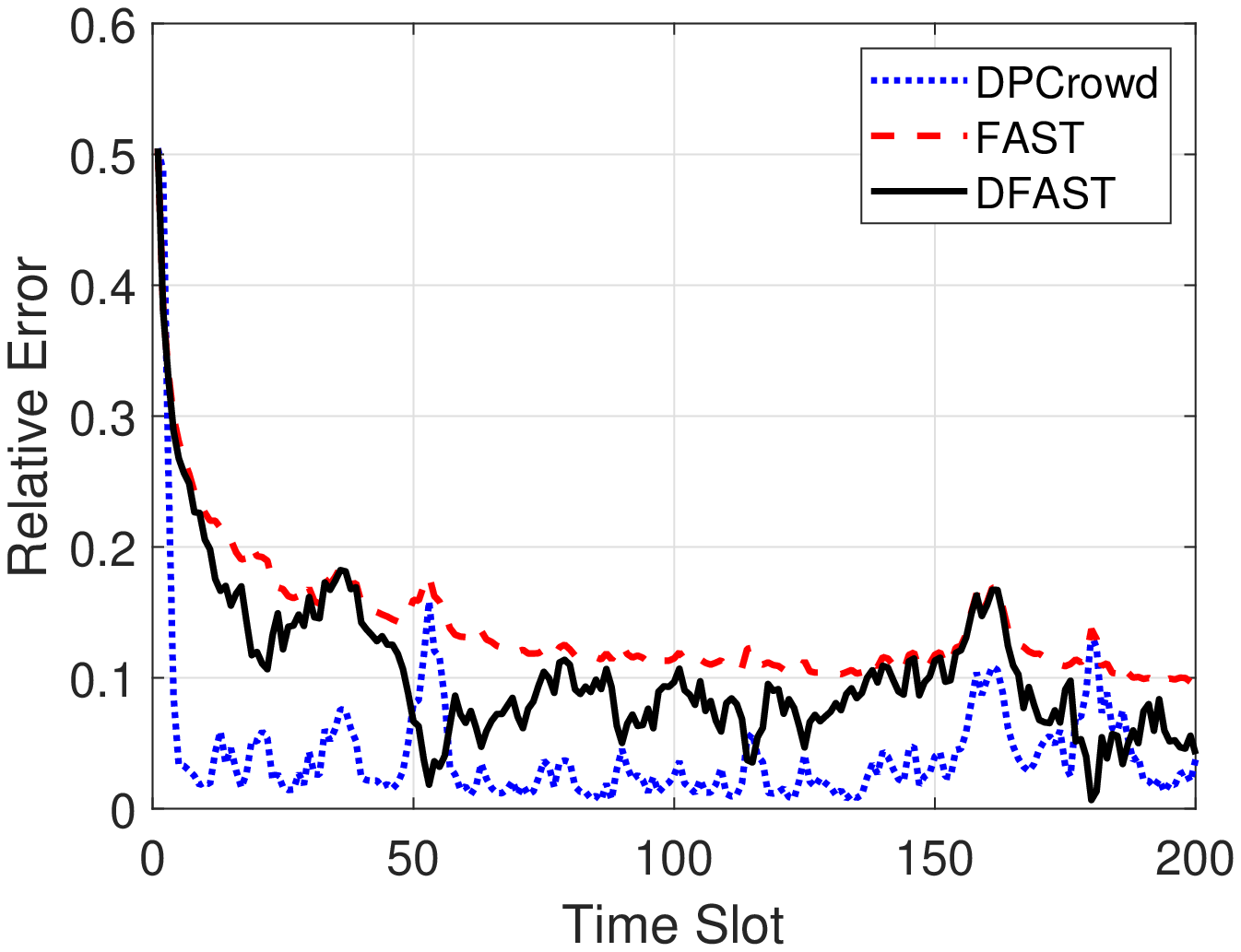}
		\label{CTLinear} 
		\includegraphics[width=125pt]{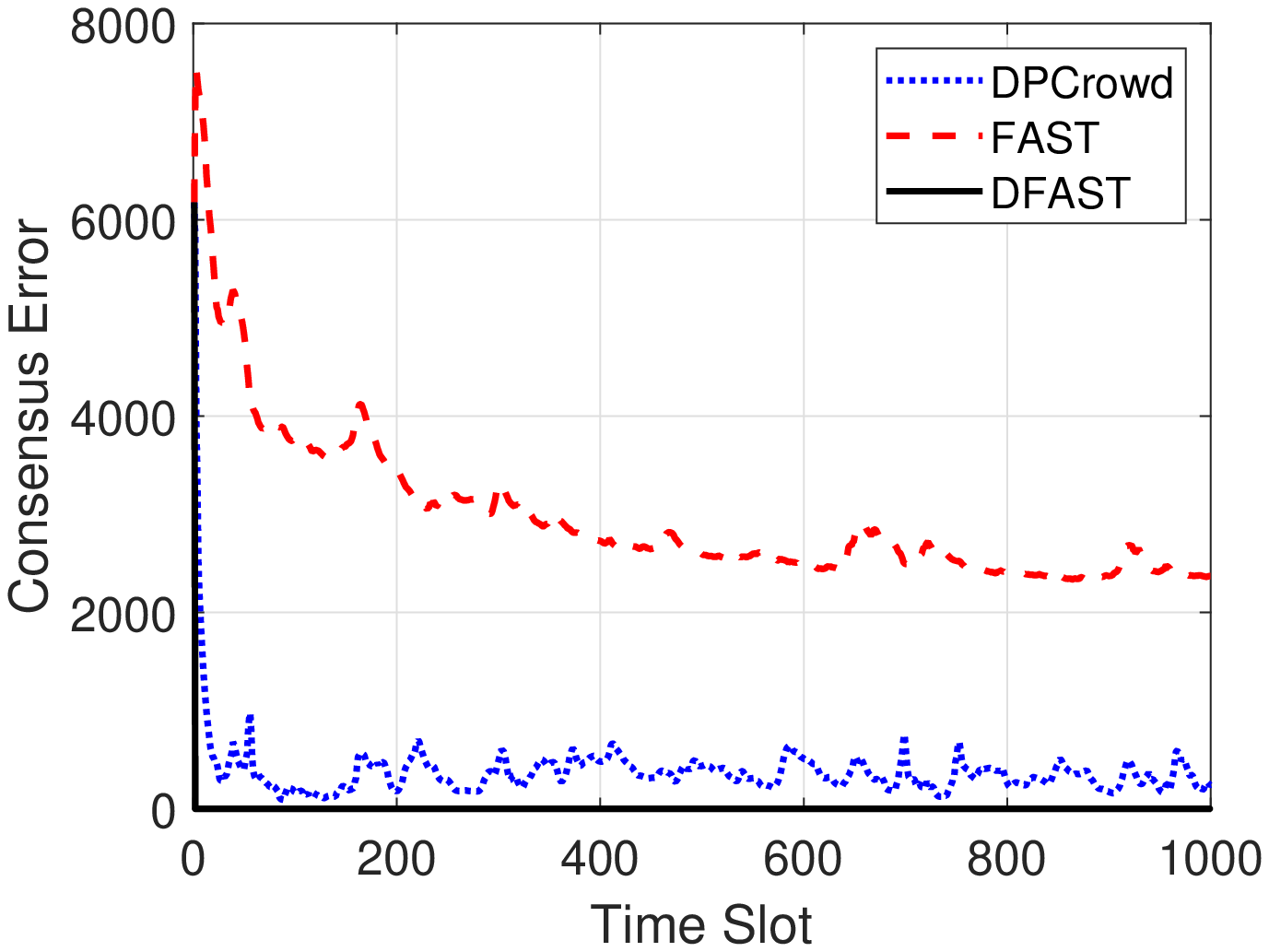}}
	\subfloat[\textsf{Flu}\label{fig:UTFlu}]{
		\label{RTLinear} 
		\includegraphics[width=125pt]{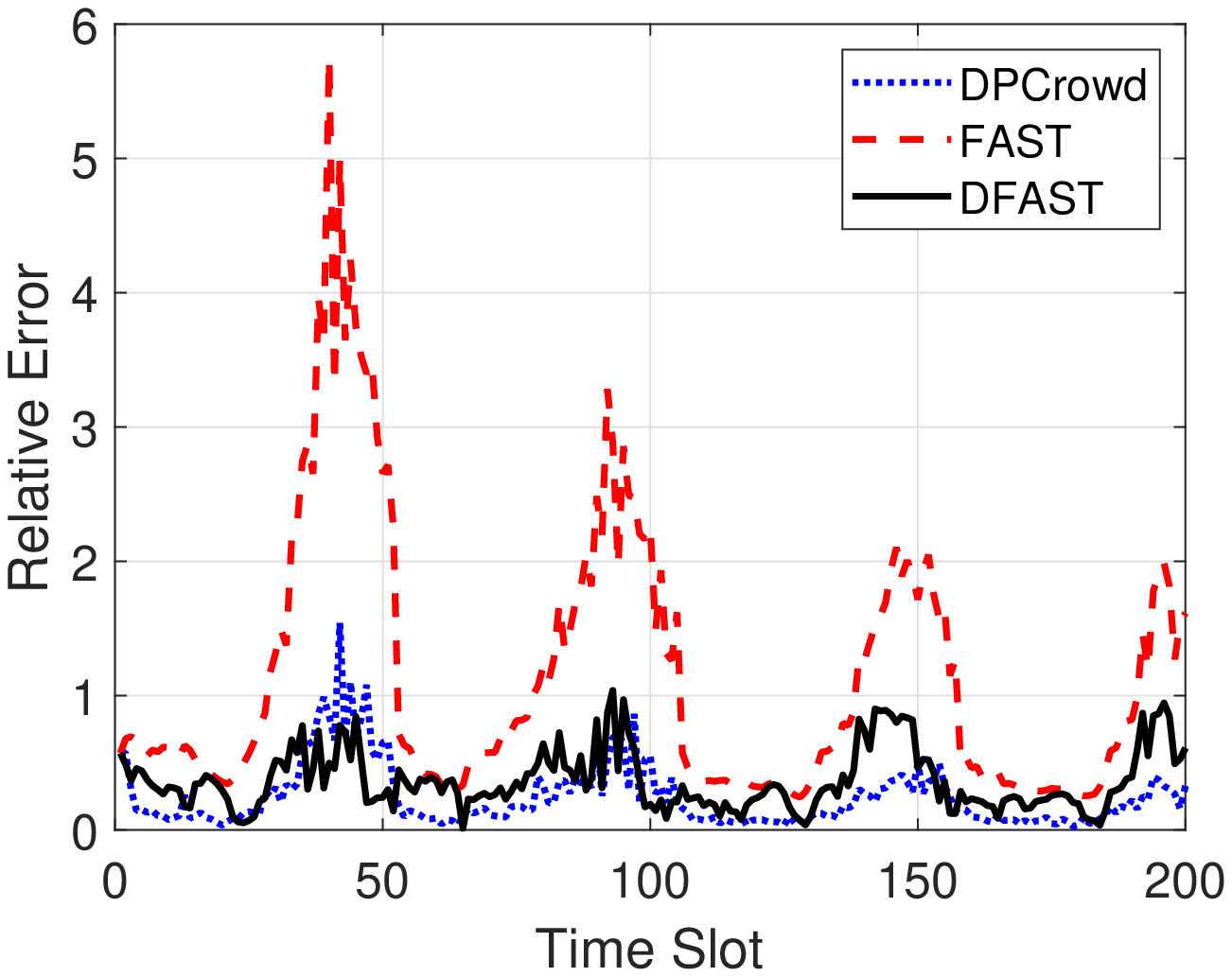}
		\label{CTLinear} 
		\includegraphics[width=125pt]{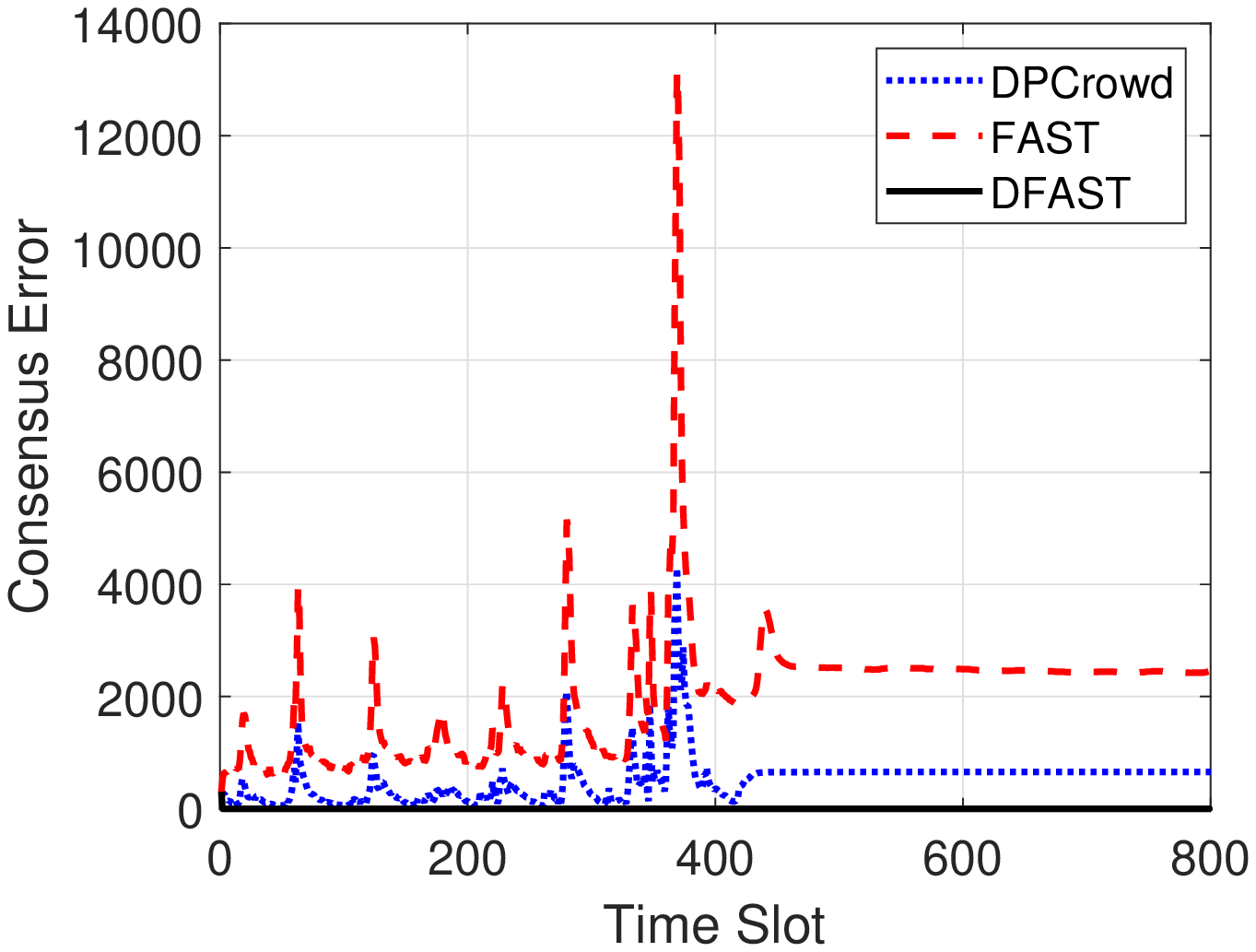}}
	\caption{Error Convergence with Time}\centering
	\label{fig:UtilityVsTime} 
	\vspace{-0.3cm}
\end{figure*}

\begin{figure*}\vspace{-0.3cm}
	\centering
	\subfloat[\textsf{Linear}]{
	\label{ARELinear} 
	\includegraphics[width=125pt]{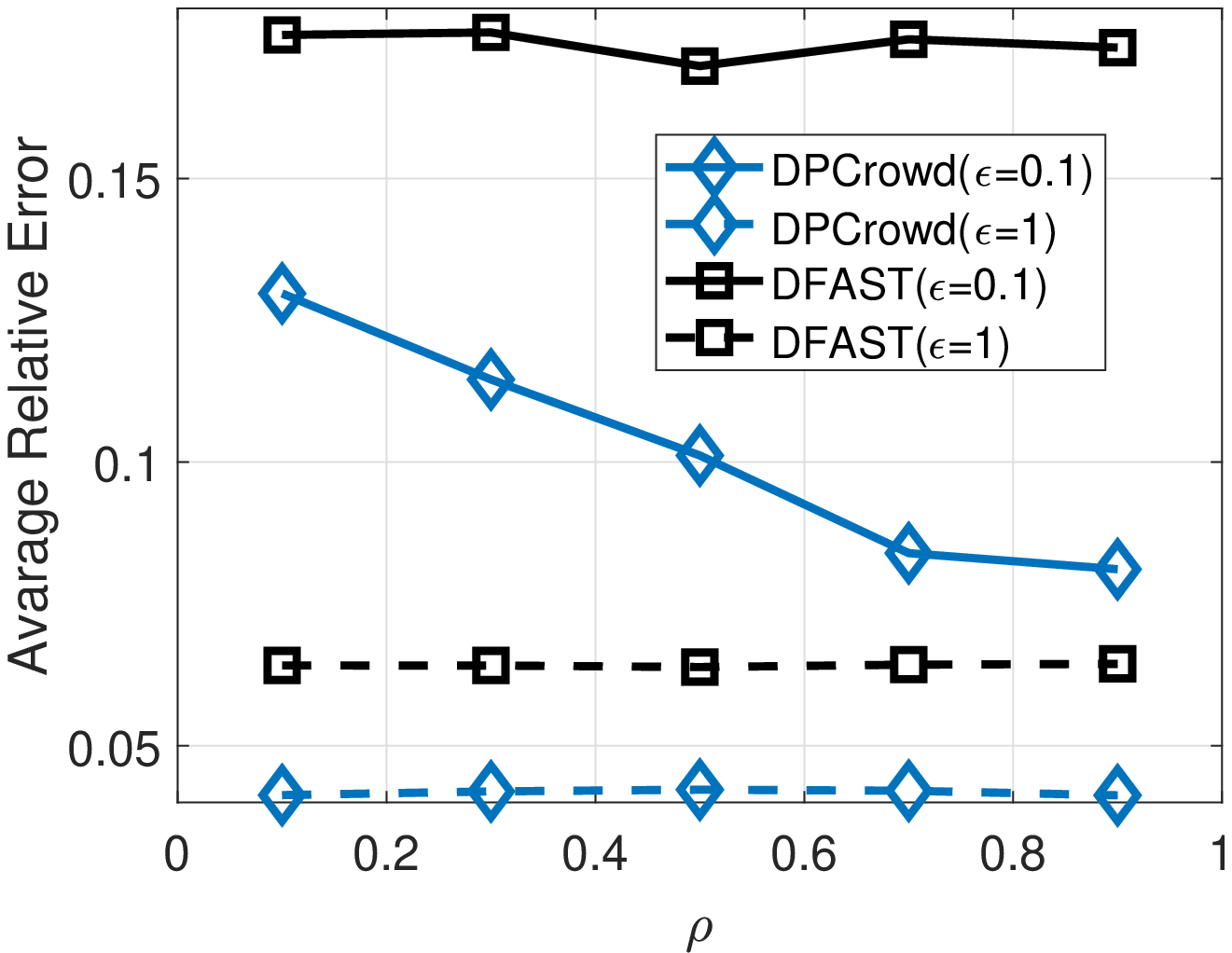}
	\label{ACELinear} 
	\includegraphics[width=125pt]{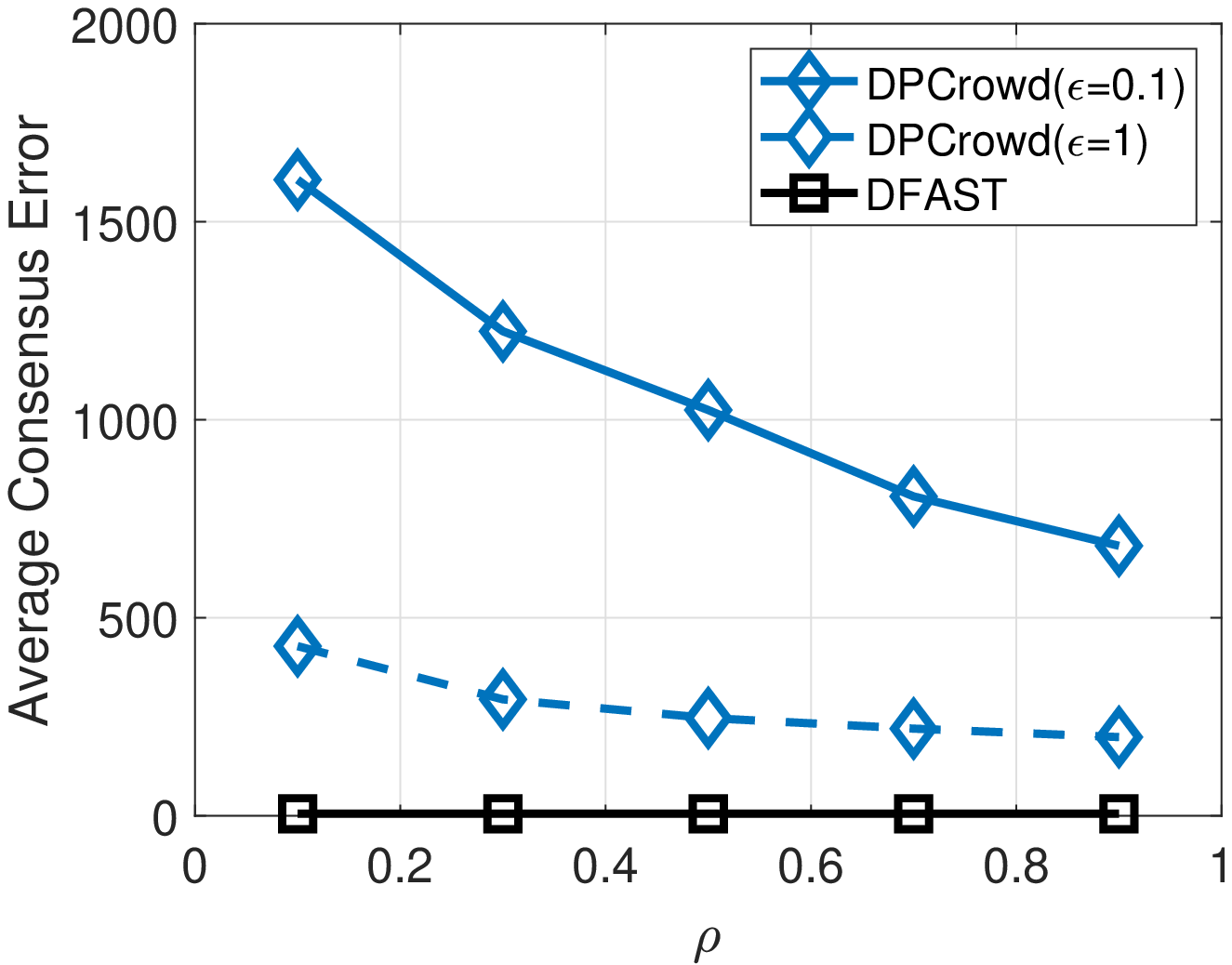}}
\subfloat[\textsf{Flu}]{
	\label{AREFlu} 
	\includegraphics[width=125pt]{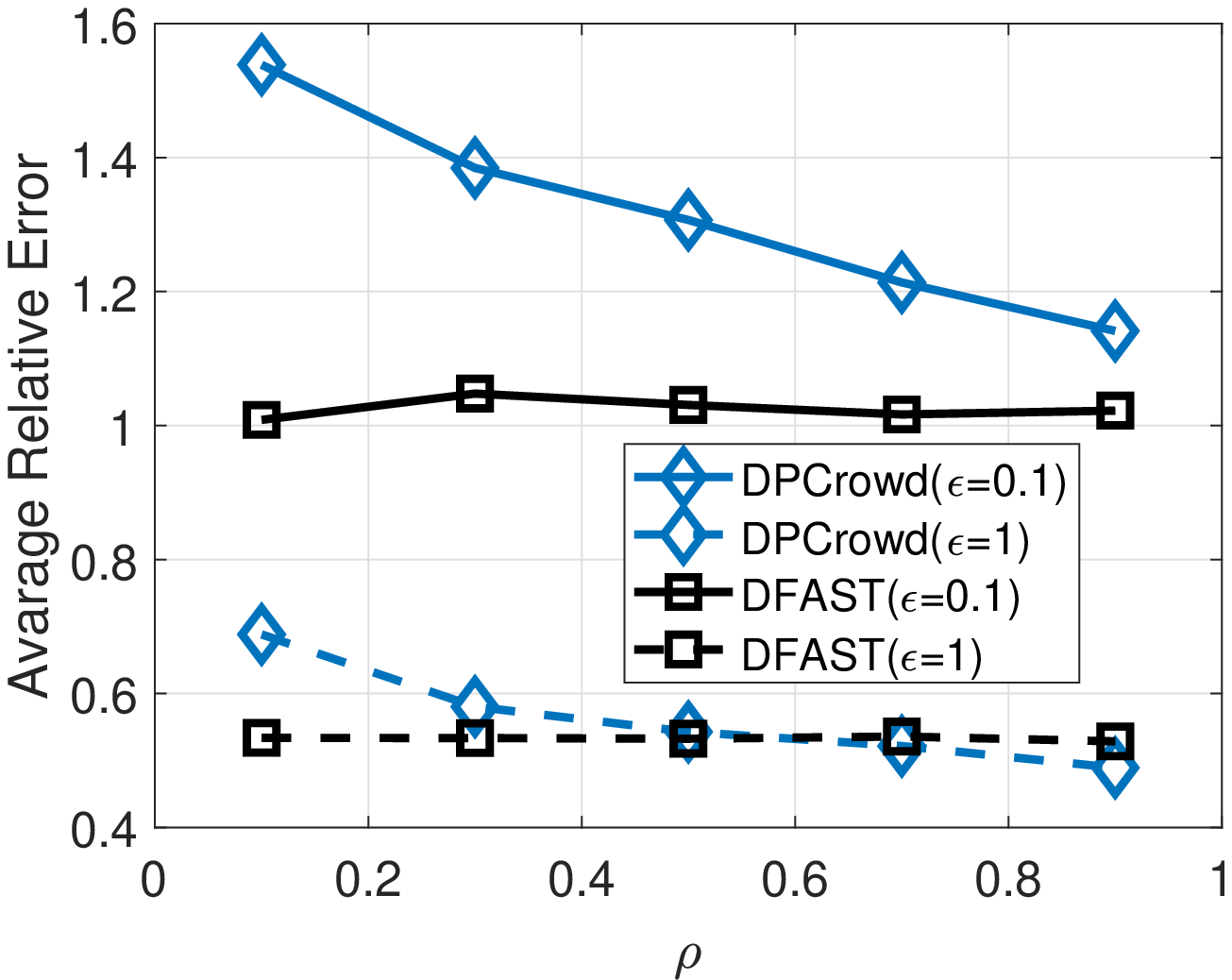}
	\label{ACEFlu} 
	\includegraphics[width=125pt]{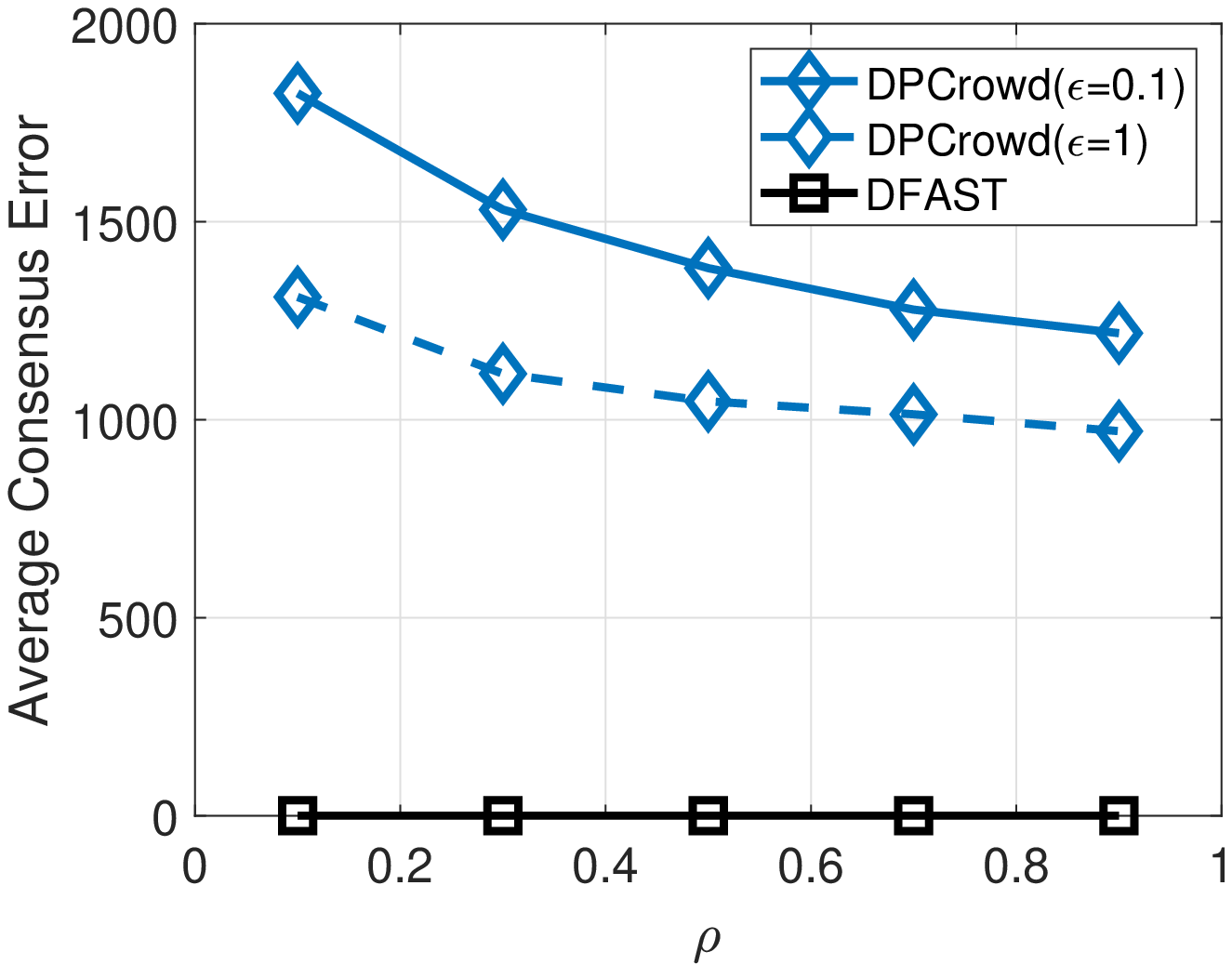}}
	\caption{Average Error vs. Network Density}\centering
	\label{fig:UtilityVsDensity} 
	\vspace{-0.3cm}
\end{figure*}

\textbf{Impact of Network Density:} Fig.~\ref{fig:UtilityVsDensity} presents the impact of network density $\rho$ on both ARE and ACE of \textsf{DPCrowd} compared with \textsf{DFAST}. Since estimation results are broadcast to all servers, the ARE of \textsf{DFAST} remains unchanged for different $\rho$ and its ACE is as small as zero. On both datasets, both the ARE and ACE of \textsf{DPCrowd} decrease with $\rho$ since a denser network can better guarantee the convergence via more extensive communications. Both errors in the stronger privacy regime ($\varepsilon=0.1$) are larger than those in the weaker privacy regime ($\varepsilon=1$), which reflects the utility-privacy tradeoff. When $\varepsilon=1$, both ARE and ACE are not sensitive to $\rho$. This is because, with less noise, distributed servers can easily achieve accurate and consensus estimation with fewer neighbors. As shown, ARE and ACE of \textsf{DPCrowd} also vary across datasets. As analyzed before, unlike \textsf{Linear}, the higher ARE and ACE of \textsf{DPCrowd} on \textsf{Flu} result from the large fluctuations of both dataset.

\begin{figure*}\vspace{-0.3cm}
	\centering
	\subfloat[\textsf{Linear}]{
		\label{ARELinear-Sampling} 
		\includegraphics[width=125pt]{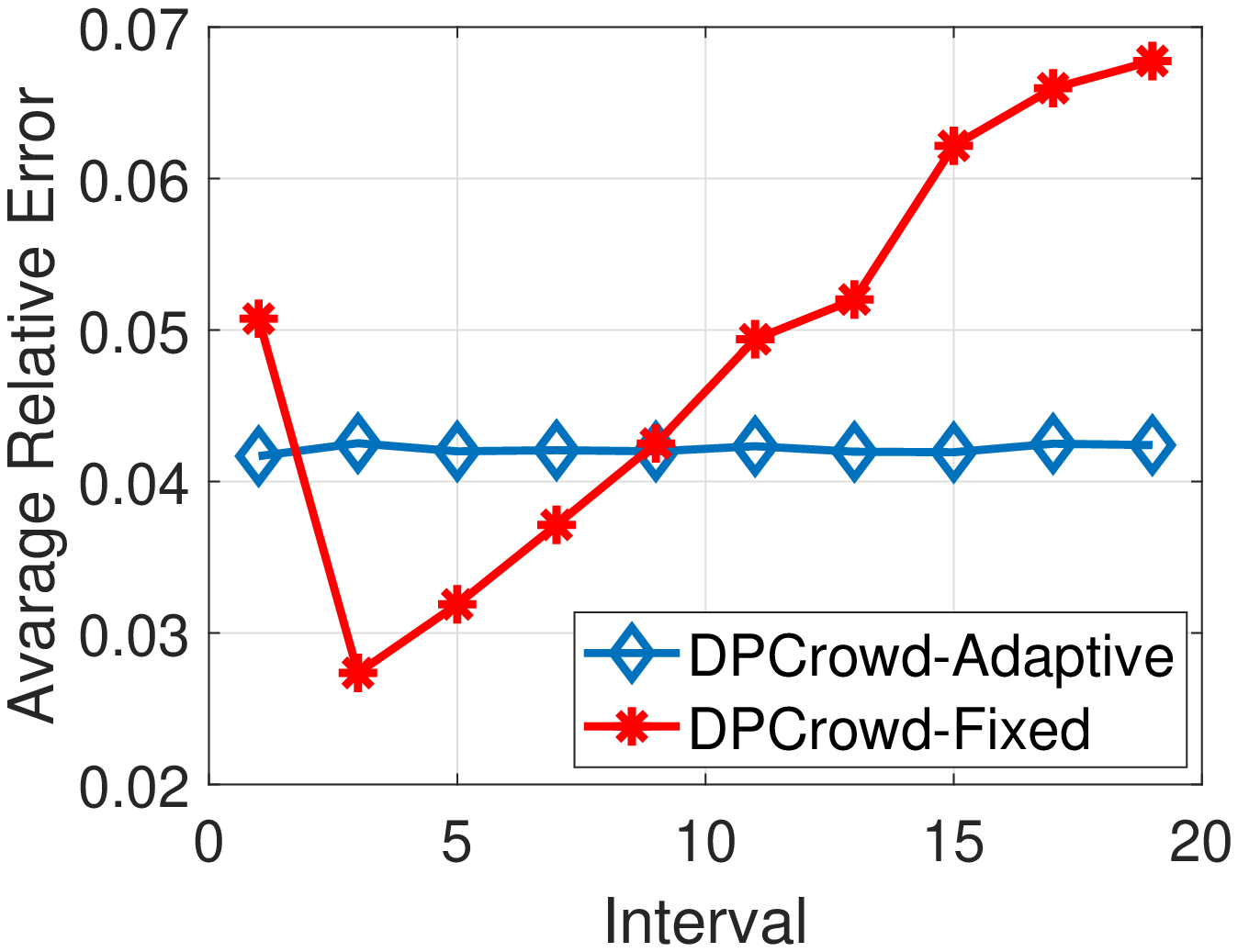}
		\label{ACELinear-Sampling} 
		\includegraphics[width=125pt]{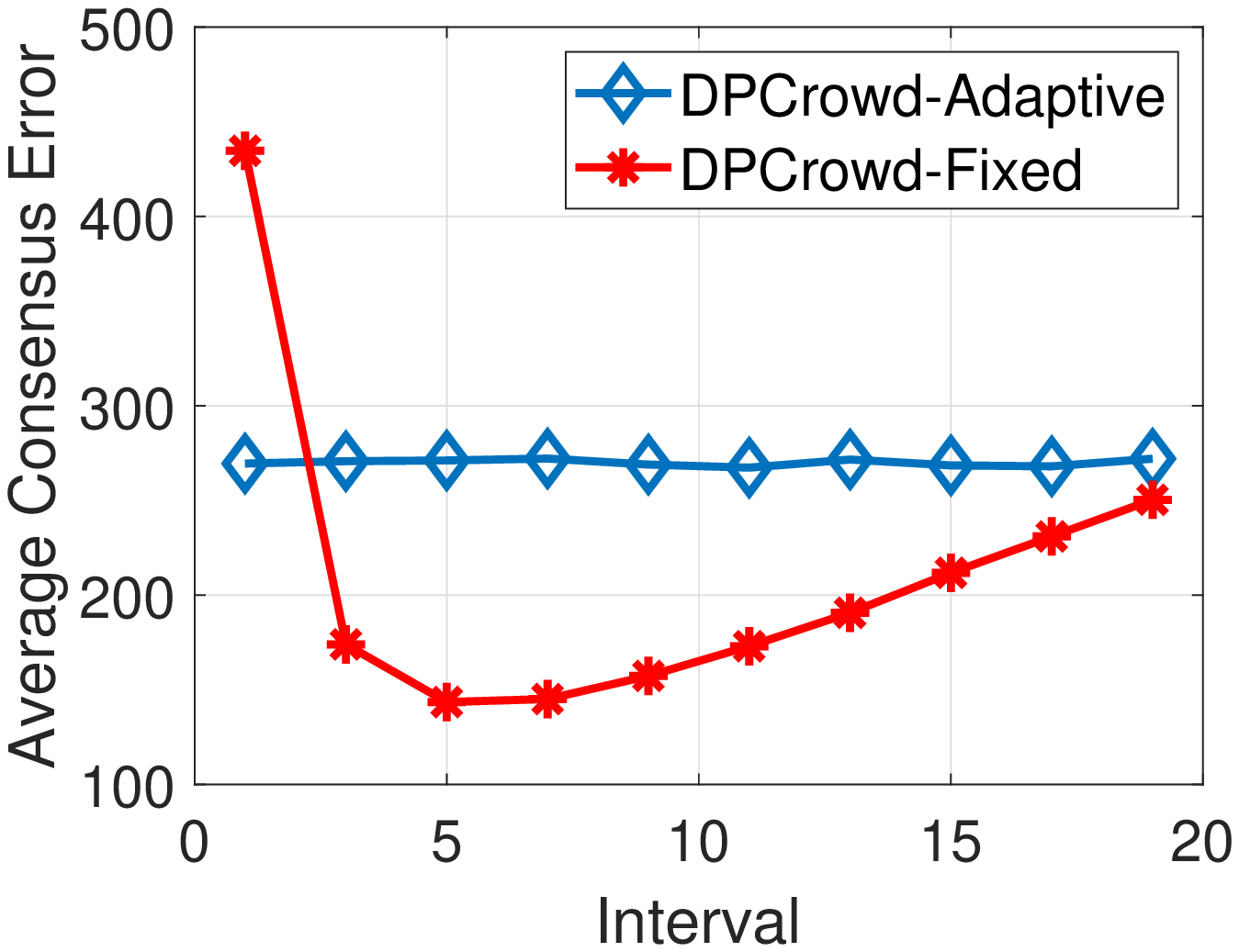}}
	\subfloat[\textsf{Flu}]{
		\label{AREFlu-Sampling} 
		\includegraphics[width=125pt]{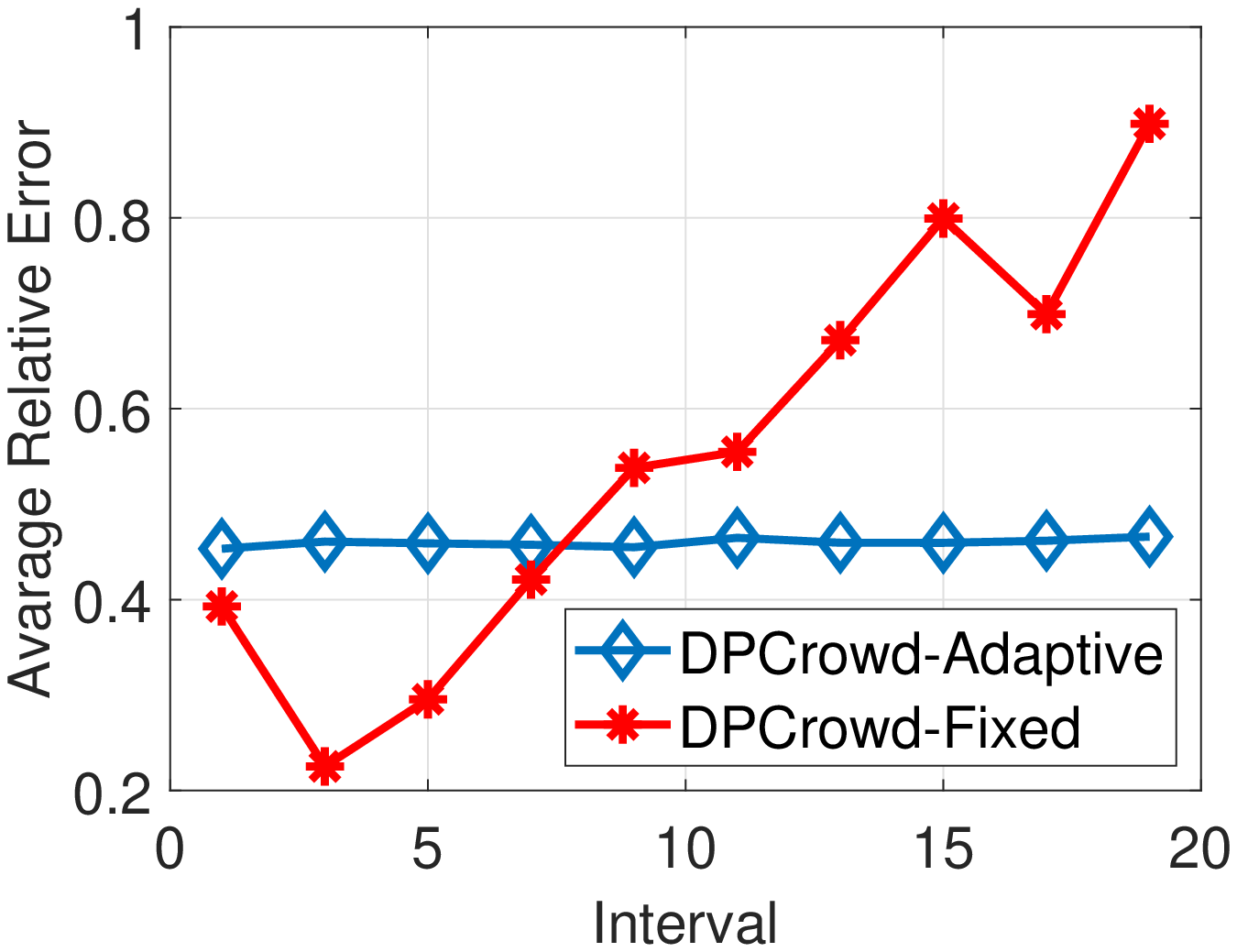}
		\label{ACEFlu-Sampling} 
		\includegraphics[width=125pt]{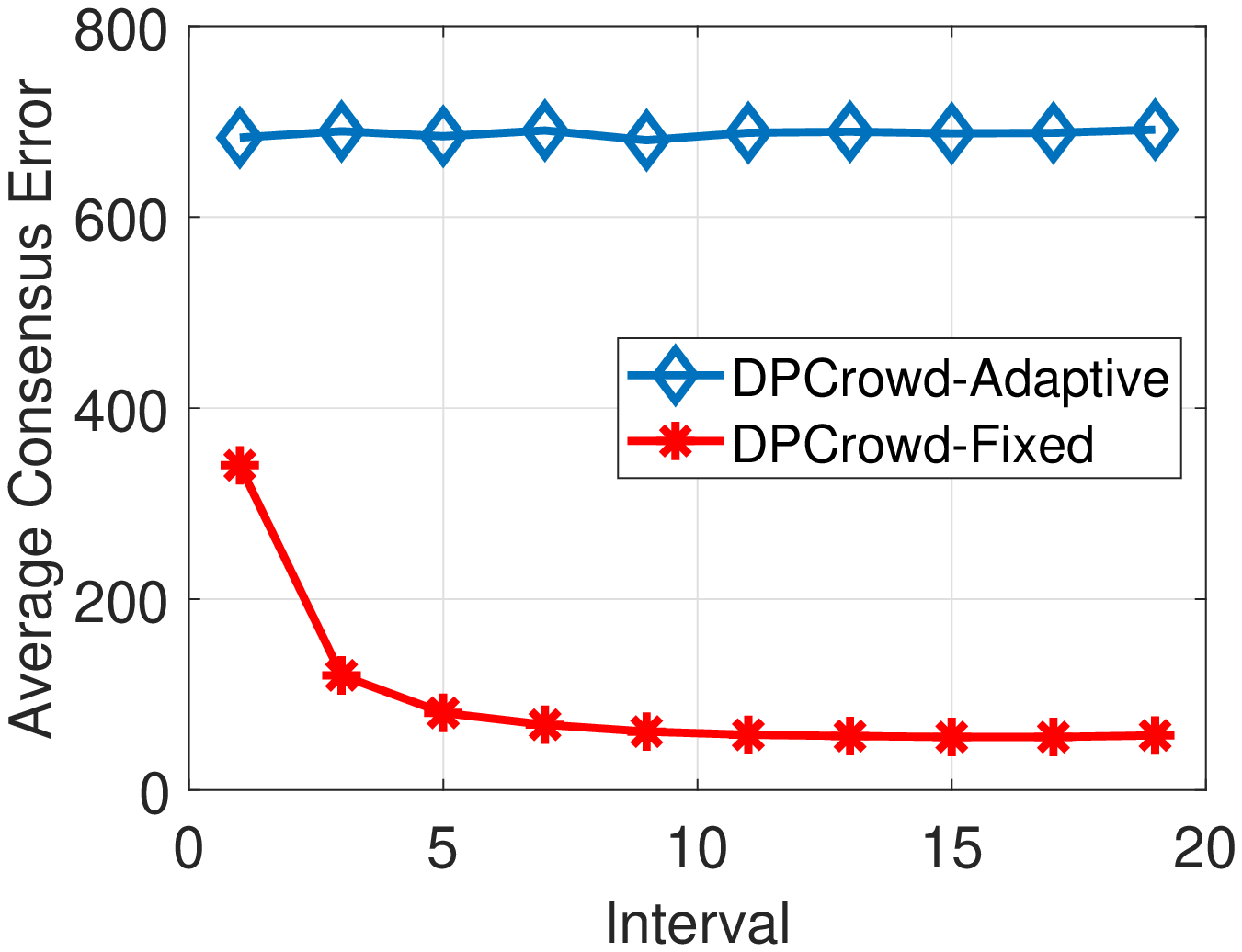}}
	\caption{Average Error vs. Sampling Interval}\centering
	\label{fig:UtilityVsSampling} 
	\vspace{-0.3cm}
\end{figure*}

\textbf{Impact of Sampling Strategy:} Fig.~\ref{fig:UtilityVsSampling} reports the ARE and ACE of \textsf{DPCrowd} under both fixed and adaptive sampling based intermittent communication strategies. Because of the adaptiveness, \textsf{DPCrowd-Adaptive} keeps nearly the same error. Nonetheless, \textsf{DPCrowd-fixed} varies greatly with different sampling intervals. When the interval is small (such as $1$), it incurs high perturbation error on both datasets because much noise is injected at nearly every timestamp. When the sampling interval increases slightly, it performs better since less noise is injected in a sampling manner. Nonetheless, with the further increase of intervals, the ARE of \textsf{DPCrowd-fixed} increases gradually and goes beyond that of \textsf{DPCrowd-Adaptive} since larger sampling interval will lead to larger prediction error in spite of smaller perturbation error. Similar trends can be observed in the ACE comparison. \textsf{DPCrowd-fixed} seems to show a smaller consensus error. The reason is that consensus error mainly comes from the perturbation error, which is much smaller when there are more non-sampling stamps. In other words, larger sampling interval means more non-sampling points and less dynamic changes, which naturally lead to better consensus, but less accuracy. Thus, ARE and ACE should be combined to analyze the performance of \textsf{DPCrowd}. Overall, \textsf{DPCrowd} under the adaptive sampling strategy is more robust to different datasets.

\textbf{Tradeoff between Utility and Privacy:} Fig.~\ref{fig:UtilityVsPrivacy} compares both ARE and ACE of \textsf{DPCrowd} with \textsf{FAST} and \textsf{DFAST} under different privacy $\varepsilon$. All AREs decrease with $\varepsilon$, which demonstrates the trade-off between utility and privacy. However, the ARE of \textsf{DPCrowd} and \textsf{DFAST} is consistently lower than that of \textsf{FAST} as both schemes can greatly improve the estimation via communications. Furthermore, \textsf{DPCrowd} incurs less ARE than \textsf{DFAST} in most cases since the estimation can be better corrected according to the weights of different servers (Eq.~\ref{eq:weight}). Compared with Fig.~\ref{AREFlu}, Fig.~\ref{ARELinear} has the lowest ARE as the synthetic \textsf{Linear} perfectly follows the known process model. Whereas \textsf{Flu} have more fluctuations. Due to whole-network broadcast at the expense of large overhead, \textsf{DFAST} can achieve almost the same estimation for all servers and therefore incurs no consensus error. Compared with the non-communication scheme \textsf{FAST}, \textsf{DPCrowd} has much smaller ACE on both datasets for all privacy levels. The reason is all distributed servers in \textsf{DPCrowd} can exchange and disseminate information iteratively until the convergence. In general, with the increase of $\varepsilon$, ACE for both \textsf{DPCrowd} and \textsf{FAST} drop slowly since fewer noises are added and the differences among servers become smaller.

\begin{figure*}\vspace{-0.3cm}
	\centering
	\subfloat[\textsf{Linear}]{
		\label{ARELinear} 
		\includegraphics[width=125pt]{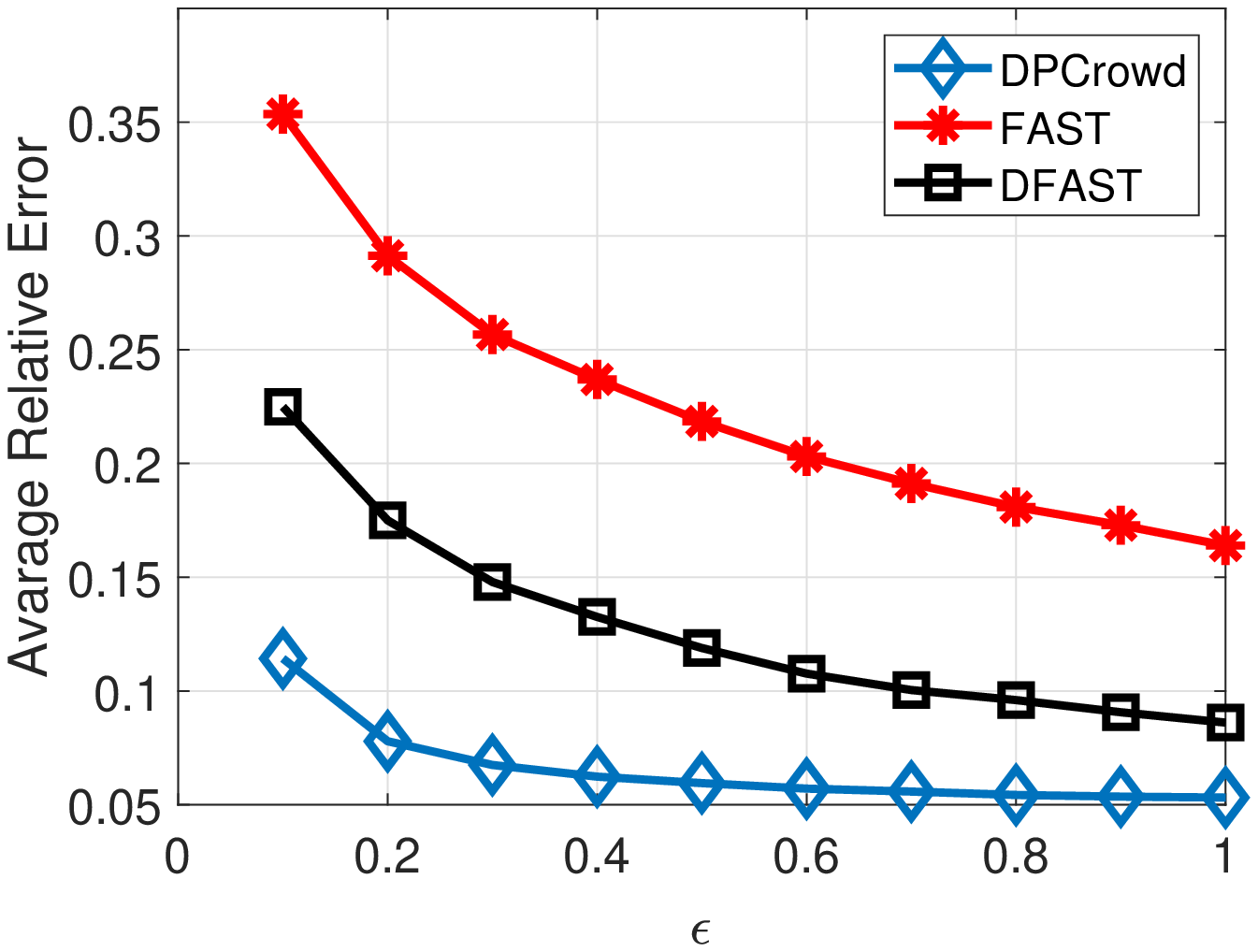}
		\label{ACELinear} 
		\includegraphics[width=125pt]{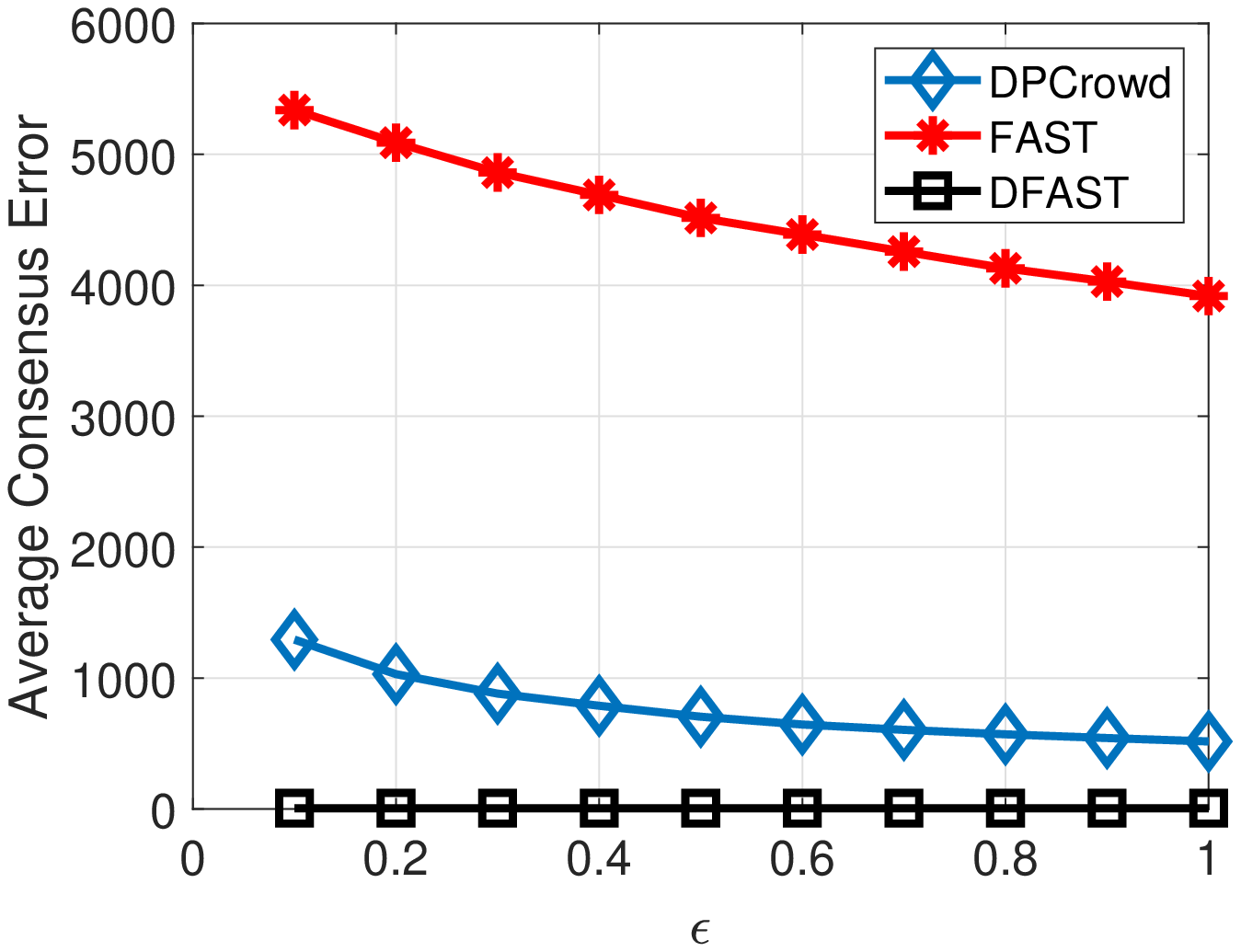}}
	\subfloat[\textsf{Flu}]{
		\label{AREFlu} 
		\includegraphics[width=125pt]{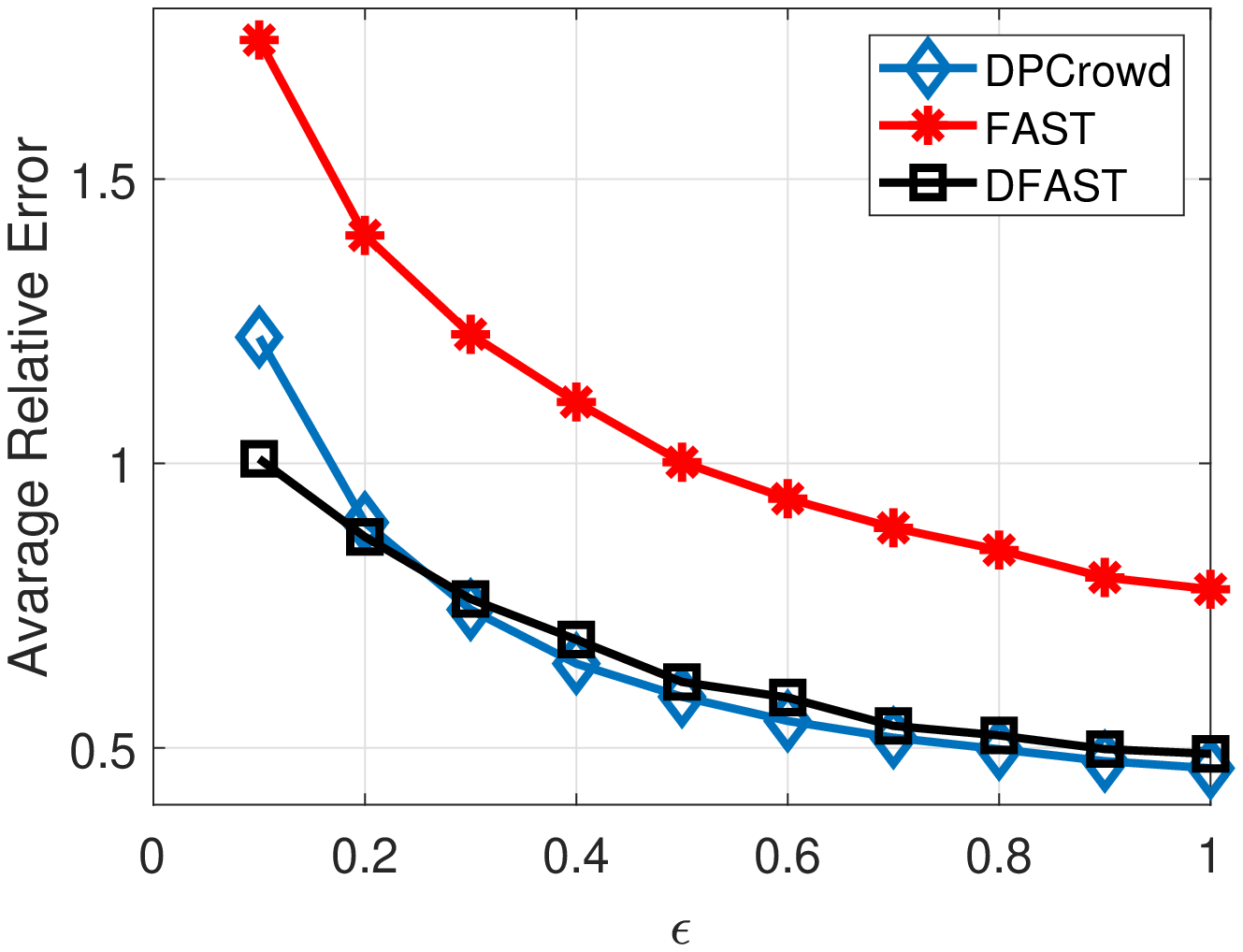}
		\label{ACEFlu} 
		\includegraphics[width=125pt]{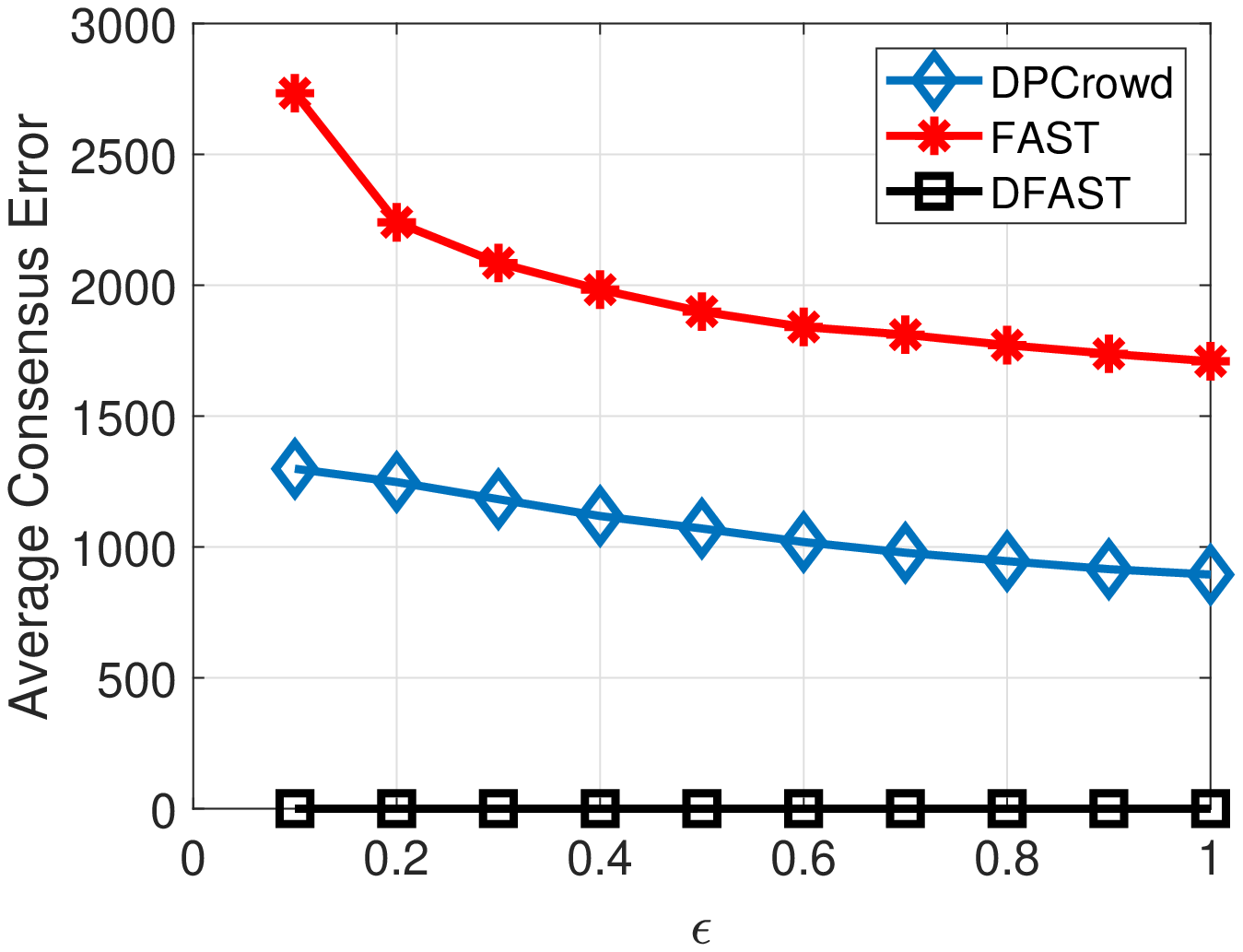}}
	\caption{Average Error vs. Privacy}\centering
	\label{fig:UtilityVsPrivacy} 
	\vspace{-0.3cm}
\end{figure*}

\subsection{Estimation Efficiency of DPCrowd}

The communication efficiency of \textsf{DPCrowd} results from two aspects: communication with only one-hop neighbors and communication frequency reduction via sampling based intermittency.

\begin{figure*}[htbp] \centering\vspace{-0.3cm}
	\subfloat[\textsf{Communication Time}]{
		\label{fig:runtime-density} 
		\includegraphics[width=125pt]{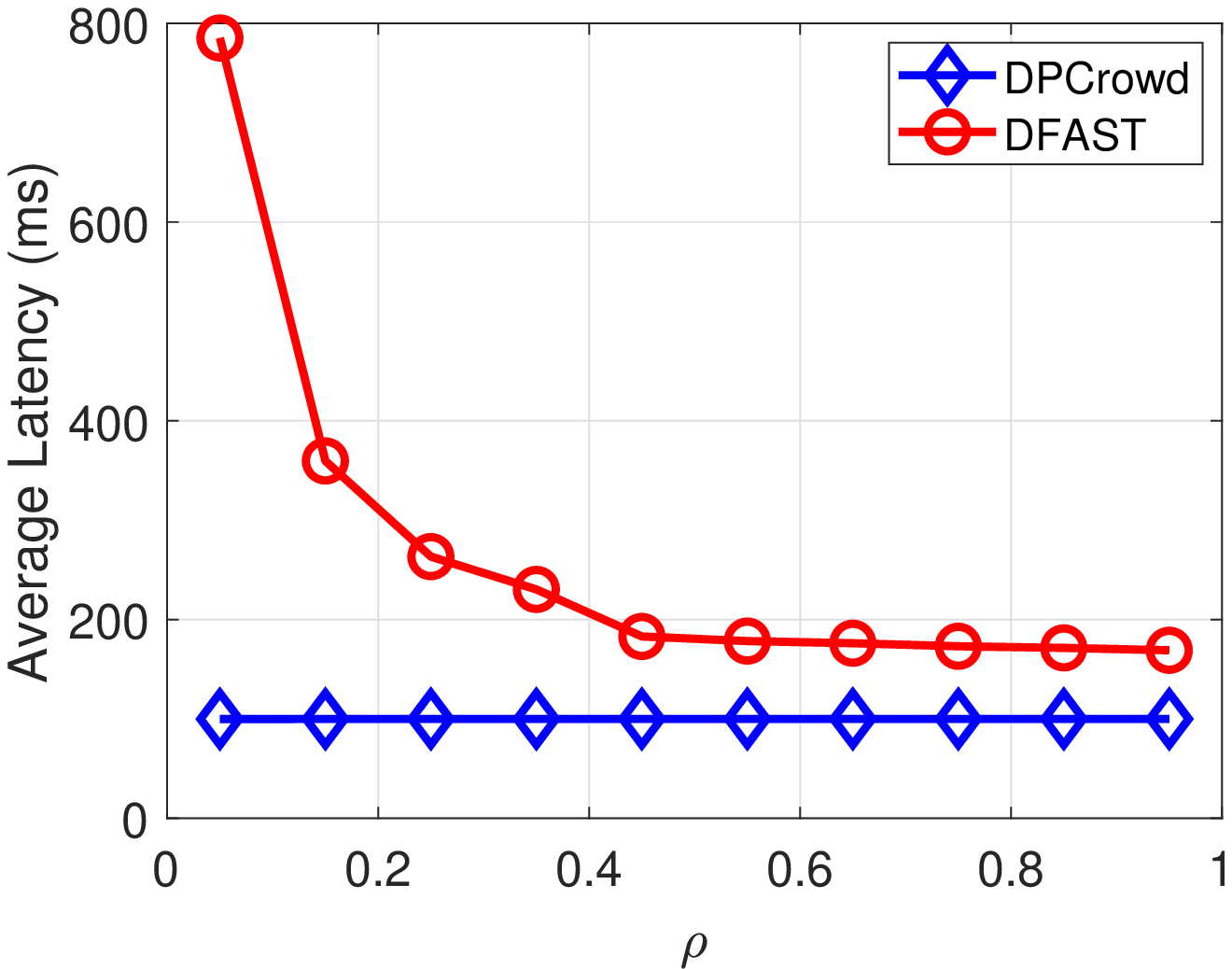}}
	\subfloat[\textsf{Communication Overhead}]{
		\label{fig:overhead-density} 
		\includegraphics[width=125pt]{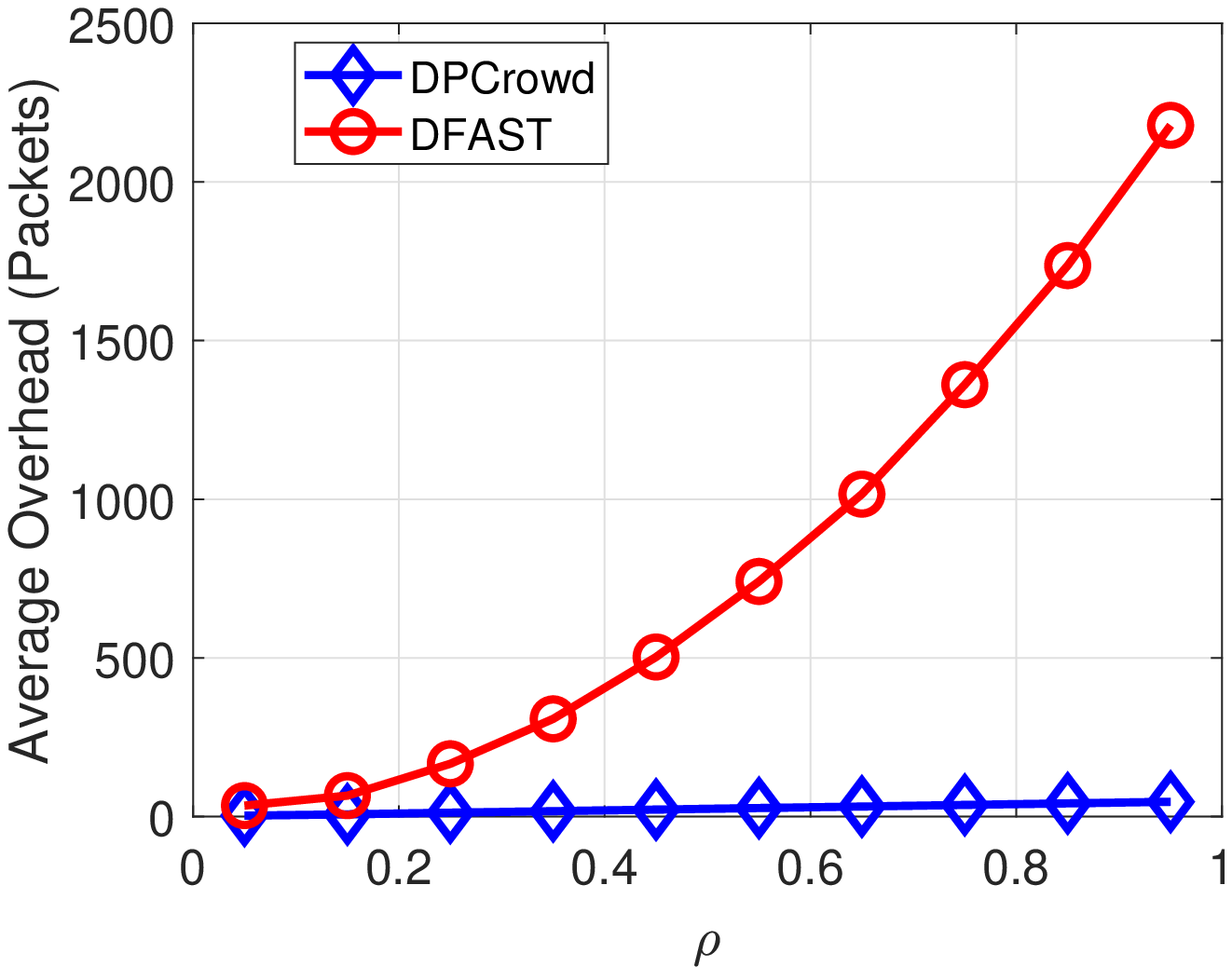}}
	\subfloat[\textsf{Frequency vs. Interval (Linear)}]{
		\label{fig:frequency(Liner)} 
		\includegraphics[width=125pt]{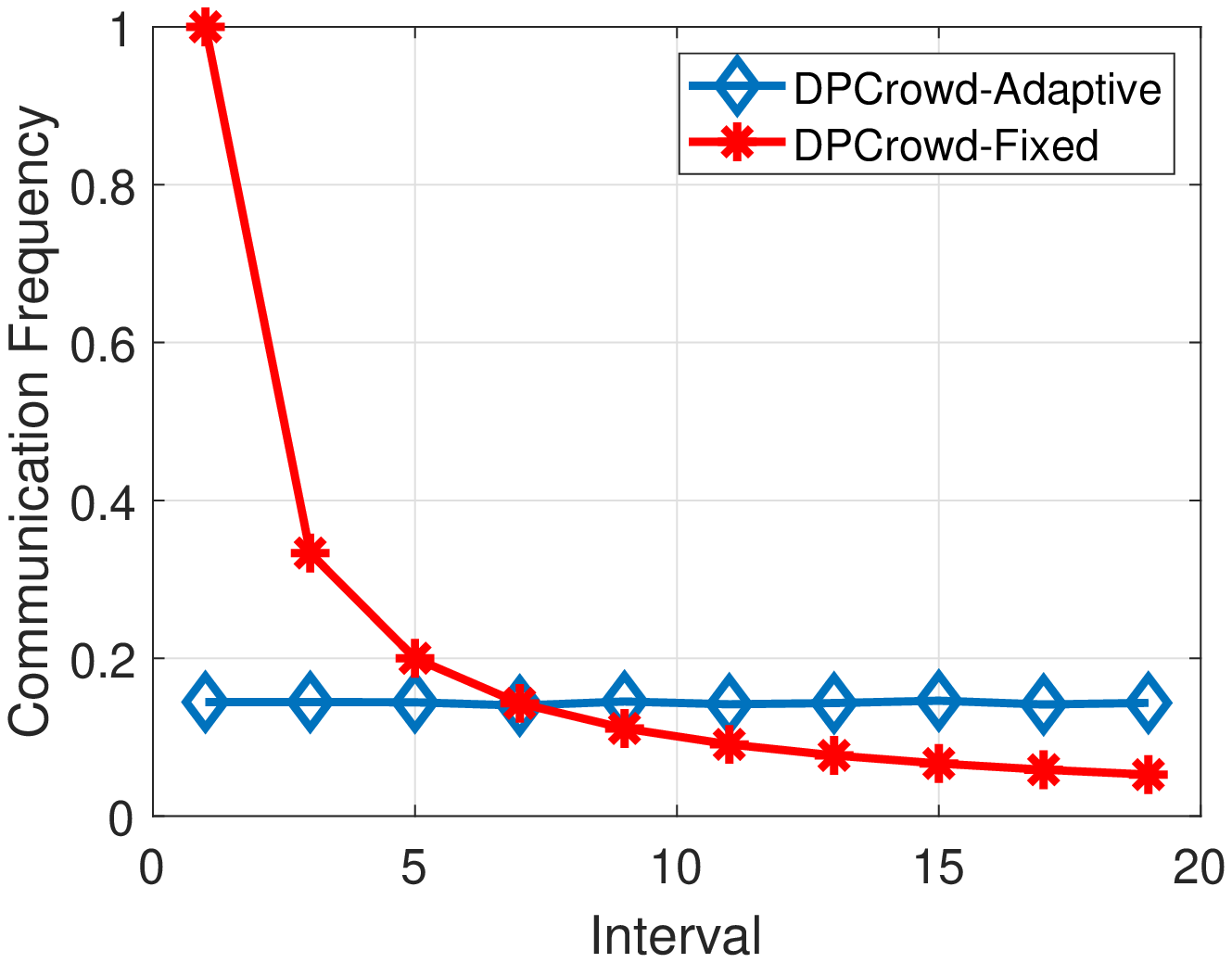}}
	\subfloat[\textsf{Frequency vs. Interval (Flu)}]{
		\label{fig:frequency(Flu)} 
		\includegraphics[width=125pt]{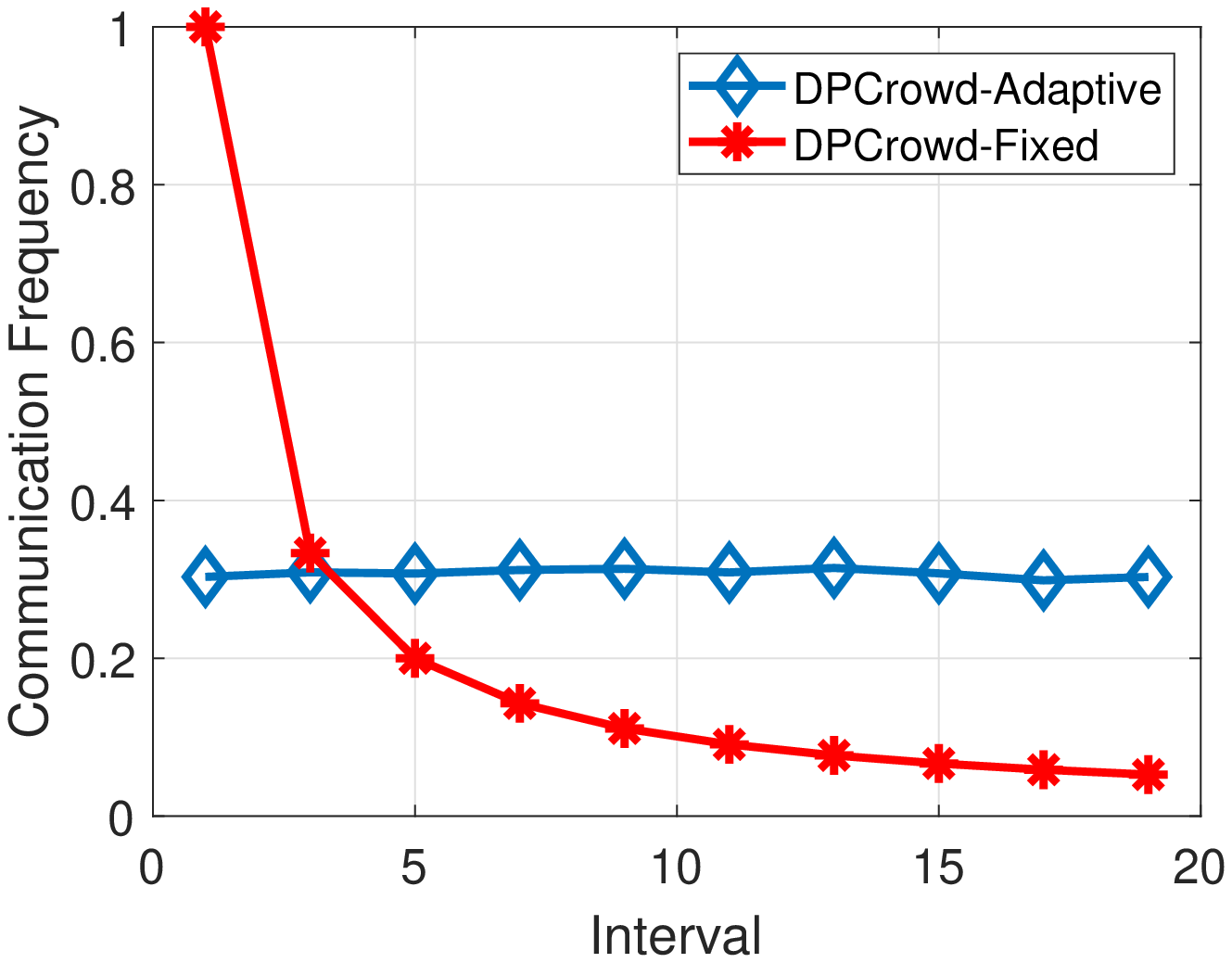}}
	\caption{Communication Efficiency of \textsf{DPCrowd}}\centering
	\label{fig:ComplexVsDensity} 
	\vspace{-0.3cm}
\end{figure*}

\textbf{Communication Latency and Overhead:} Figs.~\ref{fig:runtime-density} and \ref{fig:overhead-density} show the communication latency and overhead of \textsf{DPCrowd} in comparison with \textsf{DFAST} under different network density $\rho$. In Fig.~\ref{fig:runtime-density}, \textsf{DPCrowd} keeps much less latency since each server only exchanges messages with its one-hop neighbors. Nonetheless, the baseline scheme \textsf{DFAST} incurs much larger latency because the multi-hop broadcast requires much more time to ensure the full dissemination of information. When the network is sparser (smaller $\rho$), there are fewer viable routines among servers and cost more communication time. In Fig.~\ref{fig:overhead-density}, \textsf{DPCrowd} incurs much less communication packets in each sampling timestamp, which increases with $\rho$ slowly. However, the communication overhead of \textsf{DFAST} increases significantly with the density $\rho$. The reason is that the messages have to be broadcast to the whole network via hop by hop. With the increase of network density, more redundant messages will be forwarded and relayed.

\textbf{Communication Frequency Reduction:} Figs.~\ref{fig:frequency(Liner)} and \ref{fig:frequency(Flu)} depict the average communication frequency of \textsf{DPCrowd} under both the fixed-rate and adaptive rate sampling based intermittent communication strategies. The average communication frequency decreases as the sampling interval increases in the fixed rate strategy, but keeps a lower level for the adaptive sampling strategy given the maximal sampling points ($0.3 T$ and $0.4 T$ for \textsf{Linear} and \textsf{Flu}, respectively). Together with Fig.~\ref{fig:UtilityVsSampling}, we can say the sampling based intermittent communication can effectively reduce the communication frequency and better utilize the privacy budget. Especially, adaptive sampling can better find an optimal sampling interval for \textsf{DPCrowd} with higher efficiency in both communication and privacy preservation.

For conciseness, we mainly compared \textsf{DPCrowd} with \textsf{DFAST}. It should note that, the experimental conclusions of \textsf{DFAST} in terms of communication efficiency also apply to other extension schemes including \textsf{DBD/DBA}, \textsf{DRescueDP}, and \textsf{DPeGaSus}. Apparently, similar reduction in both communication latency and overhead can also be achieved by \textsf{DPCrowd}+ when compared to its counterpart \textsf{DRescueDP}, which is the decentralized extension of \textsf{RescueDP}~\cite{rescuedp2016}.


\subsection{Overall Performance of \textsf{DPCrowd}+} \label{subsec:performance-promisedp}

\textbf{Impact of Windows Size:} Fig.~\ref{fig:MUtilityVSw} shows the estimation utility of \textsf{DPCrowd}+ with the varying windows size $w$, in comparison with other comparable schemes. The AREs of all schemes increase with $w$ for both datasets. This is because given certain privacy budget $\varepsilon$ for a sliding window, larger $w$ means smaller privacy budget for each timestamp and higher perturbation error. The ARE of \textsf{RescueDP} increases with $w$ sharply and reaches the highest in both datasets due to the lack of collaborations among servers. While \textsf{DPCrowd}+ and \textsf{DPCrowd$_w$} show relatively steadily increasing trends. \textsf{DPCrowd}+, compared with \textsf{DPCrowd$_w$}, can not only utilize neighbors' knowledge, but also reduce the error via adapting dynamic grouping strategy on the dimensions with small values. Moreover, \textsf{DPCrowd}+ shows superior performance than \textsf{DPeGaSus} and \textsf{DBD/BA}, which is because of further consideration of estimation weights in Eq.~(\ref{eq:weight}).

Similarly, \textsf{DPeGaSus} and \textsf{DBD/BA} have almost no consensus error with the cost of whole-network broadcast; and the ACEs of all other schemes increase with $w$ due to less privacy budget allocated on each timestamp. \textsf{RescueDP} shows the largest ACE since there is no collaboration. Although collaborations in \textsf{DPCrowd$_w$} can help to reduce the ACE of \textsf{RescueDP}, higher fluctuations and dimensionality of \textsf{Multi-Flu} still lead to high sparsity and make \textsf{DPCrowd$_w$} less effective. In contrast, with the dynamic dimension reduction, \textsf{DPCrowd}+ can achieve better consensus by mitigating the sparsity issue in high-dimensional data.

\begin{figure*}[htbp]\vspace{-0.3cm}
	\centering
	\subfloat[\textsf{Multi-Linear}]{
		\label{AREMLinear-w} 
		\includegraphics[width=125pt]{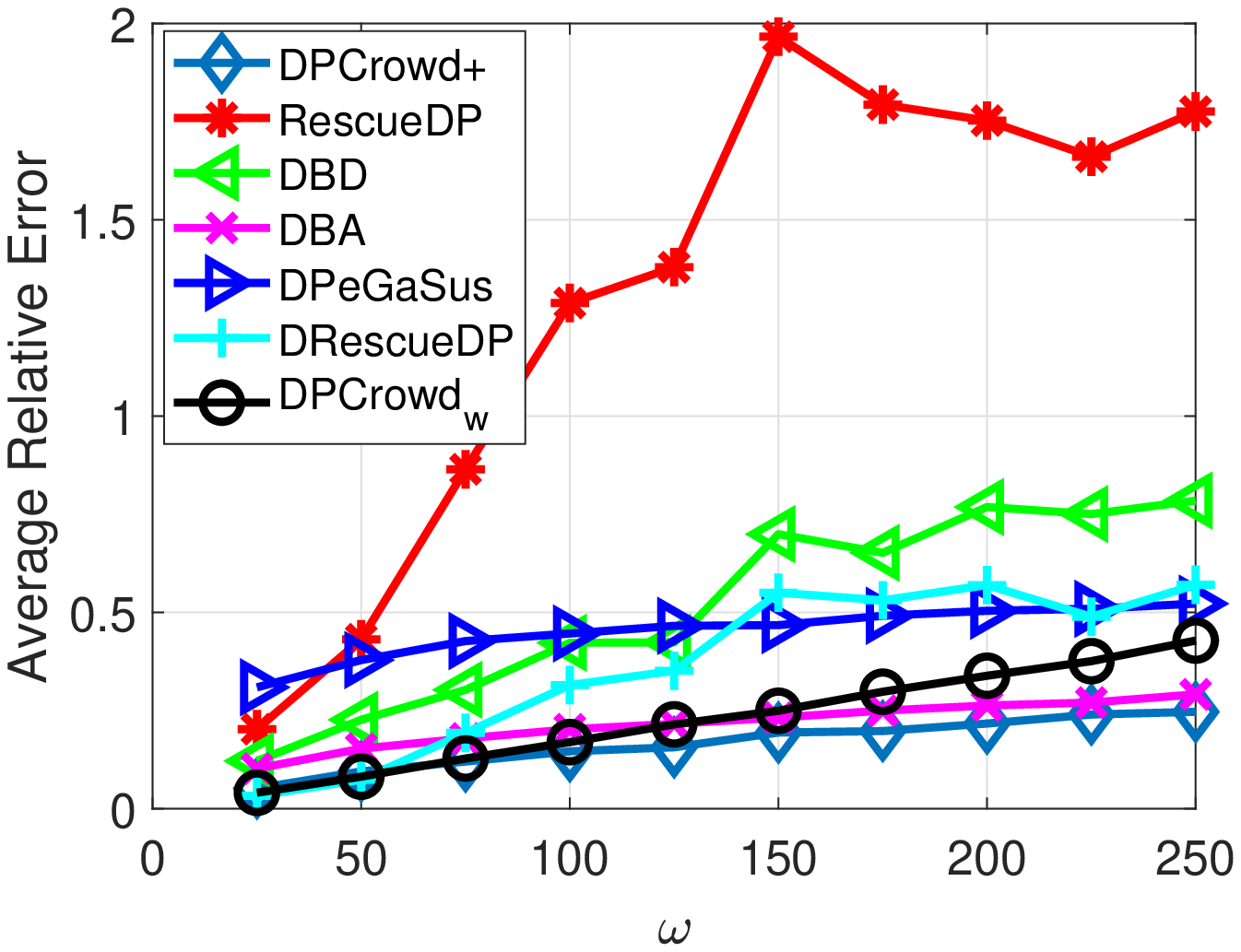}
		\label{ACEMLinear-w} 
		\includegraphics[width=125pt]{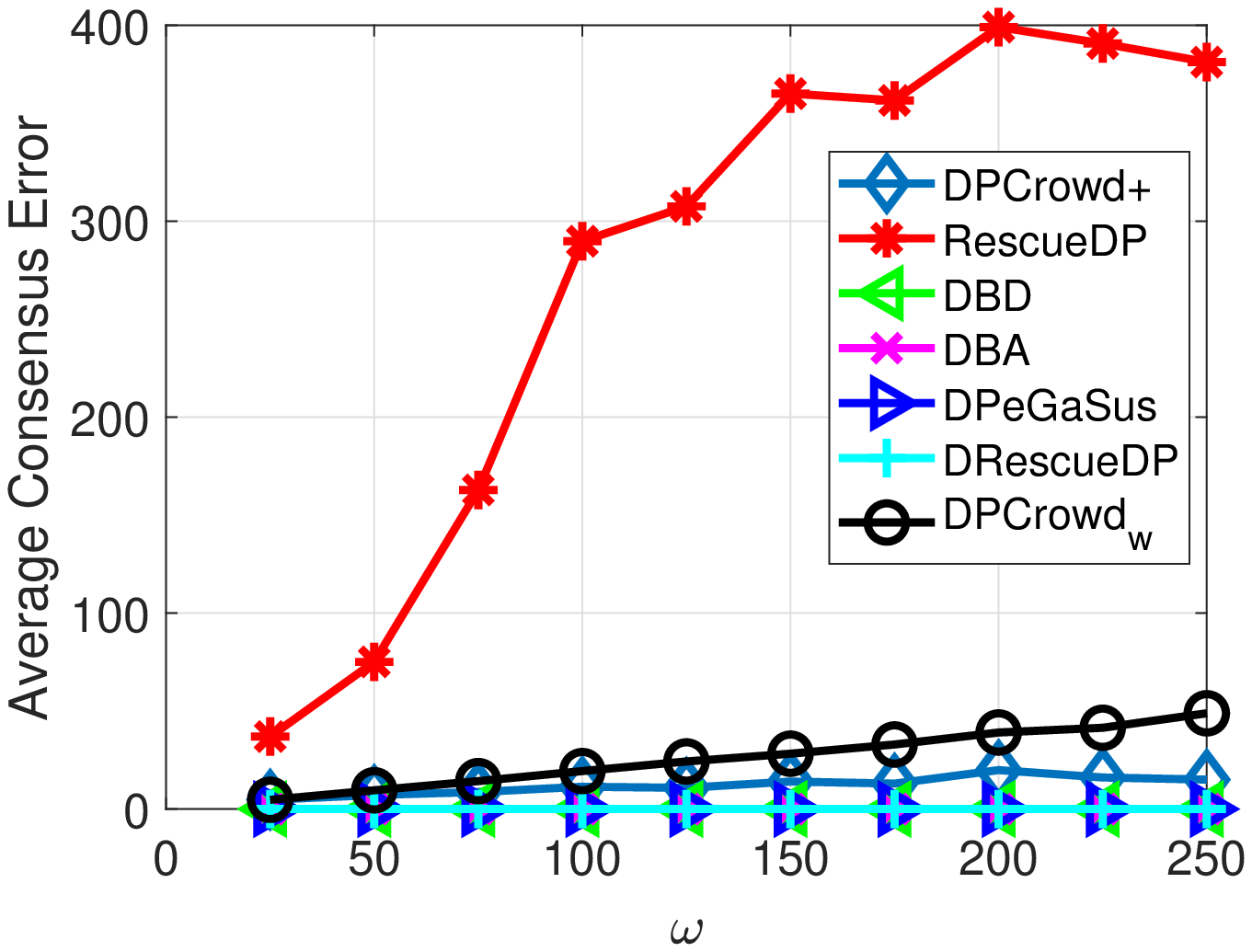}}
	\subfloat[\textsf{Multi-Flu}]{
		\label{AREMFlu-w} 
		\includegraphics[width=125pt]{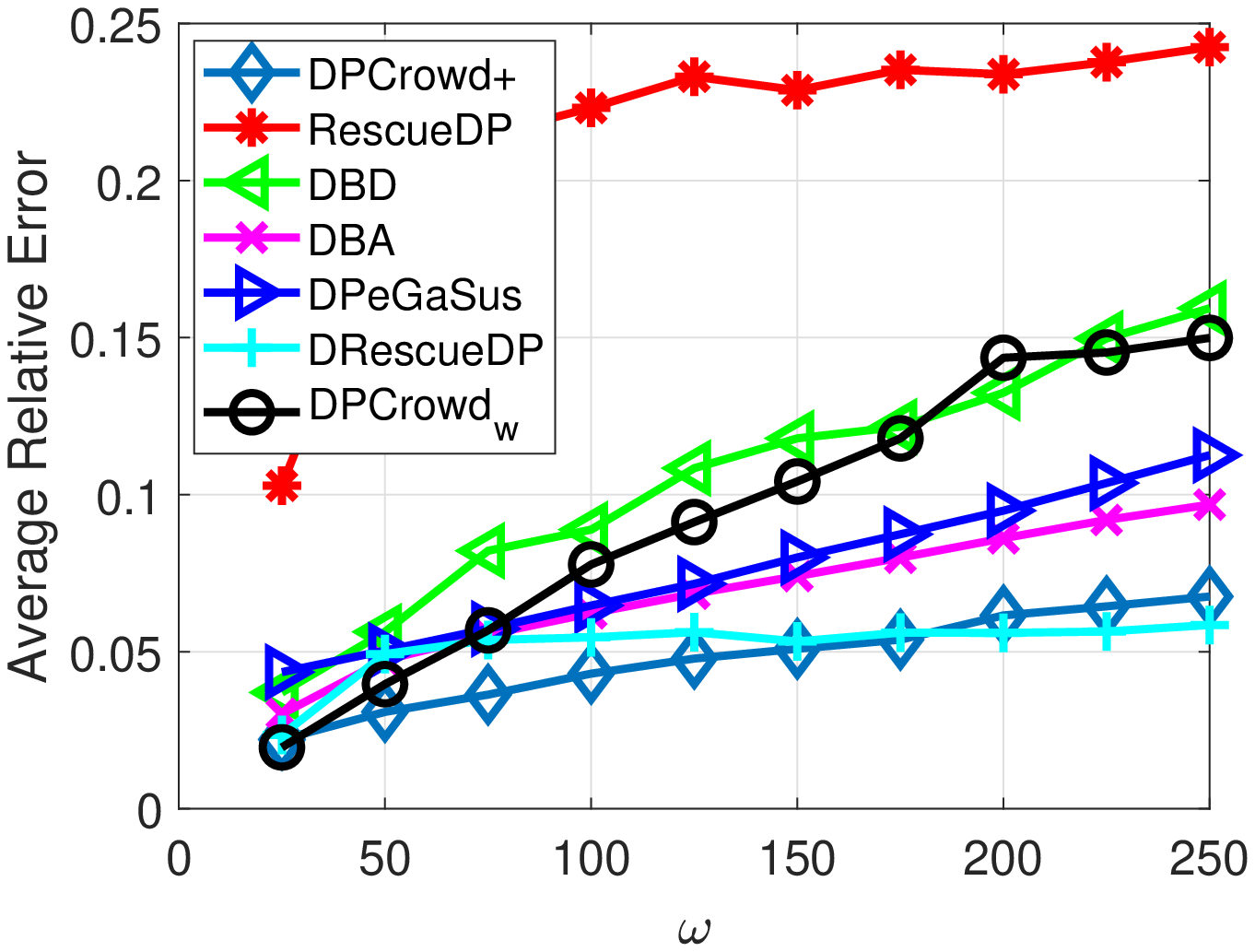}
		\label{ACEMFlu-w} 
		\includegraphics[width=125pt]{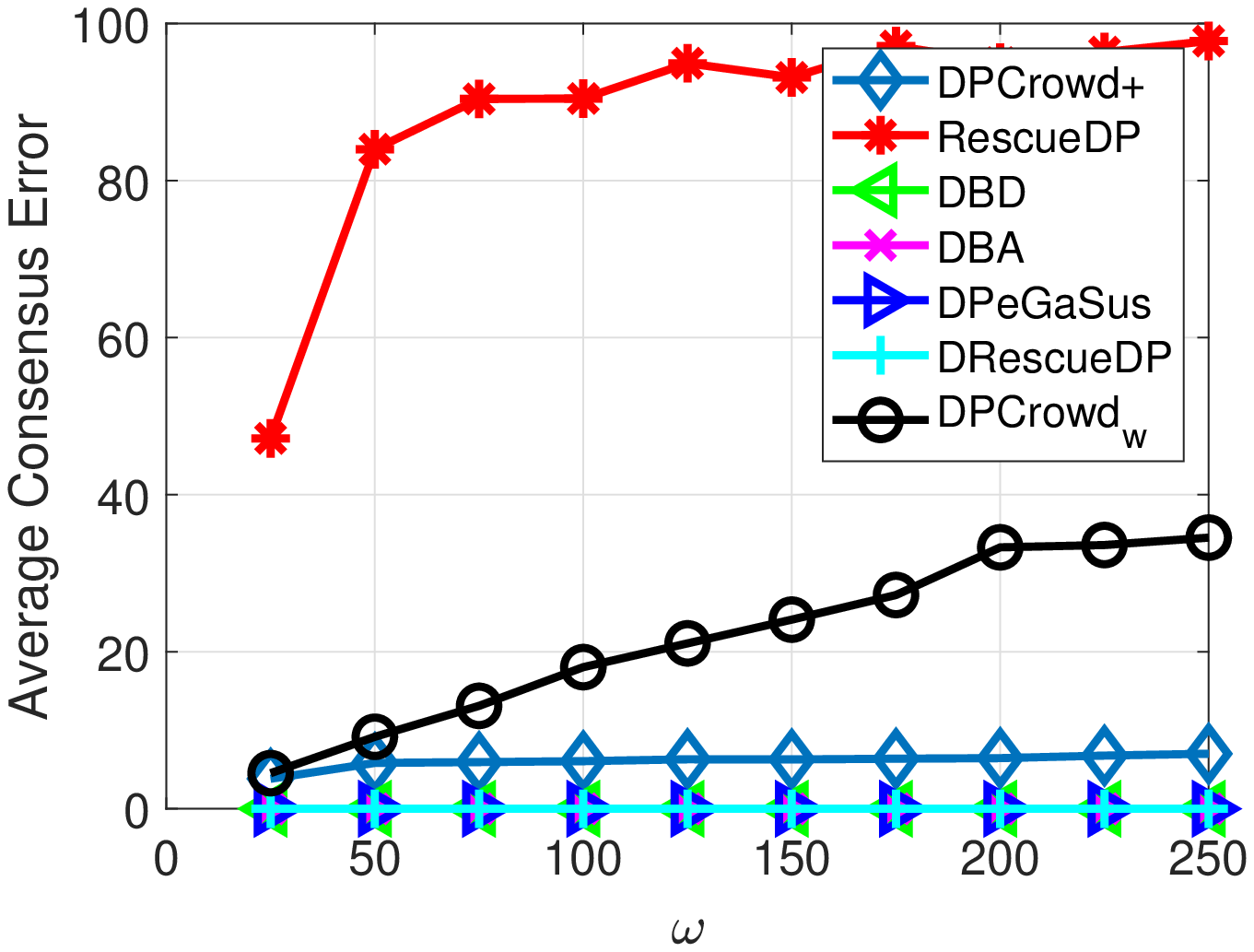}}
	\caption{Average Error vs. Window Size}\centering
	\label{fig:MUtilityVSw} 
	\vspace{-0.3cm}
\end{figure*}

\textbf{Tradeoff between Utility and Privacy:} Fig.~\ref{fig:MUtilityVsPrivacy} presents the estimation utility of all comparable schemes with respect to different privacy levels $\varepsilon$.
For various $\varepsilon$, the ARE of independent estimation scheme \textsf{RescueDP} is the largest due to no communications among distributed servers. 
Compared with the straightforward extension schemes that incurs great communication latency and overheads, \textsf{DPCrowd$_w$} enforces information exchanges of one-hop neighbors to collaboratively correct the estimation over the network with higher communication efficiency.
While \textsf{DPCrowd$_w$} can reduce the estimation error by implementing $w$-event privacy, which, however does not consider the sparsity in multi-dimensional streams. Instead, \textsf{DPCrowd}+ can further reduce the overdose noise on dimensions with small values by adopting the dynamic grouping strategy. Besides, compared with \textsf{DRescueDP} that directly average the whole-network estimations, \textsf{DPCrowd}+ can better fuse the neighboring estimations according to the estimation weights (Eq.~(\ref{eq:weight})). Thus, \textsf{DPCrowd}+ shows the smallest ARE among all aforementioned schemes, especially on \textsf{Multi-Flu}. This is because \textsf{Multi-Flu} has more dimensions and is much sparser than \textsf{Multi-Linear}.

In terms of consensus error, \textsf{DRescueDP}, \textsf{DBD/DBA}, and \textsf{DPeGaSus} have nearly no consensus error since expensive all-to-all communications are realized in the whole-network. The ACEs of \textsf{DPCrowd}+, \textsf{DPCrowd$_w$}, and the independent estimation scheme \textsf{RescueDP} drops with the increase of $\varepsilon$, which shows that it is easy to achieve consensus when less noise is added. \textsf{RescueDP} has the largest ACE because of no communication among servers. Instead, \textsf{DPCrowd$_w$} shows its superior since message exchange and collaborative correction is leveraged in the estimation. Furthermore, \textsf{DPCrowd}+ has much smaller consensus error as it combines the collaborative correction of in \textsf{DPCrowd} and dynamic grouping to enhance the utility for multi-dimensional data streams.

\begin{figure*}[htbp]\vspace{-0.3cm}
	\centering
	\subfloat[\textsf{Multi-Linear}]{
		\label{AREMLinear} 
		\includegraphics[width=125pt]{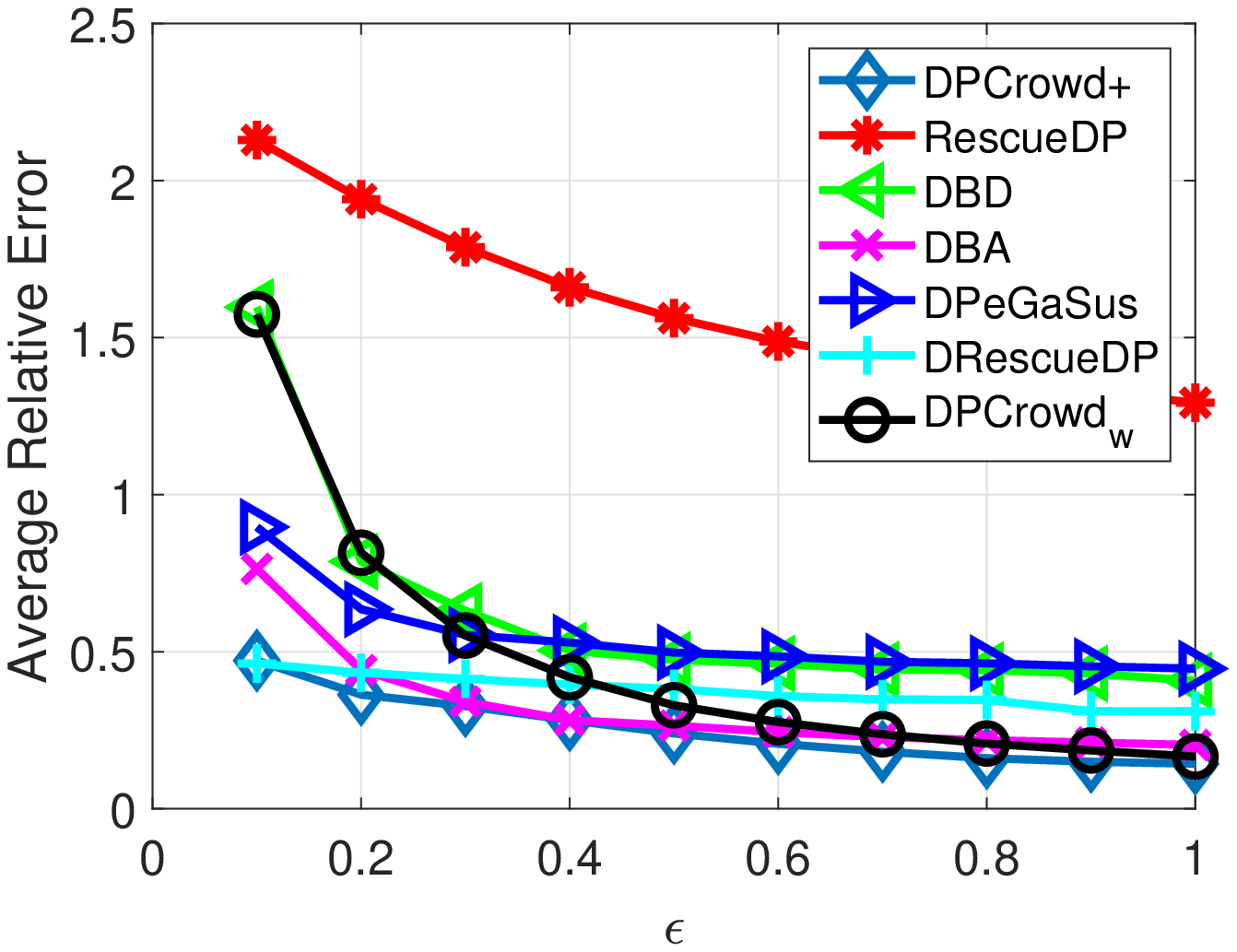}
		\label{ACEMLinear} 
		\includegraphics[width=125pt]{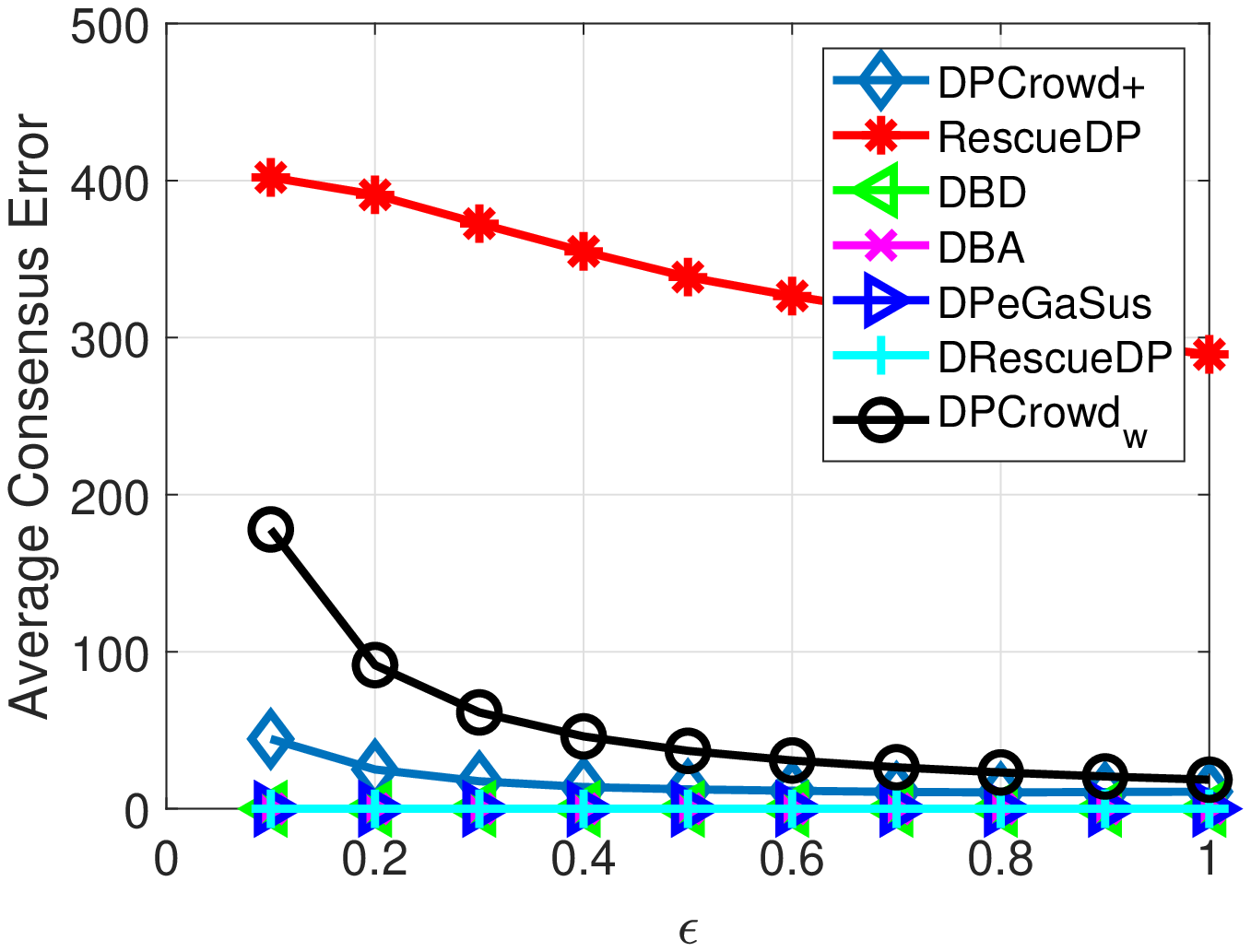}}
	\subfloat[\textsf{Multi-Flu}]{
		\label{AREMFlu} 
		\includegraphics[width=125pt]{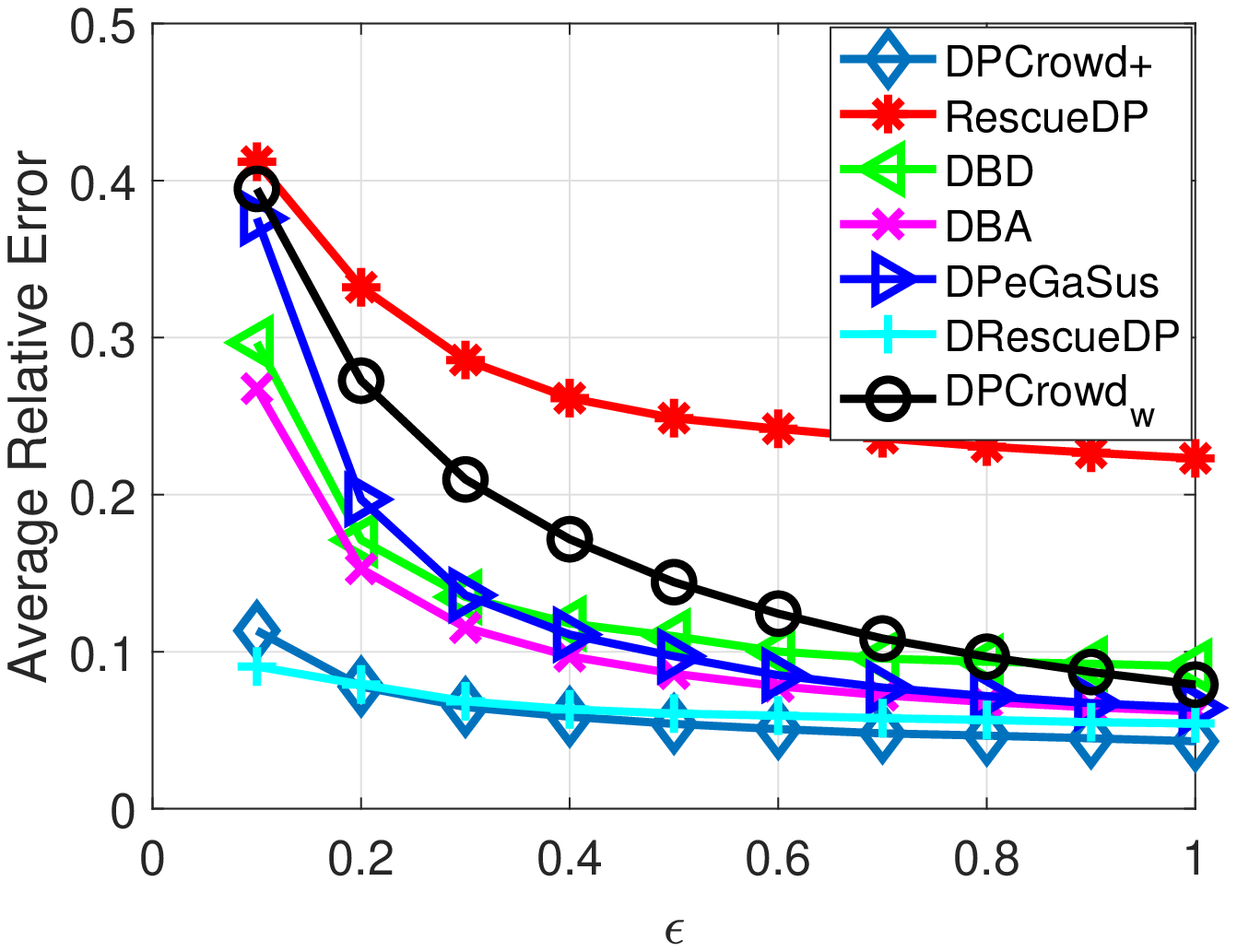}
		\label{ACEMFlu} 
		\includegraphics[width=125pt]{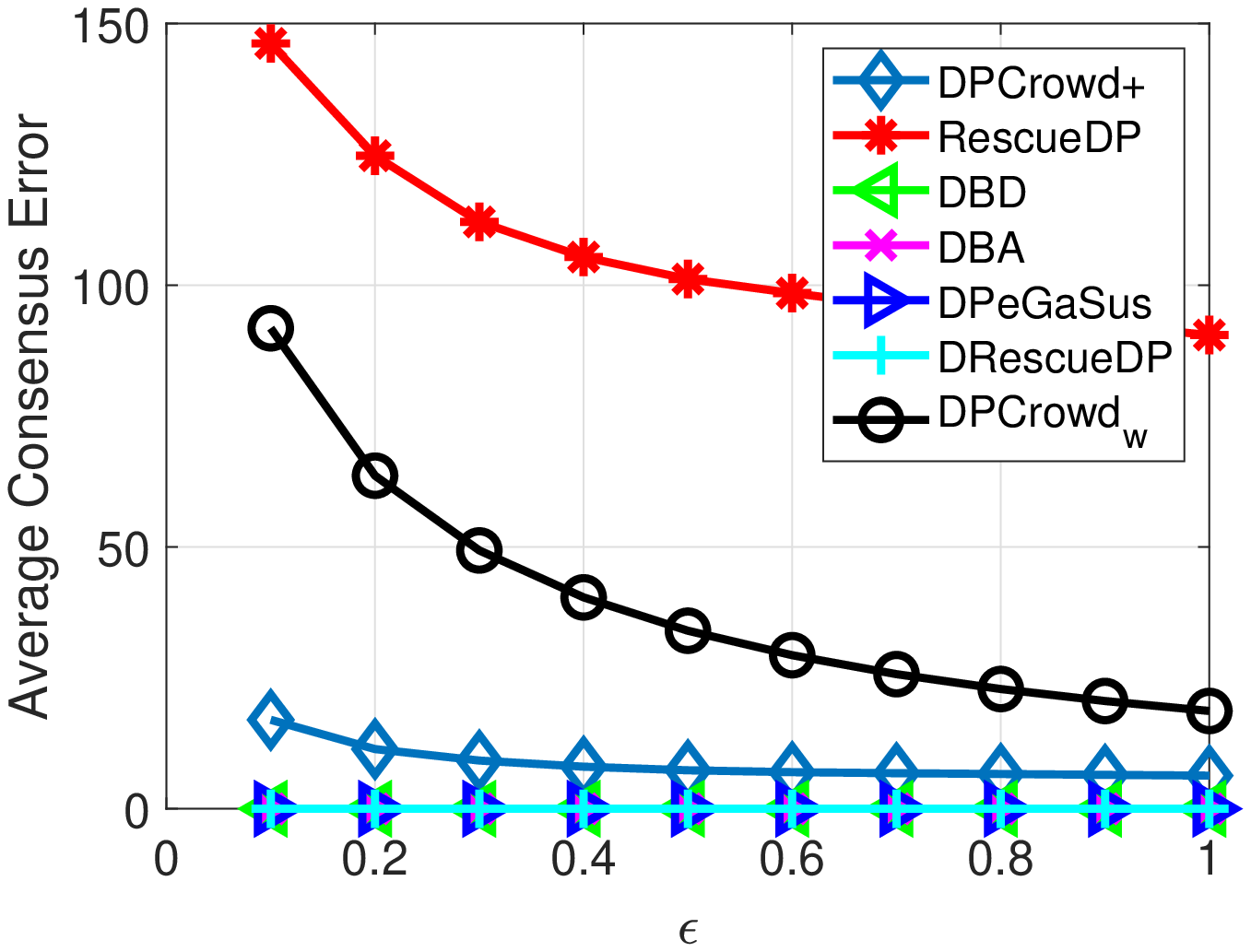}}
	\caption{Average Error vs. Privacy}\centering
	\label{fig:MUtilityVsPrivacy} 
	\vspace{-0.5cm}
\end{figure*}

\section{Final Remarks}\label{sec:conclusion}

In this paper, we have studied the framework of real-time statistical estimation for multiple distributed servers with crowd-sourced data in a decentralized setting, which enables data sharing and supports IoT-driven smart-world systems. Based on this framework, we first propose a novel scheme with both differential privacy preservation and communication efficiency, \textsf{DPCrowd}, for real-time decentralized statistical estimation on a finite crowd-sourced data stream. In specific, \textsf{DPCrowd} on distributed servers can achieve a consensus estimate of the true statistics by identifying the temporal correlations in data streams and exchanging the perturbed information intermittently with only one-hop neighbors. Additionally, as an extension for practical decentralized statistical estimation on infinite high-dimensional crowd-sourced data streams, we further propose \textsf{DPCrowd}+ to realize not only $w$-event DP, but also dimensional reduction by learning the sparse structure of multi-dimensional data. Extensive experimental results on real-world datasets show that our proposed schemes \textsf{DPCrowd} and \textsf{DPCrowd}+ are efficient and effective in obtaining accurate and consensus real-time statistical estimation for distributed servers on crowd-sourced data streams while guaranteeing sufficient DP for crowd-sourcing users.

\small
\bibliographystyle{IEEEtran}
\bibliography{refs}

\end{document}